\newcommand{\xhdr}[1]{\vspace{1mm} \noindent\textbf{#1.}}
\newtheorem{theorem}{Theorem}
\numberwithin{theorem}{section}
\newtheorem{proposition}{Proposition}
\numberwithin{proposition}{section}
\newtheorem{definition}{Definition}
\numberwithin{definition}{section}
\newtheorem{lemma}{Lemma}
\numberwithin{lemma}{section}
\newtheorem{corollary}{Corollary}
\numberwithin{corollary}{section}
\newtheorem{observation}{Observation}
\numberwithin{observation}{section}
\title{Fine-Tuning Games:\\
 Bargaining and Adaptation for General-Purpose Models
}
\author{
Benjamin Laufer\footnote{Cornell Tech, New York, NY, USA}
\ \ \ \ \ \ \ \ \ \ \ 
Jon Kleinberg\footnote{Cornell University, Ithaca, NY, USA}\ $^{, \text{\S}}$
\ \ \ \ \ \ \ \ \ \ \ 
Hoda Heidari\footnote{Carnegie Mellon University, Pittsburgh, PA, USA}\ $^,$\footnote{Joint senior author.}
}
\date{\vspace{-.1in}}
\begin{document}

\maketitle

\begin{abstract}
Recent advances in Machine Learning (ML) and Artificial Intelligence (AI) follow a familiar structure: A firm releases a large, pretrained model. It is designed to be adapted and tweaked by other entities to perform particular, domain-specific functions. The model is described as `general-purpose,’ meaning it can be transferred to a wide range of downstream tasks, in a process known as \textit{adaptation} or \textit{fine-tuning}. Understanding this process – the strategies, incentives, and interactions involved in the development of AI tools – is crucial for making conclusions about societal implications and regulatory responses, and may provide insights beyond AI about general-purpose technologies. We propose a model of this adaptation process. A Generalist brings the technology to a certain level of performance, and one or more Domain specialist(s) adapt it for use in particular domain(s). Players incur costs when they invest in the technology, so they need to reach a bargaining agreement on how to share the resulting revenue before making their investment decisions. We find that for a broad class of cost and revenue functions, there exists a set of Pareto-optimal profit-sharing arrangements where the players jointly contribute to the technology. Our analysis, which utilizes methods based on bargaining solutions and sub-game perfect equilibria, provides insights into the strategic behaviors of firms in these types of interactions. For example, profit-sharing can arise even when one firm faces significantly higher costs than another. After demonstrating findings in the case of one domain-specialist, we provide closed-form and numerical bargaining solutions in the generalized setting with $n$ domain specialists. We find that any potential domain specialization will either \textit{contribute}, \textit{free-ride}, or \textit{abstain} in their uptake of the technology, and provide conditions yielding these different responses.
\end{abstract}






\section{Introduction}

Large-scale AI models have garnered a great deal of excitement because they are considered to be \textit{general purpose} \citep{goldfarb2023could, eloundou2023gpts, crafts2021artificial, pan2020privacy, wei2022emergent, ouyang2022training}. 
Some have referred to these technologies as \textit{foundation models} \citep{bommasani2021opportunities,bommasani2023ecosystem,fei2022towards} because they are designed as massive, centralized models that support potentially many downstream uses. For example, \citet{bommasani2021opportunities} write, ``a foundation model is itself incomplete but serves as the common basis from which many task-specific models are built via adaptation.'' There is palpable excitement about these technologies. But to turn their potential into actual use and impact, one needs to specialize, tweak, and evaluate the technology for particular application domains. This process takes various names, including \textit{adaptation} \citep{peters2019tune} and, in some contexts, \textit{fine-tuning} \citep{tajbakhsh2016convolutional,zhang2020revisiting,kumar2022fine}.

Notably, the process of adapting a technology involves multiple parties. Technology teams developing ML and AI technologies rely on outside entities to adapt, tweak, transfer, and integrate the general-purpose model. This dynamic suggests a latent strategic interaction between producers of a foundational, general-purpose technology and specialists considering whether and how to adopt the technology in a particular context. Understanding this interaction is necessary to study the social, economic, and regulatory consequences of introducing the technology.

This paper employs methods from economic theory to model and analyze this interaction. We put forward a model of fine-tuning where the interaction between two agents, a generalist and a domain-specialist, determines how they'll bring a general-purpose technology to market (Figure \ref{fig:gameplay}). The result of this interaction is a domain-adapted product that offers a certain level of \textit{performance} to consumers, in exchange for a certain level of surplus revenue for the producers. Crucially, the producers must decide how to distribute the surplus, and engage in a bargaining process in advance of making their investment decisions. An immediate intuition might be to divide the surplus based on contribution to the technology --- however, this is one of many potential bargaining solutions, each with different normative assumptions and implications for the technology's performance and the distribution of utility. 

Through this analysis, we uncover several general principles that apply not just to today's AI technologies, but to a potentially wide swath of models that exhibit a similar structure --- i.e., developed for general use and adapted to one or more domains to produce revenue. Thus, even as these technologies improve and develop, our proposed model of fine-tuning may continue to describe how they may be adapted for real-world use(s). 
Further, some of our findings apply to other general-purpose technologies outside the AI context. For example, cloud computing infrastructure enables a number of consumer-facing services that use web hosting, database services, and other on-demand computing resources. Additive manufacturing (e.g., 3D printing) requires the production of a general-purpose technology that other entities use to create valuable products in particular domains. Digital marketplaces, too, are general market-making technologies that enable specialists (vendors) to sell goods, subject to an agreement over surplus.

Our main conceptual contribution is modeling the adaptation process as a \textbf{multi-stage game} consisting of (1) a \textbf{bargaining process} between a general-purpose technology producer ($G$) and one or more domain specialists ($D$), and (2) two additional stages for $G$ and $D$ to invest in performance, respectively (see Figure~\ref{fig:gameplay}). 
Both players bargain over how to share revenue, and each takes a turn contributing to the technology's performance before it reaches the market. Within the set of Pareto-optimal revenue-sharing agreements, we introduce a number of \textit{bargaining solutions} that represent potential arrangements for how entities involved in AI's development should distribute profit and effort. These bargaining solutions can be thought of as diverse normative proposals for how to appropriately distribute welfare.

Our analysis consists in deriving the sub-game perfect equilibrium strategies, identifying the set of Pareto-optimal bargaining agreements, and then solving for various bargaining solutions. Even in the presence of significant cost differentials, we find bargaining leads to profit-sharing agreements because specialists can leverage their power to exit the deal, reducing the reach of the technology --- or, in the case of one specialist, preventing the technology from being produced altogether. For fine-tuning games with a somewhat general set of cost and revenue functions, we develop a method for identifying Pareto-optimal bargains. 
A high-level take-away from our analysis is a characterization of the specialist fine-tuning strategy. We find that any potential adaptor of a technology falls into one of three groups: \textbf{contributors}, who invest effort before selling the technology; \textbf{free-riders}, who sell the technology without investing any additional effort; and \textbf{abstainers}, who do not enter any fine-tuning agreement and opt not to bring the technology to their particular domain. It turns out, using only marginal information about a domain (0th- and 1st-order approximations of cost and revenue), it is possible to reliably determine which strategy the adaptor will take for a notably broad set of scenarios and cost and revenue functions (Section \ref{subsec:specialist-regimes}).


Some have suggested that scholarship on AI and data-driven technologies focuses predominantly on the technical developments without situating these developments in political economy (though notable exceptions exist) \citep{abebe2020roles,trajtenberg2018ai,brevini2020revisiting,cobbe2023understanding, widder2023dislocated}. We propose a model that accounts for the different interests and interactions involved in the development of new, general-purpose AI technology. Our model enables analysis on how these interactions affect market outcomes like performance in practice. Understanding these interactions may also inform future regulation of harms that can arise from large-scale ML technologies.
\subsection{Related Work}

There exists a considerable body of literature on methods for fine-tuning and adapting general ML models. Our work leverages economic theory to understand the incentives and strategies that determine how these general-purpose technologies develop.

\textbf{Approaches to fine-tuning.} New applications of ML often involve leveraging an existing model to a specific task, in a process known as transfer learning \citep{zhuang2020comprehensive}. As a result, a variety of broad and flexible base models have been developed (`pretrained') for downstream adaptation to particular tasks. These include large language models~\citep{brown2020language,howard2018universal,dai2015semi} and visual models~\citep{radford2021learning,yuan2021florence}. Fine-tuning is an approach where new data and training methods are applied to a pretrained base model to improve performance on a domain-specific task~\citep{dodge2020fine}. Fine-tuning often consists of several steps: (1) gathering, processing and labeling domain-specific data, (2) choosing and adjusting the base model's architecture (including number of layers~\citep{wang2017growing} and parameters~\citep{sanh2020movement}) and the appropriate objective function~\citep{gunel2020supervised}, (3) Updating the model parameters using techniques like gradient descent, and (4) evaluating the resulting model and refining if necessary. 

\textbf{Economic models of general-purpose technology production.} Several lines of work in growth economics address the development and diffusion of general-purpose technologies (or GPTs). \citet{bresnahan2010general} provides a general survey of this concept. \citet{jovanovic2005general} offers a historic account of technologies such as electricity and information technology as GPTs with major impacts on the United States economy. Scholars have examined the effects of factors such as knowledge accumulation, entrepreneurial activity, network effects, and sectoral interactions on the creation of GPTs~\citep{helpman1998general}. The model presented here abstracts away the forces giving rise to the invention of general-purpose technologies, and instead focuses on the later-stage decision of when (or at what performance level) to release the GPT to market for domain-specialization. 

Some have suggested that general-purpose technologies create the need for new business models that describe their impact on individual sectors \citep{lipsey2005economic}. \citet{gambardella2010business} propose one such model of domain adaptation for a general-purpose technology that is based on revenue sharing --- however, they do not use bargaining or multi-stage strategy to describe how the technology is developed and brought to market. Our notion of \textit{performance} as it relates to model technologies is inspired by economic models of product innovation \citep{viscusi1993product,cooper1984strategy}. 

\textbf{Bargaining and joint production.} This work draws from a long line of research on welfare economics and cooperative game theory devoted to understanding how agents reach agreements when their interests are intertwined~\citep{driessen2013cooperative}. Methods have been developed for finding an optimal set of agreements (e.g., contract curves~\citep{edgeworth1881mathematical} and cores~\citep{shapley1971cores,aumann1961core}) in exchange economies, where agents can trade goods. When parties must reach an agreement to jointly produce a product, they often engage in a \textit{bargain} -- we discuss bargaining further in Section \ref{subsec:bargaining-primer}. Existing empirical work observes how real people or firms bargain, and measures how close these agreements are to those proposed by theorists~\citep{takeuchi2022bargaining, fischer2019collusion}. A setting with similar models is the development \textit{supply chains} where different firms or entities negotiate over how much effort they each invest and how much profit they each receive~\citep{wu2009bargaining,du2014newsvendor}. A related body of work is referred to as the \textit{hold-up problem} \citep{rogerson1992contractual}. This work analyzes settings where two (or more) agents negotiate over an \textit{incomplete} contract and distribute surplus \citep{hart1988incomplete}. In these models, after an initial agreement, players are able to re-negotiate and alter parts of the contract, yielding shifts in strategy. 

\textbf{Game Theory and ML.} Our paper contributes to a line of work using game-theoretic methods to describe the development of (and responses to) ML models~\citep{hardt2016strategic,liu2022strategic,perdomo2020performative,kleinberg2020classifiers} and their societal implications~\citep{milli2019social,hu2019disparate,laufer2023strategic}. \citet{donahue2021model} explore a game-theoretic setting where agents may voluntarily take part in a federated learning arrangement. Their setting is a coalitional game among parties that all move simultaneously, whereas ours is a sequential game that involves parties with different roles in the process. Focusing on digital platforms, \citet{hardt2023algorithmic} describe interactions between a firm implementing an ML algorithm and collectives of users 
who manipulate their data to influence the algorithm. 

\section{A Model of Fine-Tuning}

In this section, we put forward a model of fine-tuning a data-driven technology for use in a domain-specific context. The technology is developed in two steps: First, a general-purpose producer develops a technology up to a certain level of performance. Then, a domain-specific producer decides whether to adopt the technology, and how much to invest in the technology to further improve its performance beyond the general-purpose baseline. After these steps, the two entities share a payout.

\textbf{Generalist.} Player $G$ (for General-purpose producer) is the first to invest in the technology’s performance, and brings the performance level to $\alpha_0 \in \mathbb{R}^+$. $G$ is motivated to invest in the technology because, ultimately, the technology's performance level determines the revenue $G$ earns.

\textbf{Domain Specialist.} After investing in the technology, $G$ can offer the technology to a domain-specialist, denoted $D$, who fine-tunes the model to their specific use case. If $D$ and $G$ enter an agreement, $D$ will invest in improving the technology's performance from $\alpha_0$ to $\alpha_1 \in \mathbb{R^+}$ where $\alpha_1\geq \alpha_0$.

\textbf{Revenue and costs.} The technology's \textit{performance}, $\alpha_1$, determines the total revenue that can be gained from fine-tuning the technology in that domain. In particular, we assume there is a monotonic function $r: \mathbb{R}^+ \rightarrow \mathbb{R}^+$ such that $r(\alpha_1)$ is the total revenue generated by performance level $\alpha_1$. Unless otherwise specified, we assume $r(\cdot)$ is the identity function, that is, the total revenue brought by the technology is $\alpha_1$. The cost associated with producing $\alpha_1$ requires considering the two steps involved with developing the technology: general production and fine-tuning. 
We say that $G$ faces cost function $\phi_{0}(\alpha_0): \mathbb{R}^+ \rightarrow \mathbb{R}^+$ to produce a general technology at performance-level $\alpha_0$. $D$ faces cost function $\phi_1(\alpha_1; \alpha_0): \mathbb{R}^+ \rightarrow \mathbb{R}^+$ to bring the technology from performance $\alpha_0$ to performance $\alpha_1$. 
We assume these cost functions are publicly known. Unless otherwise specified, we also assume $r(0)=0$, $\phi_0(0)=0$, and $\phi_1(\alpha_1=\alpha_0; \alpha_0)=0$, meaning that not investing in the technology is free and brings in zero revenue.

\begin{figure*}
    \centering
    \includegraphics[width=.9\linewidth]{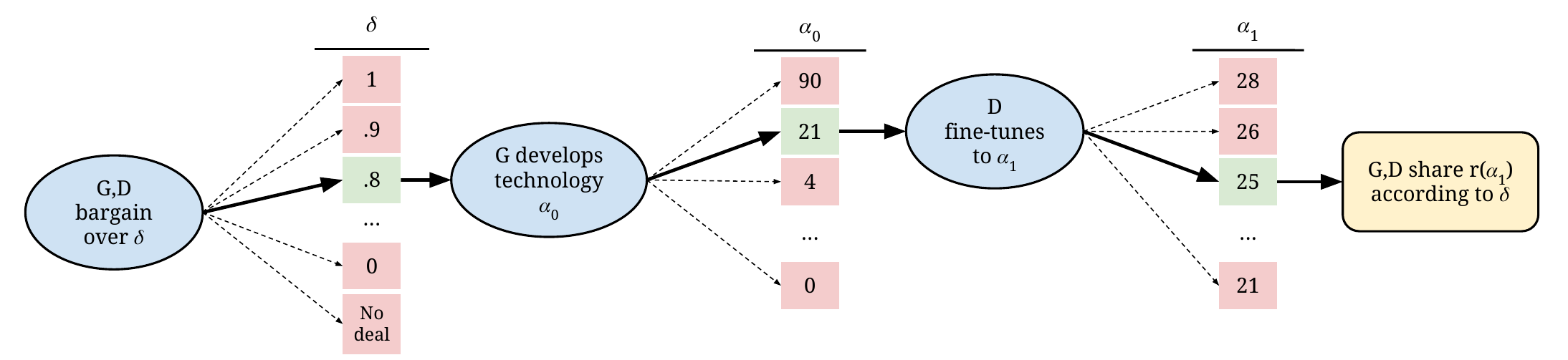}
    \caption{Illustration of the fine-tuning game. In the first step, players bargain over revenue-sharing agreement $\delta$. In this example, they agree that G will receive $80\%$ of the revenue and D will receive $20\%$. In the second step, $G$ develops the technology to performance level $\alpha_0 = 21$. In the third step, $D$ `fine-tunes' the technology to $\alpha_1=25$. If the players collectively receive revenue of $25$, they would share so that $G$ receives $20$ and $D$ receives $5$.}
    \label{fig:gameplay}
    \vspace{-3mm}
\end{figure*}

\textbf{The fine-tuning game.} The players are $G$ and $D$. In deciding whether to purchase the technology, $D$ negotiates revenue sharing with $G$. $G$ and $D$ share revenue $r(\alpha_1)$ according to a bargaining parameter $\delta \in [0,1]$. At the end of the game, $G$ receives $\delta r(\alpha_1)$ in revenue, and $D$ receives $(1-\delta) r(\alpha_1)$. The model fine-tuning game consists in each player deciding their level of investment and collectively bargaining to decide $\delta$. The game proceeds as follows:
\begin{enumerate}
    \item $G$ and $D$ negotiate bargaining coefficient $\delta \in [0,1]$.
    \item $G$ invests in a general-purpose technology, subject to cost $\phi_0(\alpha_0)$, yielding performance $\alpha_0$.
    \item $D$ fine-tunes the technology, subject to cost $\phi_1(\alpha_1;\alpha_0)$, yielding performance $\alpha_1$.    
\end{enumerate}
 The steps of the game are illustrated in Figure \ref{fig:gameplay}. Players earn the following utilities, defined as revenue share minus cost:
\begin{equation}
\label{u-g}
    U_G(\delta) := \delta r(\alpha_1) - \phi_0(\alpha_0), \ \ 
U_{D}(\delta) := (1-\delta) r(\alpha_1) - \phi_1(\alpha_1;\alpha_0).
\end{equation}
If the players do not agree to a feasible bargain $\delta \in [0,1]$, then the bargaining outcome is referred to as \textit{disagreement}. In this scenario, the generalist receives $d_0$ and the specialist receives $d_1$. We assume, unless otherwise specified, that the disagreement scenario is described by $d_0=d_1=0$.

\subsection{Primer on Bargaining Games}
\label{subsec:bargaining-primer}

Bargaining games are a potentially useful method for computer science research. In this section we include a primer on these methods before demonstrating their use in our model.


A bargain is a process for identifying joint agreements between two or more agents on how to share payoff. The \textbf{\textit{Bargaining Problem}}, formalized by \cite{nash1950bargaining}, consists of two players that must jointly decide how to share surplus profit. The problem consists of a set of feasible agreements and a `disagreement’ alternative, which specifies the utilities players receive if they do not come to an agreement.

Bargaining solutions are established ways to select among candidate agreements on how to share surplus. Different bargaining solutions, proposed over the years by mathematicians and economists, aim to satisfy certain desiderata like fairness, Pareto optimality, and utility-maximization. Typically, solving for bargaining solutions consists in defining some measure of \textit{joint utility} between players (e.g. take the sum, product, or minimum of the players’ utilities). The feasible, Pareto-optimal solution that maximizes this joint utility is known as a \textbf{\textit{bargaining solution}}.

Bargaining solutions are normative: they provide guidelines for how surplus payoffs should be distributed. Solutions are inspired by moral theories like utilitarianism (which aims to maximize the sum of utilities) and egalitarianism (which aims to maximize the worst-off agent). We demonstrate the use of bargaining solutions in the subsequent sections.
\subsection{Pareto-Optimal Bargains}

Our model of the fine-tuning process unfolds in two stages: the first stage is a bargain where the players must jointly agree on $\delta$, and the second stage is a sequential game where the players make decisions individually in order (i.e., $G$ moves first and $D$ moves second). Our analysis will identify the players' equilibrium strategies and a variety of bargaining solutions $\delta$ with different welfare implications. In order to derive solutions, it is important to define \textit{Pareto dominance} and \textit{Pareto efficiency}. Since our analysis relies on these concepts, in this section, we state our first result deriving the set of Pareto-optimal solutions for a general set of cost and revenue functions. For completeness, we begin by defining relevant concepts.

\begin{definition}[Pareto-dominant agreements]
    A bargaining agreement $\delta_a$ \textbf{Pareto-dominates} an alternative agreement $\delta_b \neq \delta_a$ iff at least one player gains utility by switching from $\delta_b$ to $\delta_a$, and no players lose utility.
\end{definition}

 \begin{definition}[Pareto-optimal agreements]
 \label{def:pareto-optimal}
     A \textbf{Pareto-optimal} agreement is one where no alternative agreement would improve the utility of one player without decreasing the utility of another player.
 \end{definition}
 
 \begin{definition}[Strictly Unimodal Function]
\label{def:strict-unimodal}
    A function $f:\mathbb{R}\rightarrow\mathbb{R}$ is called a \textbf{strictly unimodal function} over a real domain $x\in \mathbf{D}$ if there exists some value $m\in \mathbf{D}$ such that $f$ is strictly increasing $\forall x\leq m$ and $f$ is strictly decreasing $\forall x\geq m$.
\end{definition}

 When reasoning about how two agents can jointly reach an agreement, it is useful to start by considering the scenario where one player is \textit{all-powerful}, meaning the bargain is determined solely to maximize one player's utility. The formal definition of this sort of bargaining arrangement is provided below.

 \begin{definition}[Powerful-P solution]
 \label{def:powerfulP}
 For a given fine-tuning game player $P\in\{G,D\}$, the powerful-P solution is the revenue-sharing agreement $\delta^{\textit{Powerful }P}\in[0,1]$ that maximizes $P$'s utility:
 $$\delta^{\textit{Powerful }P} = \text{argmax}_{\delta\in[0,1]}U_P(\delta).$$
 \end{definition}

\subsection{Pareto-Optimal Set for Unimodal Utilities}
 We are now in a position to state our first theorem, which characterizes the Pareto-optimal solutions to any fine-tuning game with strictly unimodal utility functions.

\begin{theorem}
\label{thm:unimodal-pareto}
    Consider a fine-tuning game where players bargain over a parameter $\delta$. If the players' utilities are strictly unimodal functions of $\delta$, the set of Pareto-optimal agreements is the interval between their optima $\{\delta^{\text{Powerful }D}, \delta^{\text{Powerful }G}\}$, where both players' utilities are greater than the disagreement scenario. If no such interval exists, then disagreement is Pareto-optimal.
\end{theorem}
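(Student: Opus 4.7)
The plan is to exploit strict unimodality to pin down the Pareto frontier geometrically, then handle the disagreement case as a corollary. Let me abbreviate $\delta^G := \delta^{\text{Powerful }G}$ and $\delta^D := \delta^{\text{Powerful }D}$, and assume without loss of generality that $\delta^D \leq \delta^G$; the reverse labeling is symmetric. I will repeatedly invoke the two consequences of Definition~\ref{def:strict-unimodal}: $U_G$ is strictly increasing on $(-\infty,\delta^G]$ and strictly decreasing on $[\delta^G,\infty)$, with the analogous statement for $U_D$ around $\delta^D$. Write $I := [\delta^D, \delta^G] \cap \{\delta : U_G(\delta) \geq d_0,\ U_D(\delta) \geq d_1\}$ for the candidate set claimed by the theorem.

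My first step is to show every $\delta \in I$ is Pareto-optimal. Fix such a $\delta$ and consider any alternative $\delta' \in [0,1]$ with $\delta' \neq \delta$. If $\delta' < \delta$, then $\delta' < \delta \leq \delta^G$, so strict monotonicity of $U_G$ on its rising branch gives $U_G(\delta') < U_G(\delta)$. If $\delta' > \delta$, then $\delta' > \delta \geq \delta^D$, so strict monotonicity of $U_D$ on its falling branch gives $U_D(\delta') < U_D(\delta)$. In either case some player is strictly hurt, so no $\delta'$ Pareto-dominates $\delta$. The individual-rationality constraints built into $I$ also prevent the disagreement outcome from dominating $\delta$.

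Second, I would show every $\delta \in [0,1] \setminus [\delta^D, \delta^G]$ is Pareto-dominated by the nearer endpoint. If $\delta < \delta^D$, then both $\delta$ and $\delta^D$ lie in $(-\infty, \delta^G]$, so $U_G(\delta^D) > U_G(\delta)$ by strict monotonicity of $U_G$; and $U_D(\delta^D) > U_D(\delta)$ because $\delta^D$ uniquely maximizes $U_D$. The case $\delta > \delta^G$ is symmetric, with $\delta^G$ doing the dominating. Together with the first step, this identifies $I$ as exactly the Pareto-optimal set of bargains.

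Finally, to handle disagreement, I would argue that when $I$ is empty the disagreement outcome itself is Pareto-optimal. Suppose some $\delta^* \in [0,1]$ Pareto-dominates disagreement, meaning $U_G(\delta^*) \geq d_0$ and $U_D(\delta^*) \geq d_1$ with at least one strict inequality. If $\delta^* \in [\delta^D, \delta^G]$ then $\delta^* \in I$, a contradiction. Otherwise, the second step produces an endpoint $\delta^\dagger \in \{\delta^D, \delta^G\}$ with $U_G(\delta^\dagger) > U_G(\delta^*) \geq d_0$ and $U_D(\delta^\dagger) > U_D(\delta^*) \geq d_1$, placing $\delta^\dagger$ in $I$, again a contradiction. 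The main subtlety I expect is precisely this bookkeeping --- verifying that strict domination by an endpoint preserves the individual-rationality inequalities --- which follows automatically because both utilities strictly increase under the move to the endpoint.
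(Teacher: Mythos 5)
Your proof is correct and follows essentially the same route as the paper's: partition $[0,1]$ by the two powerful-player optima, use strict monotonicity of both utilities outside that interval to establish Pareto-domination (you dominate by the nearer endpoint where the paper uses a small perturbation), and use the opposing monotonicities inside it to establish the trade-off. Your explicit contradiction argument for the disagreement case is a welcome addition, since the paper's written proof leaves that final claim of the theorem implicit.
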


The proof is provided in Appendix \ref{app:section-2}. To provide some intuition for the proof, consider the range of agreements $\delta$ between the point which maximizes one player's utility (say, $\delta^{\textit{Powerful }D}$) and the point which maximizes the other ($\delta^{\textit{Powerful }G}$). Agreements within this range exhibit a trade-off between the two utilities. Agreements outside this range, however, leave both players worse-off than, e.g., the nearest powerful-P solution, so they are Pareto-dominated. This intuition is illustrated in Figure \ref{fig:pareto-intuition}.


\begin{figure}
    \centering
    \includegraphics[width=.78\linewidth]{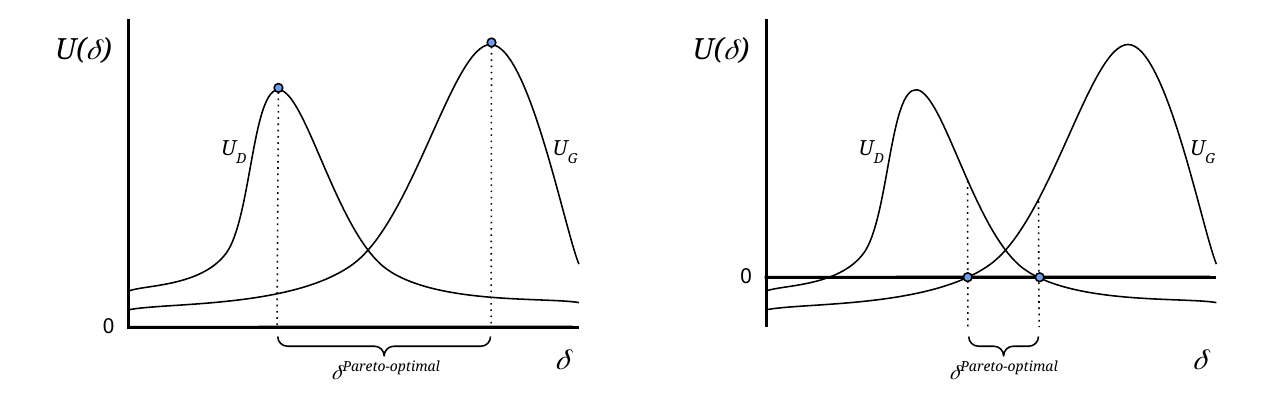}
    \caption{Example to illustrate Theorem \ref{thm:unimodal-pareto}. Left: For two strictly unimodal, positive utility functions over a bargaining parameter $\delta$, the set of Pareto-optimal bargaining agreements is the interval between their optima. Within this interval, players' utilities are characterized by a trade-off, whereas outside this interval, any agreement is Pareto-dominated. Right: The non-numerical bargaining outcome (`disagreement') consists in players receiving 0 utility. Thus, if a potential bargaining agreement yields negative utility for one or both players, they prefer to exit the agreement. The relevant set of Pareto-optimal agreements is constrained to only the interval where both players receive payout that exceeds the disagreement scenario. If no such interval exists, disagreement is Pareto-optimal.}
    \label{fig:pareto-intuition}
\end{figure}

Theorem \ref{thm:unimodal-pareto} applies to a notably broad set of utility functions. For example, any strictly increasing, strictly decreasing, or strictly concave function on the interval $\delta\in [0,1]$ is also strictly unimodal. 

Equipped with the theorem above, solving the fine-tuning game consists of the following steps: 
 (\textbf{1}) Use backward induction to solve for $D$ and $G$'s strategies, represented by $\alpha_1^*$ and $\alpha_0^*$, in terms of $\delta$.
 (\textbf{2}) Find the set of Pareto-optimal bargaining agreements $\delta$ between the powerful-D and powerful-G solutions.
 (\textbf{3}) Within the Pareto set, solve for bargaining agreements that maximize some joint function of the players' utilities.

\section{Analysis for Polynomial Costs} 
\label{sec:quadratic}

Our model applies to general cost and revenue functions, and in Section \ref{sec:multi} we provide results at this general level.
But to understand how the central parameters of the model interact in closed form, it is also useful to study instantiations of the model with specific functional forms.  Accordingly, we show in this section how to solve the model with a set of polynomial cost functions as a paradigmatic instance of convex cost functions, where the marginal costs increase as the technology is improved.
Following this, we show how to draw conclusions about the model with general costs. 
In this section, cost functions take the following polynomial function forms:
\begin{equation}
\label{phi-0-polynomial-single}
     \phi_0(\alpha_0) := c_0\alpha_0^{k_0},\ \  
    \phi_1(\alpha_1;\alpha_0) := c_1(\alpha_1-\alpha_0)^{k_1}.
\end{equation}
Here, $c_0,c_1>0$ since costs should increase with investment, and $k_0,k_1>1$, 
meaning that an incremental improvement grows costlier at higher levels of performance. We will continue to assume that $r(\alpha_1)=\alpha_1$ throughout this section's analysis.


First (\ref{subsec:subgame-quadratic}), we derive the subgame perfect equilibrium strategies $\alpha_0^*,\alpha_1^*$ for fixed $\delta$. Second (\ref{subsec:pareto-quadratic}), we find the set of Pareto-optimal revenue-sharing schemes $\delta^{Pareto}$. Reaching a revenue-sharing agreement $\delta^* \in \delta^{Pareto}$ is modeled as a bargaining problem because the players must decide how to share surplus utility. So, third (\ref{subsec:bargain-quadratic}), we define six potential bargaining solutions: Powerful-$G$, Powerful-$D$, Vertical Monopoly, Egalitarian, Nash Bargaining Solution, and Kalai-Smorodinsky. Where possible, we derive closed-form expressions for these solutions. We end by discussing the implications of these different revenue-sharing schemes.

\subsection{Subgame Perfect Equilibrium for a Given $\delta$}
\label{subsec:subgame-quadratic}

We use backward induction to determine the fine-tuning game's subgame perfect equilibrium (which we will refer to as a `solution' or `equilibrium').  Fixing the outcome of the initial negotiation, $\delta$, it is possible establish the following closed-form solution:
\begin{theorem}
\label{subgame-perfect-eq}
For a fixed $\delta$, the sub-game perfect equilibrium of the fine-tuning game with polynomial costs yields the following best-response strategies: 
$$\alpha_0^*=\left(\frac{\delta}{k_0c_0}\right)^{\frac{1}{k_0-1}}, \ \ \alpha_1^*=\left(\frac{\delta}{k_0c_0}\right)^{\frac{1}{k_0-1}}+\left(\frac{1-\delta}{k_1c_1}\right)^{\frac{1}{k_1-1}}.$$
\end{theorem}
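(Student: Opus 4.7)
The plan is to apply backward induction, exploiting the concavity of each player's utility in their own decision variable (which follows from $k_0, k_1 > 1$ and the convexity of the polynomial cost functions).

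First, I would solve stage 3, where $D$ chooses $\alpha_1$ given the prior choices of $\delta$ and $\alpha_0$. Substituting the polynomial forms into the utility expression in Equation (\ref{u-g}), $D$'s objective becomes
\[
U_D(\alpha_1) = (1-\delta)\alpha_1 - c_1(\alpha_1 - \alpha_0)^{k_1},
\]
which is strictly concave in $\alpha_1$ on $[\alpha_0, \infty)$ because $k_1 > 1$ and $c_1 > 0$. Setting $\partial U_D / \partial \alpha_1 = 0$ gives $(1-\delta) = c_1 k_1 (\alpha_1 - \alpha_0)^{k_1 - 1}$, yielding the interior best response $\alpha_1 - \alpha_0 = \bigl((1-\delta)/(k_1 c_1)\bigr)^{1/(k_1-1)}$. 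I would briefly note that the constraint $\alpha_1 \geq \alpha_0$ is automatically satisfied since $\delta \in [0,1]$ makes the right-hand side nonnegative.

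Next, for stage 2, $G$ anticipates $D$'s response and chooses $\alpha_0$ to maximize
\[
U_G(\alpha_0) = \delta\left[\alpha_0 + \left(\tfrac{1-\delta}{k_1 c_1}\right)^{1/(k_1-1)}\right] - c_0 \alpha_0^{k_0}.
\]
The second term is a constant with respect to $\alpha_0$, and the remaining expression is strictly concave in $\alpha_0$ on $[0, \infty)$ because $k_0 > 1$. The first-order condition $\delta = c_0 k_0 \alpha_0^{k_0-1}$ gives $\alpha_0^* = \bigl(\delta/(k_0 c_0)\bigr)^{1/(k_0-1)}$; substituting back into $D$'s best response delivers the stated formula for $\alpha_1^*$.

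The proof is essentially routine, so the only real obstacle is bookkeeping around corner cases: I would briefly check that (i) the first-order conditions indeed characterize global maxima (by strict concavity), (ii) the nonnegativity constraints on $\alpha_0$ and the monotonicity constraint $\alpha_1 \geq \alpha_0$ bind only in the degenerate endpoints $\delta = 0$ and $\delta = 1$ respectively, where the formulas continue to hold by continuity, and (iii) the strategies are well-defined best responses on the full interval $\delta \in [0,1]$. With these observations in place, the closed-form expressions follow directly.
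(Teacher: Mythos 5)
Your proposal is correct and follows essentially the same route as the paper's proof: backward induction with first-order conditions for each stage and second-derivative checks to confirm global maxima. The only difference is your slightly more careful attention to the boundary cases $\delta\in\{0,1\}$, which the paper glosses over.
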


A proof of the above result is provided in Appendix \ref{app:subgame-perfect-polynomial-proof}. Notice that the domain-specific performance, $\alpha_1^*$, is equal to the general-purpose performance, $\alpha_0^*$, plus a term, $(\frac{1-\delta}{k_1c_1})^{\frac{1}{k_1-1}}$, independent of the $G$'s choice over $\alpha_0^*$. This is because the cost of marginal improvements for $D$ only depends on the \textit{difference} $(\alpha_1-\alpha_0)$, and is not affected by a large or small initial investment by $G$. Though we assume, in this section, that $D$'s cost is defined solely in terms of marginal improvement, the generalizations in subsequent sections contain findings that relax this assumption.

As an immediate corollary of Theorem~\ref{subgame-perfect-eq}, we derive players' utilities as a function of $\delta$ alone.
\begin{corollary}
\label{cor:utilities-solved}
For a fixed bargaining parameter $\delta$, the players' utilities are as follows:
\begin{eqnarray}
 U_G(\delta) = \left(\frac{1}{k_0c_0}\right)^{\frac{1}{k_0-1}}\left(1-\frac{1}{k_0}\right)\delta^{\frac{k_0}{k_0-1}} + \left(\frac{1}{k_1c_1}\right)^{\frac{1}{k_1-1}}\delta(1-\delta)^{\frac{1}{k_1-1}}, \label{u-g-polynomial},\\
 U_D(\delta) = \left(\frac{1}{k_1c_1}\right)^{\frac{1}{k_1-1}}\left(1-\frac{1}{k_1}\right)(1-\delta)^{\frac{k_1}{k_1-1}} + \left(\frac{1}{k_0c_0}\right)^{\frac{1}{k_0-1}}(1-\delta)\delta^{\frac{1}{k_0-1}}. \label{u-d-polynomial}
\end{eqnarray}
\end{corollary}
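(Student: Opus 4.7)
The plan is to prove this corollary by direct substitution: plug the closed-form equilibrium strategies from Theorem~\ref{subgame-perfect-eq} into the utility expressions in Equation~(\ref{u-g}) with $r(\alpha_1) = \alpha_1$. The natural preliminary move is to rewrite $\alpha_1^* = \alpha_0^* + (\alpha_1^* - \alpha_0^*)$, mirroring how $\alpha_1^*$ is presented in Theorem~\ref{subgame-perfect-eq}; this splits each utility into a piece driven by $G$'s investment $\alpha_0^*$ and a piece driven by the marginal improvement $\alpha_1^* - \alpha_0^*$.

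The algebra becomes cleanest if I reuse the first-order conditions that underlie Theorem~\ref{subgame-perfect-eq}. Specifically, $\alpha_0^*$ satisfies $k_0 c_0 (\alpha_0^*)^{k_0-1} = \delta$ and $\alpha_1^* - \alpha_0^*$ satisfies $k_1 c_1 (\alpha_1^* - \alpha_0^*)^{k_1-1} = 1 - \delta$. Multiplying these by $\alpha_0^*$ and $\alpha_1^* - \alpha_0^*$ respectively gives the compact identities $c_0 (\alpha_0^*)^{k_0} = \delta \alpha_0^* / k_0$ and $c_1 (\alpha_1^* - \alpha_0^*)^{k_1} = (1-\delta)(\alpha_1^* - \alpha_0^*)/k_1$. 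Substituting into the utility definitions then yields
\[
U_G(\delta) = \delta \alpha_0^* \left(1 - \tfrac{1}{k_0}\right) + \delta (\alpha_1^* - \alpha_0^*), \qquad U_D(\delta) = (1-\delta)\alpha_0^* + (1-\delta)(\alpha_1^* - \alpha_0^*)\left(1 - \tfrac{1}{k_1}\right).
\]

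The final step is to replace $\alpha_0^*$ and $\alpha_1^* - \alpha_0^*$ with their explicit forms $(\delta/(k_0 c_0))^{1/(k_0-1)}$ and $((1-\delta)/(k_1 c_1))^{1/(k_1-1)}$. Grouping the powers of $\delta$ and $1-\delta$ gives $\delta \cdot \delta^{1/(k_0-1)} = \delta^{k_0/(k_0-1)}$ in the first term of $U_G$ and $(1-\delta) \cdot (1-\delta)^{1/(k_1-1)} = (1-\delta)^{k_1/(k_1-1)}$ in the first term of $U_D$; the cross-terms keep their $\delta(1-\delta)^{1/(k_1-1)}$ and $(1-\delta)\delta^{1/(k_0-1)}$ shapes. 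Rewriting each prefactor $(k_i c_i)^{-1/(k_i-1)}$ as $(1/(k_i c_i))^{1/(k_i-1)}$ then lines the expressions up with the stated formulas term-for-term.

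The whole derivation is elementary, so the main obstacle is purely bookkeeping: keeping the fractional exponents $1/(k_i-1)$ and $k_i/(k_i-1)$ consistent when combining factors of $\delta$ or $1-\delta$ with the constants $c_i$ and $k_i$. No further conceptual ingredient beyond Theorem~\ref{subgame-perfect-eq} is required.
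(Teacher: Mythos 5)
Your proposal is correct and follows essentially the same route as the paper: substitute the equilibrium strategies from Theorem~\ref{subgame-perfect-eq} into the utility definitions in Equation~(\ref{u-g}) and simplify. Your use of the first-order conditions $k_0 c_0 (\alpha_0^*)^{k_0-1} = \delta$ and $k_1 c_1 (\alpha_1^* - \alpha_0^*)^{k_1-1} = 1-\delta$ to collapse the cost terms is just a tidier organization of the same term-by-term expansion the paper carries out explicitly, and all the resulting expressions match.
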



\begin{figure*}[bt]
 	\centering
\hspace{0.06\textwidth}
  \hfill
\includegraphics[width=.271\textwidth]{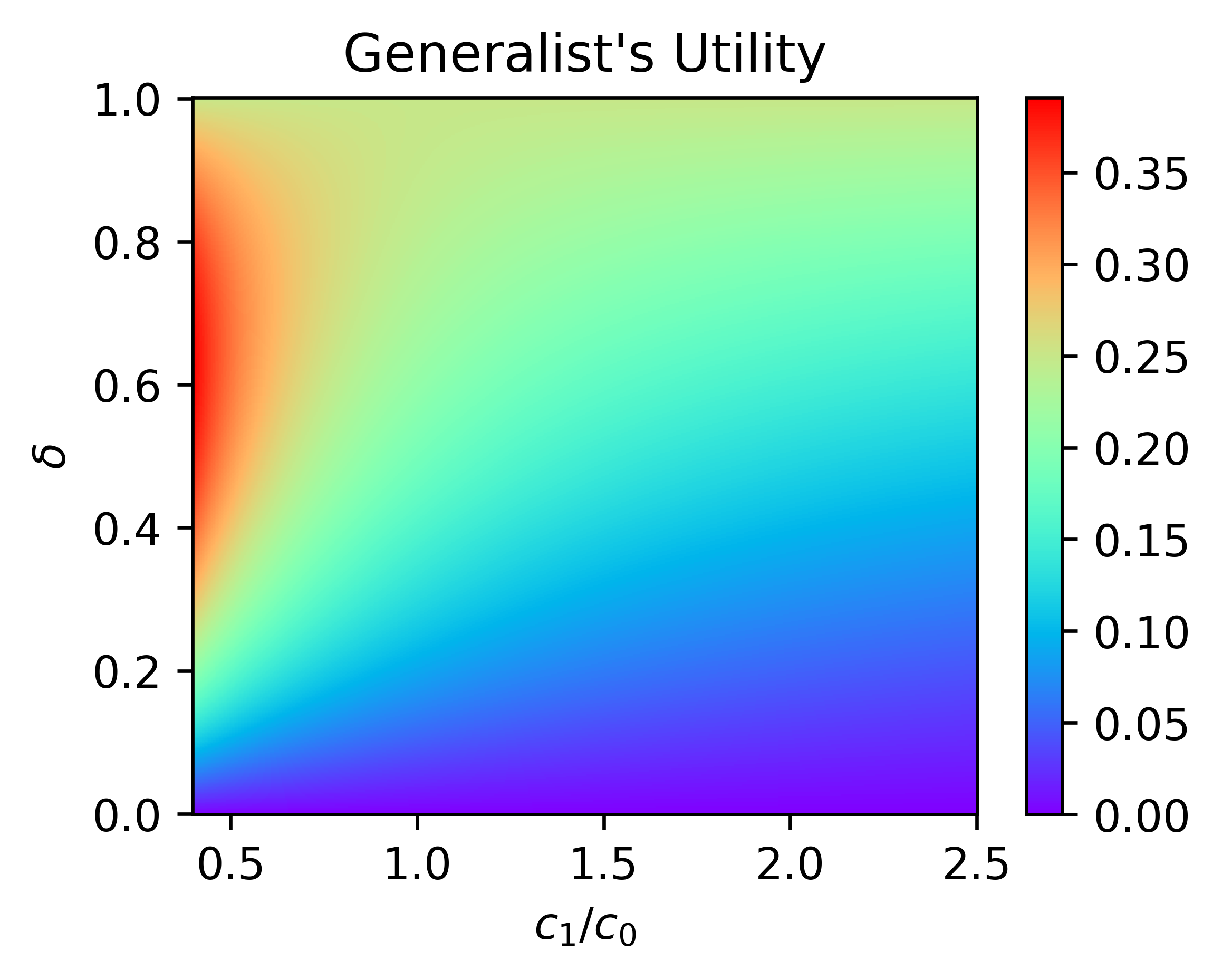}
\hfill
\includegraphics[width=.265\textwidth]{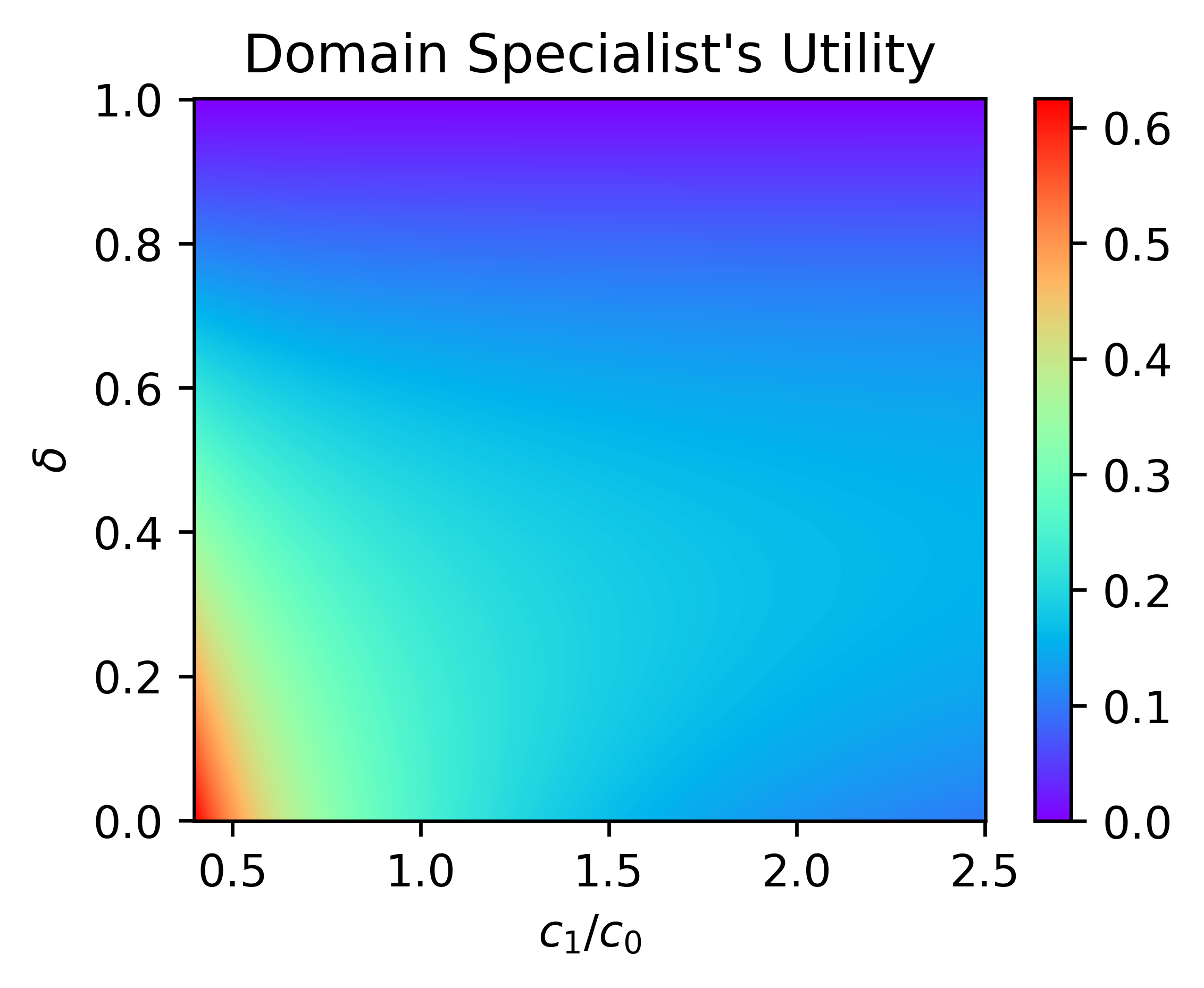}
\hfill
\includegraphics[width=.265\textwidth]{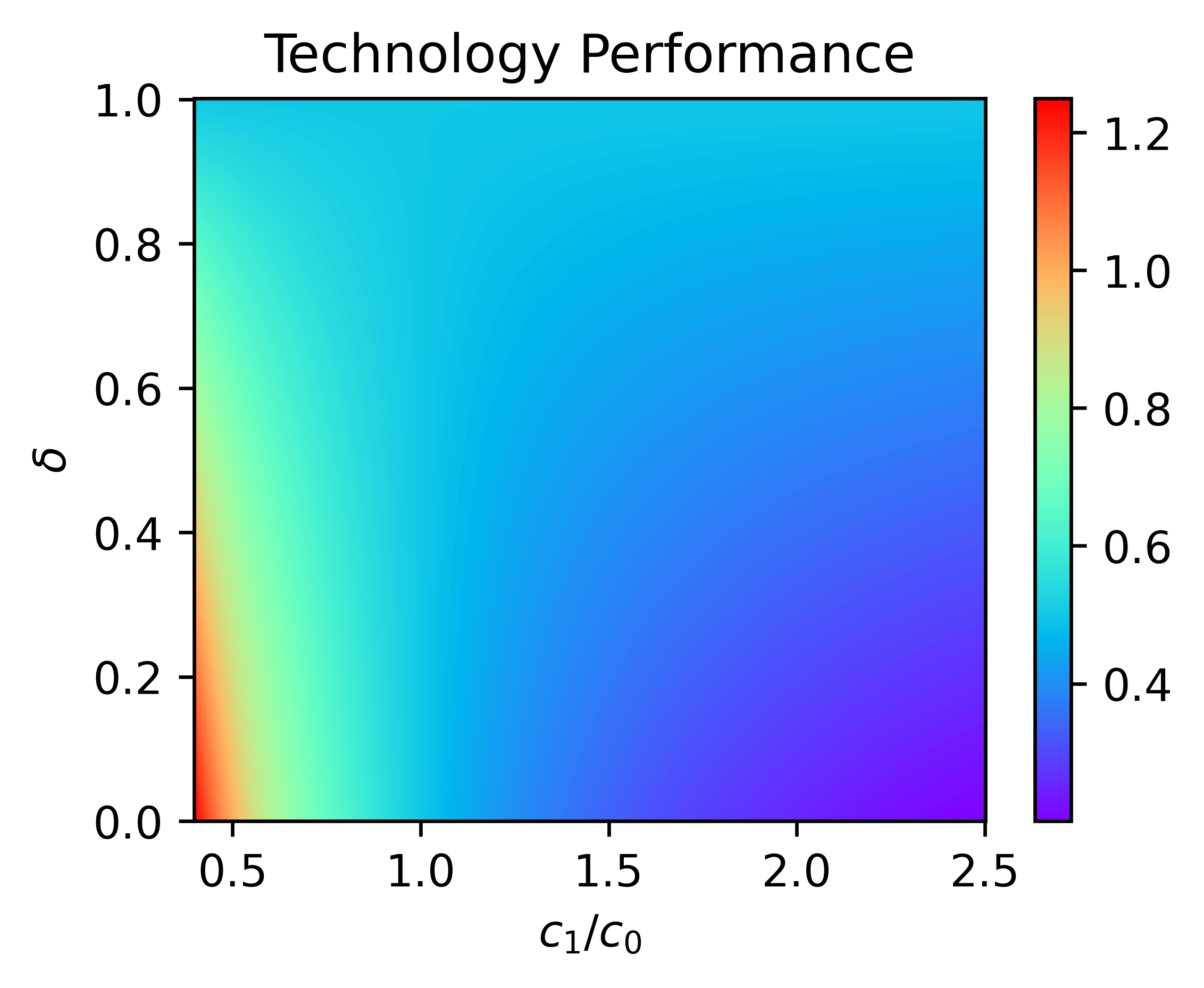}
\hfill
\hspace{0.06\textwidth}
\vspace{-.1in}
 	\caption{Utilities of the Generalist (left) and Domain Specialist (center) for different bargaining parameters and costs. The general-purpose producer prefers to share revenue when $\frac{c_1}{c_0}<1$ and the domain-specific producer prefers to share revenue when $\frac{c_1}{c_0}>1$. The resulting technology performance $\alpha_1^*$ is depicted on the right. Color bar scales are defined assuming $c_0=1$.}
 	\label{fig:capacityempiricaldpr}
 \end{figure*}

In order to determine the set of Pareto-optimal agreements, we first find that the utility functions are strictly unimodal functions of $\delta$ for all $c_0,c_1$ and $k_0,k_1\geq2$.

\begin{proposition}
\label{obs:polynomial-unimodal}
In the fine-tuning game with polynomial costs, if $k_0,k_1\geq2$, then $U_G, U_D$ are strictly unimodal functions of $\delta\in[0,1]$.
\end{proposition}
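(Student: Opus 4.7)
The plan is to show that $U_G$ and $U_D$ are strictly log-concave on the open interval $(0,1)$, which implies strict unimodality there, and then to extend the conclusion to the closed interval $[0,1]$ by continuity.

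Setting $p = \tfrac{1}{k_0-1}$ and $q = \tfrac{1}{k_1-1}$, both in $(0,1]$ since $k_0,k_1\geq 2$, I would factor
\[
U_G(\delta) \;=\; A\,\delta^{1+p} + B\,\delta(1-\delta)^q \;=\; \delta\cdot W(\delta),
\]
where $A,B>0$ are the positive constants from Corollary~\ref{cor:utilities-solved} and $W(\delta):=A\delta^p+B(1-\delta)^q$. The key observation is that $W$ is a sum of two concave functions on $[0,1]$ (because $p,q\leq 1$), so $W$ is itself concave; combined with $W(0)=B>0$ and $W(1)=A>0$, concavity yields $W(\delta)\geq (1-\delta)W(0)+\delta W(1)>0$ throughout $[0,1]$, hence $U_G>0$ on $(0,1)$.

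I would then take logarithms on $(0,1)$ and compute
\[
(\ln U_G)''(\delta) \;=\; -\frac{1}{\delta^2} \;+\; \frac{W''(\delta)}{W(\delta)} \;-\; \left(\frac{W'(\delta)}{W(\delta)}\right)^{\!2}.
\]
All three summands are non-positive (using $W''\leq 0$, $W>0$, and the last term being a negated square), and the first is strictly negative, so $(\ln U_G)''<0$ on $(0,1)$. Hence $\ln U_G$ is strictly concave, making $U_G$ strictly log-concave, strictly quasi-concave, and therefore strictly unimodal on $(0,1)$. The identical argument handles $U_D$ after the substitution $\delta\mapsto 1-\delta$ (and swapping $p$ with $q$), since Corollary~\ref{cor:utilities-solved} gives $U_D(\delta)=(1-\delta)\cdot[A'(1-\delta)^q+B'\delta^p]$ with the bracketed factor again concave and positive in $\delta\in[0,1]$.

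Extending strict unimodality to $[0,1]$ is then a brief continuity check using $U_G(0)=0<A=U_G(1)$: if $U_G$ is monotonic on $(0,1)$ it must be strictly increasing (a strictly decreasing extension would force $U_G(1)<U_G(0)$), so it is strictly increasing on the closed interval with $m=1$; otherwise the open-interval maximizer $m\in(0,1)$ combines with $U_G(0)=0<U_G$ on $(0,1)$ and a symmetric argument at $\delta=1$ to give strict monotonicity on $[0,m]$ and $[m,1]$ respectively. The only substantive obstacle is spotting the factorization $U_G=\delta\cdot W$ with $W$ concave and positive; once that is in hand the log-concavity computation and the boundary extension are routine, whereas a direct attack on $U_G''$ yields an expression with mixed signs that is genuinely unwieldy across the full range of admissible exponents.
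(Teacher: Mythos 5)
Your proof is correct, but it takes a genuinely different route from the paper's. The paper first establishes an auxiliary criterion (Lemma \ref{unimodal-lemma}): a function whose derivative is positive at the left endpoint and concave throughout the interval is strictly unimodal; it then verifies $\frac{\partial U_G}{\partial\delta}\big\vert_{\delta=0}=B>0$ and checks, coefficient by coefficient, that $\partial^3 U_G/\partial\delta^3\leq 0$, before transferring the result to $U_D$ by the reflection $\delta\mapsto 1-\delta$. You instead exploit the factorization $U_G(\delta)=\delta\,W(\delta)$ with $W(\delta)=A\delta^{1/(k_0-1)}+B(1-\delta)^{1/(k_1-1)}$, observe that $W$ is concave (both exponents lie in $(0,1]$ precisely when $k_0,k_1\geq 2$) and strictly positive on $[0,1]$, and deduce strict log-concavity of $U_G$ on $(0,1)$ from $(\ln U_G)''=-\delta^{-2}+W''/W-\left(W'/W\right)^2<0$. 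This is cleaner: it replaces the third-derivative bookkeeping with a one-line sign argument, and it makes transparent exactly where the hypothesis $k_0,k_1\geq2$ enters (it is what makes $W$ concave; for $k<2$ the cofactor turns convex and the argument, like the paper's, breaks down --- consistent with the paper's remark that counterexamples appear numerically in that regime). Your boundary bookkeeping --- ruling out the strictly decreasing case via $U_G(0)=0<A=U_G(1)$ and using positivity of $U_G$ on $(0,1)$ to extend the monotonicity to the closed interval --- is sound and is a step the paper's write-up glosses over; the reflection argument for $U_D$ matches the paper's.
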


The above findings are proven in Appendix \ref{app:unimodal-one-specialist}. They suggest that the family of polynomial cost functions yields strictly unimodal utilities over $\delta$. The set of Pareto-optimal solutions to these games can therefore be identified using Theorem \ref{thm:unimodal-pareto}.
It is easy to show that the strict unimodality finding further generalizes to linear combinations of polynomial terms of the form provided in Equations (\ref{phi-0-polynomial-single}), so long as all exponents are greater than or equal to $2$. However, when the condition is not met and $k_0,k_1<2$, numerical simulations suggest that there are counter-examples to the strict unimodality property. When the strict unimodality property does not hold, it is still possible to analyze players' strategies---for example, our analysis in Section \ref{sec:multi} stands even in cases where utility functions are not unimodal in $\delta$.

Solving the powerful-$G$, powerful-$D$, vertical monopoly or other bargaining solutions consists in maximizing players' utilities either separately or combined into a joint utility. This is possible once parameters are specified; however, we cannot produce a closed-form expression for the general polynomial case because doing so would require solving for the zeroes of a polynomial of high degree. Therefore, for the remainder of this section, we will demonstrate the solution steps using parameter values $k_0,k_1 =2$. We call this the case of \textit{quadratic costs}. We choose the quadratic case for clarity and exposition, though we note that other solutions with other parameter values can be calculated using analogous steps.

\subsection{Pareto-optimal Agreements on $\delta$}
\label{subsec:pareto-quadratic}

We've derived both players' optimal strategies for fixed $\delta$. Now, we consider the process where players agree on a particular value of $\delta$. Since both players must enter an agreement in order for the technology to be viable, the determination of $\delta$ is a two-player bargaining game. We start by solving for the set of Pareto-optimal bargaining agreements, which is the interval between the `powerful-player' solutions, defined below.




\textbf{Powerful-Player Solutions.} 
As we showed in Theorem \ref{thm:unimodal-pareto}, identifying the `powerful-player' agreements is important for characterizing the set of Pareto-optimal bargaining solutions. Thus, we begin this section of analysis by solving for the powerful-$G$ and powerful-$D$ solutions (as defined in Definition \ref{def:powerfulP}).
 
\begin{proposition}[Powerful-$G$ Solution]
\label{prop:powerful-g}
    The Powerful-$G$ solution to the model fine-tuning game with quadratic costs is as follows:
    \begin{equation*}
    \delta^{\textit{Powerful }G} = 
        \begin{cases}
        \frac{c_0}{2c_0-c_1} & \textit{ for  } {c_1} < {c_0}, \\
        1 & \textit{ for  } {c_1} \geq c_0.
        \end{cases}
    \end{equation*}
\end{proposition}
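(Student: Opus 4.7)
The plan is to specialize Corollary~\ref{cor:utilities-solved} to the quadratic setting ($k_0 = k_1 = 2$) and then perform a standard unconstrained optimization of $U_G(\delta)$ over $\delta \in [0,1]$, being careful about the boundary case. Substituting the exponents gives
\begin{equation*}
U_G(\delta) \;=\; \frac{\delta^2}{4c_0} \;+\; \frac{\delta(1-\delta)}{2c_1},
\end{equation*}
a smooth quadratic in $\delta$, which is well-suited to first-order analysis.

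Next, I would compute $U_G'(\delta) = \frac{\delta}{2c_0} + \frac{1 - 2\delta}{2c_1}$ and solve $U_G'(\delta) = 0$, which after clearing denominators yields the unique stationary point $\delta^\star = \frac{c_0}{2c_0 - c_1}$ (when $2c_0 \neq c_1$). To decide whether this stationary point is the desired maximizer on $[0,1]$, I would look at $U_G''(\delta) = \frac{1}{2c_0} - \frac{1}{c_1}$ and split into cases based on the sign of $c_1 - 2c_0$, together with whether $\delta^\star$ falls inside the unit interval. A quick calculation shows that $\delta^\star \in [0,1]$ if and only if $c_1 \le c_0$; in that regime $U_G$ is concave (since $c_1 < 2c_0$), so $\delta^\star$ is the global maximizer and we recover the first branch of the claim.

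For the complementary regime $c_1 \geq c_0$, I would argue that the optimum is attained on the boundary at $\delta = 1$. Here there are two sub-cases: when $c_0 \leq c_1 < 2c_0$, the function $U_G$ is still concave, but $\delta^\star > 1$, so $U_G$ is increasing on $[0,1]$; when $c_1 \geq 2c_0$, the function is convex and $\delta^\star \leq 0$, so again $U_G$ is increasing on $[0,1]$. Either way the constrained maximum is $\delta = 1$, comparing against $U_G(0) = 0 < U_G(1) = \tfrac{1}{4c_0}$ to rule out the other endpoint.

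The main obstacle---really the only subtlety---is the careful case analysis at the boundary. The unconstrained optimizer $\delta^\star = \frac{c_0}{2c_0 - c_1}$ can leave $[0,1]$ either through $\delta^\star > 1$ (with $U_G$ concave) or through $\delta^\star < 0$ (with $U_G$ convex, so $\delta^\star$ is a minimum rather than a maximum); both must be handled to conclude that the constrained argmax lands at $\delta = 1$. Once this bookkeeping is done, the two branches of the stated proposition follow immediately.
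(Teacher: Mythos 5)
Your proof is correct and follows essentially the same route as the paper's: solve the first-order condition to get $\delta^\star = \frac{c_0}{2c_0-c_1}$, apply the second-derivative test, and note that the stationary point is feasible exactly when $c_1 \le c_0$, with the boundary $\delta=1$ taking over otherwise (the paper handles the $c_1 \ge c_0$ regime by showing $\partial U_G/\partial\delta \ge 0$ on all of $[0,1]$ via contradiction, which is equivalent to your concave/convex subcase split). The only loose end is the measure-zero case $c_1 = 2c_0$, where $U_G$ is linear in $\delta$ with constant positive slope $\tfrac{1}{4c_0}$, so the conclusion $\delta = 1$ still holds.
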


\begin{proposition}[Powerful-$D$ Solution]
\label{prop:powerful-d}
    The Powerful-$D$ solution to the model fine-tuning game with quadratic costs is as follows:
    \begin{equation*}
    \delta^{\textit{Powerful }D} = 
        \begin{cases}
        0 & \textit{ for  } c_1 < c_0, \\
        \frac{c_1-c_0}{2c_1-c_0} & \textit{ for  } c_1 \geq c_0.
        \end{cases}
    \end{equation*}
\end{proposition}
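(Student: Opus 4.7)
My plan is to maximize $U_D(\delta)$ over $\delta \in [0,1]$ using the closed-form expression from Corollary \ref{cor:utilities-solved} specialized to $k_0 = k_1 = 2$. This gives the quadratic
$$U_D(\delta) \;=\; \frac{1}{4c_1}(1-\delta)^2 \;+\; \frac{1}{2c_0}\,\delta(1-\delta),$$
so the problem reduces to maximizing a univariate quadratic on a closed interval. I would differentiate, set $U_D'(\delta)=0$, and solve the resulting linear equation $c_1(1-2\delta)=c_0(1-\delta)$ to extract the unique interior critical point $\delta_c = (c_1-c_0)/(2c_1-c_0)$.

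I would then split on the sign of $c_1-c_0$. When $c_1 \geq c_0$, the coefficient of $\delta^2$ in $U_D$ simplifies to $(c_0-2c_1)/(4c_0c_1) < 0$, so the parabola opens downward; a short check shows $\delta_c \in [0,1)$, so $\delta_c$ is the global maximum on $[0,1]$ and yields the claimed Powerful-$D$ value. This case is essentially mechanical once the sign of the leading coefficient is pinned down.

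The subtle case is $c_1 < c_0$, and this is where I expect the main obstacle. Depending on whether $c_1$ lies above or below $c_0/2$, the critical point $\delta_c$ is either a negative maximum of a downward-opening parabola or a value exceeding $1$ that is actually a minimum of an upward-opening parabola; in neither subcase can $\delta_c$ itself serve as the argmax on $[0,1]$. To handle both subcases uniformly, I would invoke the strict unimodality of $U_D$ on $[0,1]$ established in Proposition \ref{obs:polynomial-unimodal}, which guarantees that the maximum on $[0,1]$ is attained either at the interior mode or at a boundary. Since $\delta_c \notin [0,1]$ in this regime, the maximum must be at $\delta=0$ or $\delta=1$; comparing $U_D(0) = 1/(4c_1) > 0 = U_D(1)$ identifies $\delta^{\text{Powerful }D} = 0$, matching the proposition.
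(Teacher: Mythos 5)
Your proposal is correct and follows essentially the same route as the paper: compute the critical point $\delta_c=(c_1-c_0)/(2c_1-c_0)$ of the quadratic $U_D$, verify concavity (equivalently the sign of the second derivative) to confirm it is the global maximizer when $c_1\geq c_0$, and argue that $\delta=0$ is optimal when $c_1<c_0$. The only difference is cosmetic: the paper closes the $c_1<c_0$ case by showing $\partial U_D/\partial\delta\leq 0$ on all of $[0,1]$ via a contradiction split into subcases, whereas you invoke strict unimodality (Proposition \ref{obs:polynomial-unimodal}) plus the endpoint comparison $U_D(0)=\frac{1}{4c_1}>0=U_D(1)$; both are valid.
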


The proofs for both solutions above are in Appendix \ref{subsec:powerful-g-single} and \ref{subsec:powerful-d-single}. Now, using Theorem \ref{thm:unimodal-pareto} and Proposition \ref{obs:polynomial-unimodal}, we can define the set of Pareto-optimal solutions as: $\delta^{\textit{Pareto}} \in \{\delta: \delta \leq \delta^{\textit{Powerful }G} \cap \delta \geq \delta^{\textit{Powerful }D}\}$. A visual representation of these solutions for the fine-tuning game with quadratic costs is provided in Figure \ref{fig:my_label}.

\subsection{Bargaining Solutions to Specify $\delta$}
\label{subsec:bargain-quadratic}

\begin{figure*}[bt]
  \vspace{-0.2cm}
 	\centering
\hspace{0.06\textwidth}
 		 \includegraphics[width=.265
    \textwidth]{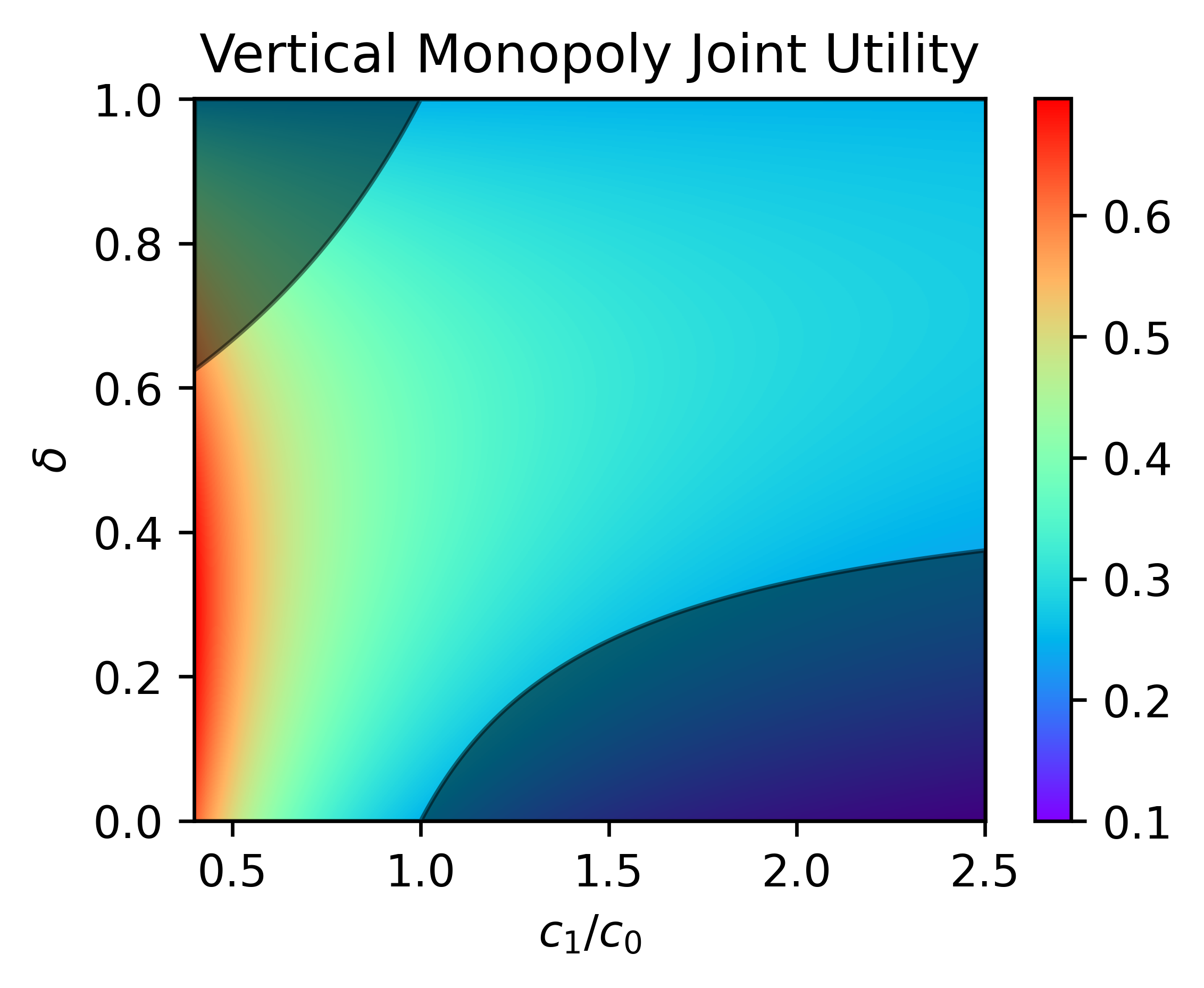}
 	\hfill
 		 \includegraphics[width=.27
    \textwidth]{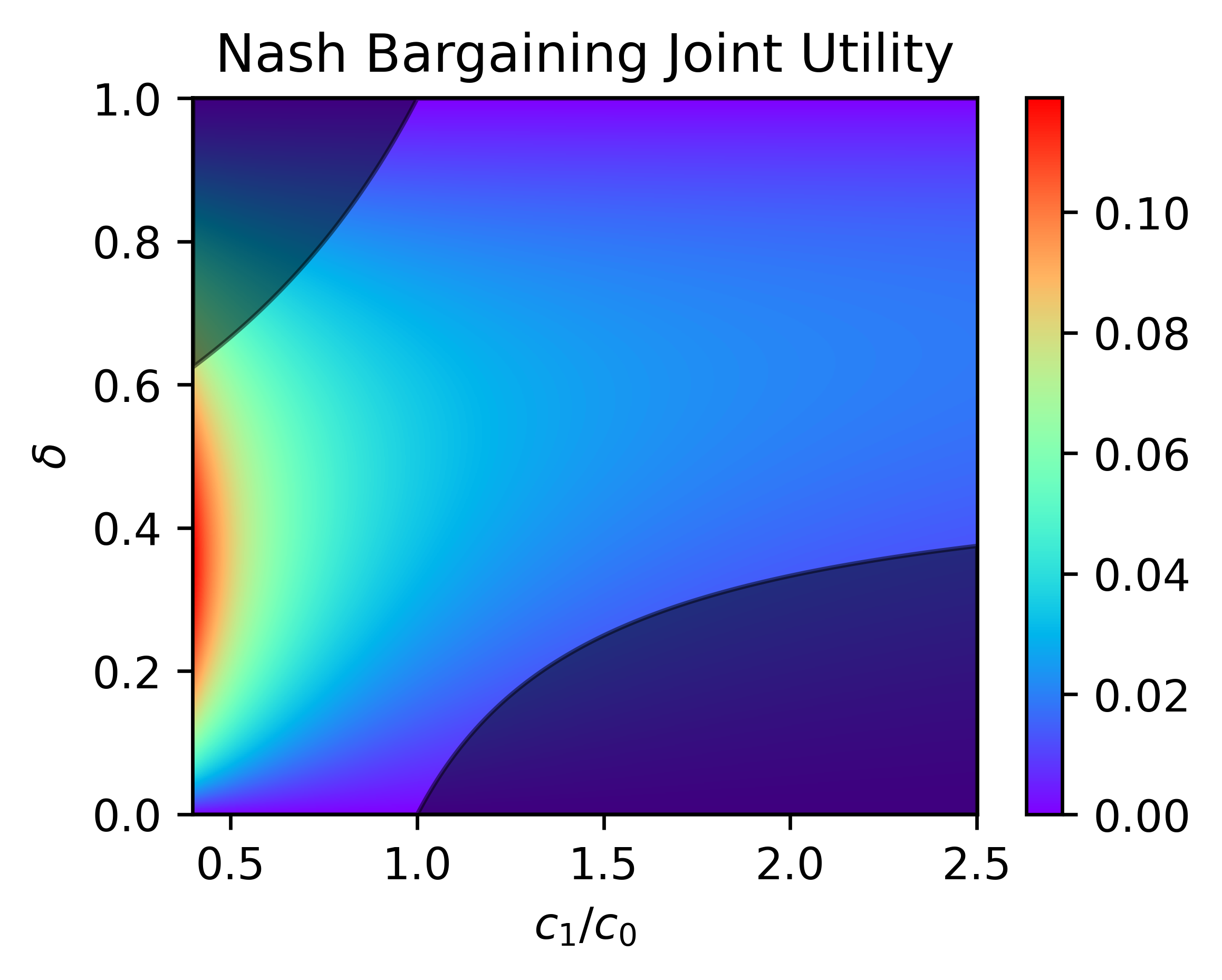}
 	\hfill
 		 \includegraphics[width=.27
    \textwidth]{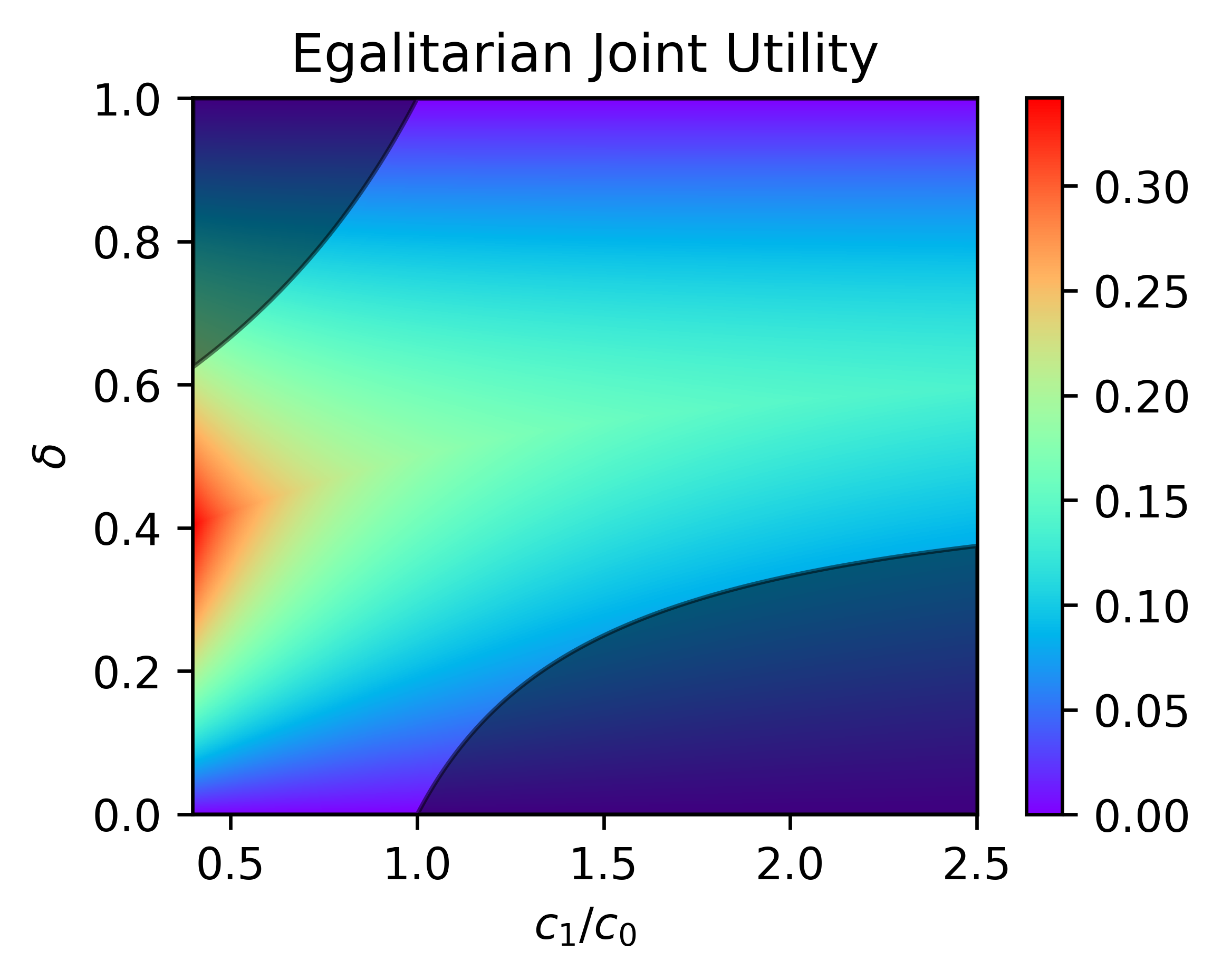}
\hspace{0.06\textwidth}
\vspace{-.1in}
 	\caption{Various joint-utility functions for finding bargaining solutions. Gray regions are $\delta$ values that are not Pareto-optimal and therefore not candidate bargaining solutions. Color bar scales are defined assuming $c_0=1$.}
 	\label{fig:joint-util}
 \end{figure*}

If neither player dominates in a bargain, how do they decide how to share surplus profit? Solutions to bargaining problems identify an agreement that maximizes some joint utility function or satisfies certain desirable properties. In this section, we define the various bargaining solutions that the two players could plausibly arrive at within the set of Pareto-optimal solutions. These solutions mostly use a joint utility function to guide the bargaining agreement, as depicted in Figure \ref{fig:joint-util}. A visual representation of the bargaining solutions is provided in Figure \ref{fig:my_label}. Definitions and closed-form solutions are provided below, and the corresponding proofs can be found in Appendix \ref{app:section-3}.

\textbf{Vertical Monopoly Solution.}
\label{subsubsec:vm}
 A perhaps intuitive approach to bargaining is to choose a revenue-sharing agreement that maximizes the sum of utilities $U_G+U_D$. This solution imagines that the two players are jointly controlled by a single entity who simply wishes to maximize the sum of utility. This solution is known as either the `vertical monopoly' solution or the `utilitarian' solution.

 \begin{definition}[Vertical Monopoly Solution]
 For the fine-tuning game, the Vertical Monopoly (or `Utilitarian') Solution is the feasible revenue-sharing agreement $\delta^{\text{VM}} \in [0,1]$ that maximizes the sum of the players' utilities:
$\delta^{\text{VM}} = \text{argmax}_{\delta \in [0,1]} \left(U_G(\delta)+U_D(\delta)\right)$.
 \end{definition}

 \begin{proposition}[Vertical Monopoly Solution]
\label{prop:vertical-monopoly-quadratic}
   The Vertical Monopoly Bargaining Solution to the fine-tuning game with quadratic costs is as follows:
    \begin{equation*}
    \delta^{\textit{Vertical Monopoly}} = \frac{c_1}{c_1+c_0}.
    \end{equation*}
\end{proposition}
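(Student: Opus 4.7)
The plan is to specialize the closed-form utilities from Corollary \ref{cor:utilities-solved} to the quadratic case ($k_0=k_1=2$), take the sum, and maximize in closed form over $\delta$. Plugging $k_0=k_1=2$ into \eqref{u-g-polynomial}--\eqref{u-d-polynomial} gives
\[
U_G(\delta) = \frac{\delta^2}{4c_0} + \frac{\delta(1-\delta)}{2c_1}, \qquad U_D(\delta) = \frac{(1-\delta)^2}{4c_1} + \frac{\delta(1-\delta)}{2c_0}.
\]
Adding these and grouping the $1/c_0$ and $1/c_1$ terms, I expect to get a tidy expression of the form
\[
U_G(\delta)+U_D(\delta) \;=\; \frac{\delta(2-\delta)}{4c_0} \;+\; \frac{(1-\delta)(1+\delta)}{4c_1},
\]
which is a quadratic in $\delta$ with negative leading coefficient, hence strictly concave on all of $\mathbb{R}$.

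Next I would differentiate with respect to $\delta$ and set the derivative to zero. The first-order condition reduces to $\frac{1-\delta}{2c_0} = \frac{\delta}{2c_1}$, i.e.\ $c_1(1-\delta) = c_0\delta$, giving $\delta = \tfrac{c_1}{c_0+c_1}$. Strict concavity of $U_G+U_D$ confirms this is the unique unconstrained maximizer, and since $c_0,c_1>0$ it automatically lies in $(0,1)$, so the box constraint $\delta\in[0,1]$ is inactive.

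Finally, I need to confirm the candidate is in fact feasible as a bargaining solution, i.e.\ that it lies in the Pareto-optimal interval $[\delta^{\text{Powerful }D},\,\delta^{\text{Powerful }G}]$ identified via Theorem \ref{thm:unimodal-pareto} together with Propositions \ref{prop:powerful-g}--\ref{prop:powerful-d}. I would split into the two cases of those propositions. For $c_1<c_0$, the interval is $[0,\tfrac{c_0}{2c_0-c_1}]$ and the required inequality $\tfrac{c_1}{c_0+c_1}\le \tfrac{c_0}{2c_0-c_1}$ cross-multiplies to $c_1(c_0-c_1)\le c_0^2$, which is immediate. For $c_1\ge c_0$, the interval is $[\tfrac{c_1-c_0}{2c_1-c_0},1]$ and the inequality $\tfrac{c_1-c_0}{2c_1-c_0}\le \tfrac{c_1}{c_0+c_1}$ reduces to $c_0(c_1-c_0)\le c_1^2$, which follows from $c_0\le c_1$.

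I do not expect a genuine obstacle here: the whole argument is a one-variable concave maximization followed by a two-case feasibility check. The only mild bookkeeping step is the algebraic simplification of $U_G+U_D$ into a clean quadratic; once that is done, the first-order condition is linear in $\delta$ and the Pareto-set verification is elementary.
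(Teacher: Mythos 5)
Your proposal is correct and follows essentially the same route as the paper: specialize Corollary \ref{cor:utilities-solved} to $k_0=k_1=2$, set $\frac{d}{d\delta}(U_G+U_D)=0$, and confirm via the second derivative $-\frac{1}{2c_0}-\frac{1}{2c_1}<0$ that the unique critical point $\frac{c_1}{c_0+c_1}$ is the maximizer. Your additional two-case check that this point lies in the Pareto interval $[\delta^{\text{Powerful }D},\delta^{\text{Powerful }G}]$ is a small (correct) extra that the paper's own proof omits.
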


\textbf{Egalitarian Bargaining Solution.}
\label{subsubsec:egal}
An alternative bargaining approach tries to help the worst-off player. This bargaining solutions is known as the `egalitarian' solution.

\begin{definition}[Egalitarian Bargaining Solution]
 For the fine-tuning game, the Egalitarian Bargaining Solution is the feasible agreement $\delta^{\text{Egal.}} \in [0,1]$ that maximizes the minimum of players' utilities:
$\delta^{\text{Egal.}} = \text{argmax}_{\delta \in [0,1]} (\text{min}_{P\in\{G,D\}}\left(U_P(\delta)\right)).$
 \end{definition}

 \begin{proposition}[Egalitarian Bargaining Solution to the fine-tuning game with quadratic costs]
\label{prop:kalai-quadratic}
    The Egalitarian Bargaining Solution to the fine-tuning game with quadratic costs is:
    $$\delta^{\textit{Egal.}} = \frac{-\sqrt{c_0^2-c_0c_1+c_1^2}-c_1+2c_0}{3(c_0-c_1)}.$$
\end{proposition}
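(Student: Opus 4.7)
The approach is to reduce the max-min problem defining $\delta^{\text{Egal.}}$ to solving the scalar equation $U_G(\delta) = U_D(\delta)$, then apply the quadratic formula. First, I would substitute $k_0 = k_1 = 2$ into the expressions from Corollary~\ref{cor:utilities-solved} to obtain the explicit quadratics
\[U_G(\delta) = \frac{\delta^2}{4c_0} + \frac{\delta(1-\delta)}{2c_1}, \qquad U_D(\delta) = \frac{(1-\delta)^2}{4c_1} + \frac{(1-\delta)\delta}{2c_0}.\]
Evaluating at the endpoints gives $U_G(0)=0 < \tfrac{1}{4c_1} = U_D(0)$ and $U_G(1)=\tfrac{1}{4c_0}>0=U_D(1)$, so by continuity the two utilities cross at least once in $(0,1)$.

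Next, I would argue that $\delta^{\text{Egal.}}$ coincides with such a crossing. By Proposition~\ref{obs:polynomial-unimodal} together with Theorem~\ref{thm:unimodal-pareto}, inside the Pareto interval $[\delta^{\text{Powerful }D}, \delta^{\text{Powerful }G}]$ the function $U_G$ is strictly increasing and $U_D$ is strictly decreasing. Hence $\min(U_G, U_D)$ equals $U_G$ up to any crossing point and $U_D$ past it, so the max of the minimum is attained exactly where the two curves meet. Since the egalitarian solution must itself be Pareto-optimal (otherwise both utilities, and thus their minimum, could be strictly increased), the crossing lying in the Pareto interval \emph{is} the egalitarian solution.

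I would then solve $U_G(\delta) = U_D(\delta)$ by clearing denominators (multiplying through by $4 c_0 c_1$) and collecting powers of $\delta$. After simplification this reduces to the quadratic
\[3(c_0-c_1)\,\delta^2 - 2(2c_0 - c_1)\,\delta + c_0 = 0,\]
whose discriminant simplifies cleanly as $(2c_0-c_1)^2 - 3c_0(c_0-c_1) = c_0^2 - c_0 c_1 + c_1^2$. This is always strictly positive (e.g., $c_0^2 - c_0 c_1 + c_1^2 = (c_0 - c_1/2)^2 + 3c_1^2/4$), so two real roots exist, and the quadratic formula produces the claimed closed form upon selecting the minus branch in the numerator.

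The main obstacle is justifying the choice of the minus branch and handling the degenerate case $c_0 = c_1$. For root selection, I would verify either numerically (e.g., by checking $c_0 = 2, c_1 = 1$) or via the bound $\sqrt{c_0^2 - c_0 c_1 + c_1^2} \geq |c_0 - c_1|$ that the plus-sign root falls outside $[0,1]$, so only the minus root can coincide with the Pareto-optimal crossing identified above. When $c_0 = c_1$, the leading coefficient vanishes and the closed form takes the indeterminate shape $0/0$; in that case the defining equation degenerates to a linear equation giving $\delta = 1/2$, which also equals the limit of the closed form as $c_1 \to c_0$, so the stated expression extends continuously through the degenerate case.
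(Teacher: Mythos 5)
Your proposal is correct and follows essentially the same route as the paper: argue that the egalitarian solution must equalize the two utilities within the Pareto set, reduce to $U_G(\delta)=U_D(\delta)$, which yields the same quadratic $3(c_0-c_1)\delta^2-2(2c_0-c_1)\delta+c_0=0$, and select the root lying in $[0,1]$. Your treatment is in fact slightly more complete than the paper's, since you explicitly justify discarding the plus-branch via $\sqrt{c_0^2-c_0c_1+c_1^2}\ge|c_0-c_1|$ and handle the degenerate case $c_0=c_1$ (where the formula is $0/0$ but extends continuously to $\delta=1/2$), both of which the paper passes over silently.
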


\textbf{Nash Bargaining Solution.}
\label{subsubsec:nbs}
The Nash Bargaining solution maximizes the product between the two players' utilities. This arrangement satisfies a number of desiderata, originally laid out by \cite{nash1950bargaining}. 

\begin{definition}[Nash Bargaining Solution]
 For the fine-tuning game, the Nash Bargaining Solution is the feasible revenue-sharing agreement $\delta^{\text{NBS}} \in [0,1]$ that maximizes the product of the players' utilities:
$\delta^{\text{NBS}} = \text{argmax}_{\delta \in [0,1]} \left(U_G(\delta)*U_D(\delta)\right)$.
\end{definition}

Though a closed-form solution for quadratic functions is possible, it involves solving the roots of a cubic function and yields a solution that is clunky and uninterpretable. We refer the reader to our numerical findings on this solution, depicted in Figures \ref{fig:joint-util} and \ref{fig:my_label}.

\textbf{Kalai-Smorodinsky Bargaining Solution.}
\label{subsubsec:ks}
Another solution suggested in economic literature, known as the Kalai-Smorodinsky bargaining solution, equalizes the ratio of maximal gains. Formally: 

\begin{definition}[Kalai-Smorodinsky Bargaining Solution \citep{kalai1975other}]
 For the fine-tuning game, the Kalai-Smorodinsky Bargaining Solution (KSBS) is the feasible revenue-sharing agreement $\delta^{\text{KSBS}} \in [0,1]$ that satisfies the following relation:
$$\frac{U_G(\delta^{\text{KSBS}})}{\max_{\delta \in \delta^{\text{Pareto}}}U_G(\delta)} = \frac{U_D(\delta^{\text{KSBS}})}{\max_{\delta \in \delta^{\text{Pareto}}}U_D(\delta)}.$$
 \end{definition}

Notice the denominators in the above equation are simply the utilities associated with the powerful-G and powerful-D solutions. Despite this simplifying step, the closed form Kalai-Smorodinsky solution is clunky and uninterpretable, so we omit it from this paper. Our numerical findings on this solution are depicted in Figure \ref{fig:my_label}.

\textbf{Solution that maximizes the technology's performance.}
\label{subsubsec:max-a}
Though not, exactly, a bargaining solution, we can reason about the parameter value $\delta$ that maximizes the technology's performance, $\alpha_1^*$. There are a few ways to think of this quantity: It is the performance of the technology, and, equivalently, it is also the amount of revenue the two players collect. The maximum-$\alpha_1^*$ solution can be used as a hypothetical baseline to compare against the relative performance outcomes of other bargaining solutions.


\begin{definition}[Maximum-performance $\delta$]
For the fine-tuning game, the maximum-performance bargain is the feasible revenue-sharing agreement $\delta^{\text{max-}\alpha_1^*} \in [0,1]$ that maximizes the technology's performance $\alpha_1^*$:
$\delta^{\text{max-}\alpha_1^*} = \text{argmax}_{\delta \in [0,1]} \alpha_1^*$.
 \end{definition}

\begin{proposition}[Maximum-$\alpha_1^*$ Solution]
\label{prop:max-alpha-quadratic}
    The bargaining solution that maximizes the technology's performance is given by:
    \begin{equation*}
    \delta^{\textit{Max-}\alpha_1^*} = 
        \begin{cases}
        0 & \textit{ for  } c_1 < c_0, \\
        1 & \textit{ for  } c_1 \geq c_0.
        \end{cases}
    \end{equation*}
\end{proposition}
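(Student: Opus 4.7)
The plan is to substitute $k_0=k_1=2$ into the closed-form expression for $\alpha_1^*$ from Theorem~\ref{subgame-perfect-eq} and then directly maximize the resulting function of $\delta$ on $[0,1]$. Specifically, plugging in quadratic exponents yields
\[
\alpha_1^*(\delta) \;=\; \frac{\delta}{2c_0} \;+\; \frac{1-\delta}{2c_1},
\]
which is an affine (linear) function of $\delta$. The crucial observation is that a linear function on a closed interval attains its maximum at an endpoint, so the optimization reduces to comparing $\alpha_1^*(0) = \tfrac{1}{2c_1}$ and $\alpha_1^*(1) = \tfrac{1}{2c_0}$.

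Next, I would compute the slope $\frac{d\alpha_1^*}{d\delta} = \tfrac{1}{2c_0} - \tfrac{1}{2c_1} = \tfrac{c_1 - c_0}{2 c_0 c_1}$ and split on sign. Since $c_0, c_1 > 0$, the slope is negative precisely when $c_1 < c_0$, so $\alpha_1^*$ is strictly decreasing on $[0,1]$ and the maximizer is $\delta^{\text{Max-}\alpha_1^*} = 0$. When $c_1 > c_0$ the slope is positive, giving $\delta^{\text{Max-}\alpha_1^*} = 1$. When $c_1 = c_0$, the function is constant, and we adopt the stated tie-breaking convention that assigns this boundary case to $\delta=1$, consistent with the ``$c_1 \geq c_0$'' branch in the proposition.

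There is no substantive obstacle here: once Theorem~\ref{subgame-perfect-eq} is in hand, the entire argument is linearity plus a sign check. The only subtle point is verifying that the stated piecewise expression matches on the boundary $c_1 = c_0$, where either endpoint is a maximizer and the proposition's convention selects $\delta = 1$. I would close by remarking that this matches intuition: the revenue share goes entirely to whichever party has the cheaper marginal cost of investment, since that party converts each unit of promised revenue into the most additional performance.
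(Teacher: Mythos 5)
Your proposal is correct and follows essentially the same route as the paper: substitute $k_0=k_1=2$ into the equilibrium expression for $\alpha_1^*$ from Theorem~\ref{subgame-perfect-eq}, compute the constant slope $\frac{1}{2c_0}-\frac{1}{2c_1}$, and read off the maximizing endpoint from its sign. Your explicit note about the tie-breaking convention at $c_1=c_0$ (where $\alpha_1^*$ is constant in $\delta$) is a small clarification the paper's proof also makes, so there is nothing to add.
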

The maximum-performance solution  favors allocating all revenue to the player facing lower costs. Although inducing effort from the lower-cost player makes sense, it is perhaps counter-intuitive that this solution does not favor revenue-sharing between players as the other bargaining solutions do. Indeed, further analysis reveals that this strict stepwise result is unique to the quadratic case where $k_0=k_1=2$. In polynomial games for which $k>2$,  revenue-sharing yields a higher-performing technology, though most revenue is still allocated to the player with lower costs. 

\begin{figure}
    \centering
    \includegraphics[width=.7\linewidth]{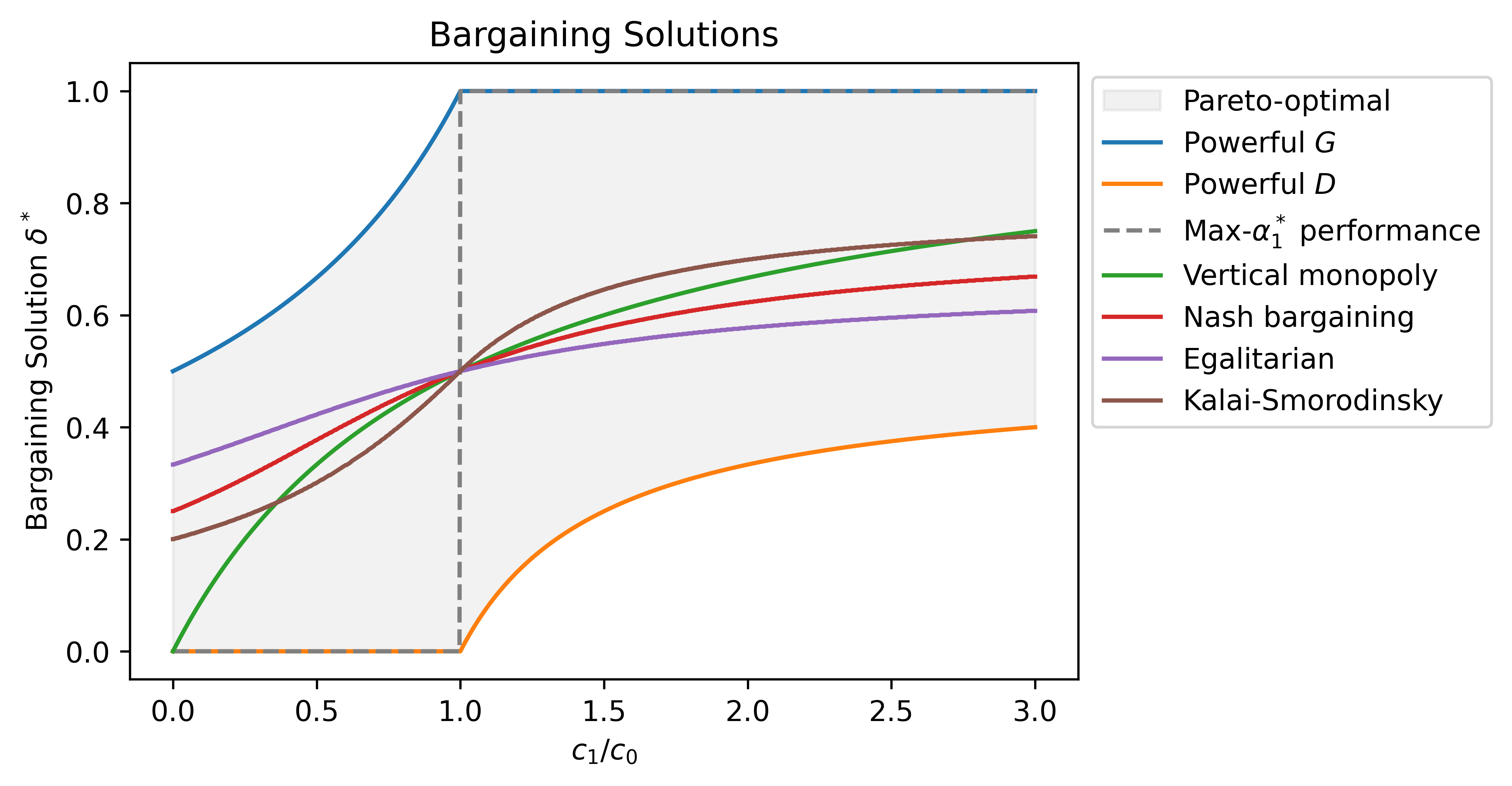}
    \caption{Bargaining agreements for the fine-tuning game with quadratic costs. 
    Most bargaining solutions involve revenue sharing, even when one player faces much higher costs. 
    }
    \label{fig:my_label}
\end{figure}

\subsection{Discussion on Bargaining Solutions}

Above we solve for a number of bargaining solutions revealing different possible configurations of fine-tuning arrangements. The general technology-producer and the domain specialist each have different optimal arrangements, between which any agreement is Pareto-optimal in the case of polynomial costs. 

The first notable take-away is that players do not necessarily opt to maximize their own proportion of the profit. Even if one player has full control over the bargaining solution, depending on the relative cost of production, they may benefit from a profit-sharing agreement in order to encourage investment by the other player. If bargaining is conceptualized as splitting a pie, one player prefers to cede some portion of the pie if it means the entire pie grows to a size that justifies profit-sharing. This phenomenon arises in real-world settings. For instance, Apple allows third party developers to build software on iPhones. Opening up the tasks of application development to third parties improves consumer experience such that consumers are willing to purchase apps or other capabilities within apps. This additional revenue is then shared between Apple and the developer, leaving Apple with higher profits and a better product. Revenue sharing arises, often, because doing so is lucrative.

Profit-sharing is present even when both players have exceedingly different costs of production (i.e., when $c_1/c_0$ approaches $0$ or $\infty$). In these limiting instances, we find that the Nash bargaining solution, Kalai-Smorodinski, and Egalitarian solutions all suggest profit-sharing. Only the Utilitarian solution---which models the two players as a vertical monopoly that is centrally controlled---yields the intuitively performance-optimal bargain, where the player with lower costs receives the entire profit. However, the vertical monopoly solution is not always performance-optimal. It underperforms the KSBS when the players face similar costs ($\sim0.5 < \frac{c_1}{c_0}< 2.5$).

The bargaining solutions are neither binding rules nor descriptive observations; instead, they can be thought of as normative prescriptions. Identifying joint utility functions can help guide agents towards decisions that serve collective interests. For example, utilitarian and egalitarian solutions offer different visions for the appropriate distribution of welfare. In the same vein, one could specify and commit to a \textit{social welfare function} in order to identify a bargaining solution that might be referred to as `socially optimal.' Unsurprisingly, however, specifying social interests in a single function is an ambitious undertaking. In our present case, a social welfare function would need to balance the interests of (at least) 1) the technology's producers 2) consumers who value performance and 3) other external stakeholders. The procedure demonstrated in this section provides a road map for a social welfare analysis of the deployment of general-purpose models. Such an analysis might uncover how fine-tuning processes can be configured to serve collective, societal interests. 
\section{Generalization to Multiple Domain Specialists}
\label{sec:multi}

So far, we've modeled the fine-tuning process as a two-player game between a generalist and a specialist. However, an important feature of general-purpose AI models is that they can be developed for a wide swath of downstream use-cases. To capture the possibly many uses for general-purpose models, in this section, we generalize our model to the case where $n\geq 1$ domain specialists adapt the technology.

\textbf{The multi-specialist fine-tuning game.} Consider a game with $n \geq 1$ specialists. The players are $G$, $D_1$, $D_2$, ... $D_n$ and we use $i$ to index the specialists. $G$ develops a technology to general performance $\alpha_0$, after which every domain specialist $D_i$ invests in the technology, bringing it to performance $\alpha_i$ in their domain. $G$ and $D_i$ share revenue $r_i(\alpha_i)$ according to bargaining parameter $\delta_i \in [0,1]$. At the end of the game, $G$ receives $\sum_i \delta_i r_i(\alpha_i)$ and each specialist $D_i$ receives $(1-\delta_i)r_i(\alpha_i)$. The game involves the following steps:
\begin{enumerate}
    \item Players bargain to decide $\delta_i$ for every domain $i$.
    \item $G$ invests in a general-purpose technology yielding performance level $\alpha_0$ and subject to cost $\phi_0(\alpha_0)$.
    \item Each specialist $D_i$ may fine-tune the technology by choosing a performance level $\alpha_i$ subject to cost $\phi_i(\alpha_i;\alpha_0)$.
\end{enumerate}
Players' utilities are defined as their revenue share minus cost:
\begin{equation}
    U_G({\delta}) := \sum_{i}\delta_i r_i(\alpha_i) - \phi_0(\alpha_0),
\end{equation}
\begin{equation}
    U_{D_i}(\delta) :=(1-\delta_i) r_i(\alpha_i) - \phi_i(\alpha_i;\alpha_0).
\end{equation}
If $G$ does not agree to a feasible bargain, she can instead opt for \textit{disagreement}, where $G$ receives utility $d_0$ and every specialist $D_i$ receives utility $d_i$. If any particular domain specialist $D_i$ does not agree to a feasible bargain, they may opt to receive $d_i$. However, this does not preclude other specialists from reaching a deal or adapting the technology. We assume, unless otherwise specified, that the disagreement scenario is described by $d_0=d_i=0$ for all $i$.

There are two possible ways to generalize the fine-tuning game to multiple specialists. The gameplay described above allows a separate bargain for every domain -- this might correspond to a scenario where the price for a model may vary depending on the user. An alternative generalization would restrict the pricing structure so that \textit{all} players must come to a \textit{single} bargaining parameter. Notice the multi-specialist fine-tuning game we define above is general enough that the single-bargain game is always a special case where $\delta_i=\delta \forall i$. We will demonstrate that both iterations of the multi-specialist game are manageable. We begin by demonstrating that the prior closed-form solutions are indeed attainable for multiple players before turning to general results on different sorts of domain-specialist strategies.




\subsection{Pareto-Optimal Bargains}

Characterizing the Pareto-optimal bargains is somewhat more challenging with multiple players. Below, we offer methods for identifying the Pareto set, starting in the case where the players bargain over a \textit{single} parameter $\delta$, and then describing the more general case where multiple bargaining values are feasible.

\textbf{Closed-form characterization in the case of a single bargaining parameter.} What is the Pareto set when every revenue sharing agreement is determined by a single parameter $\delta \in \mathbb{R}^1$?
First we show that even in cases where all players' utility functions are strictly unimodal, the set of Pareto-optimal bargains is no longer necessarily a single interval---as was the case with one specialist. Generalizing our findings about the set of Pareto-optimal bargaining solutions to multiple-specialists requires additional steps to exclude solutions that are Pareto-dominated within the region between players' optima.


Below we state the theorem that generalizes \ref{thm:unimodal-pareto} for the multi-player fine-tuning game.

\begin{theorem}
\label{thm:multi-specialist-pareto}
Consider the multi-player fine-tuning game where players' utility functions are strictly unimodal over the bargaining parameter $\delta$. The Pareto set of bargaining agreements $\delta^{\textit{Pareto}}$ is the feasible region consisting of all $\delta$ values in $[0,1]$ excluding:
\begin{itemize}
    \item all values of $\delta$ for which $U_G(\delta)<0$.
    \item all values of $\delta$ for which $U_G(\delta)$ is positive and increasing and no values of $U_{D_i}(\delta)$ are positive and decreasing $\forall i$.
    \item all values of $\delta$ for which $U_G(\delta)$ is positive and decreasing and no values of $U_{D_i}(\delta)$ are positive and increasing $\forall i$.
\end{itemize}
Disagreement is Pareto-optimal, if and only if all values of $\delta\in[0,1]$ are excluded through the above criteria.
\end{theorem}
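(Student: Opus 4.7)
My plan is to prove the theorem in two directions. The sufficiency direction—every excluded $\delta$ is infeasible or Pareto-dominated—relies on local perturbation arguments. The necessity direction—every non-excluded $\delta$ is Pareto-optimal—requires converting local optimality into a global statement, which is the main technical obstacle. Throughout, I would exploit strict unimodality to classify each player's utility at any point $\delta$ into one of three regimes (strictly increasing, strictly decreasing, or at the unique peak), with monotonicity extending globally on each side of the peak.

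For sufficiency, the first exclusion $U_G(\delta) < 0$ is immediate: $G$'s veto power makes any such $\delta$ infeasible, since $G$ strictly prefers the disagreement outcome with $d_0 = 0$. For the second exclusion (with $U_G$ positive and strictly increasing at $\delta$ and no $D_i$ positive and strictly decreasing), I would show that $\delta + \varepsilon$ Pareto-dominates $\delta$ for small $\varepsilon > 0$: $U_G$ strictly increases; each $D_i$ with $U_{D_i}(\delta) \leq 0$ was already at opt-out utility $d_i = 0$ and remains so after the perturbation; and each $D_i$ with $U_{D_i}(\delta) > 0$ is, by hypothesis, strictly increasing at $\delta$ (with the at-peak boundary case forming a measure-zero set to be handled separately). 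Hence no $D_i$ is strictly worse off while $G$ strictly improves. The third exclusion follows by a symmetric argument using $\delta - \varepsilon$.

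For necessity, I would assume $\delta^{\star}$ is non-excluded and, for contradiction, that some $\delta' \neq \delta^{\star}$ Pareto-dominates it; WLOG $\delta' > \delta^{\star}$. When $U_G(\delta') > U_G(\delta^{\star})$, strict unimodality forces $U_G$ to be strictly increasing at $\delta^{\star}$ (otherwise $U_G$ is at its peak or past it, which by unimodality gives $U_G(\delta') < U_G(\delta^{\star})$). Since $\delta^{\star}$ escapes the second exclusion, some $D_i$ is positive and strictly decreasing at $\delta^{\star}$; strict unimodality then keeps $U_{D_i}$ strictly decreasing over the entire interval $[\delta^{\star}, \delta']$, so $D_i$ strictly loses (and since the opt-out utility $0$ is strictly less than $U_{D_i}(\delta^{\star}) > 0$, opting out cannot rescue $D_i$), contradicting Pareto-dominance. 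The tied case $U_G(\delta') = U_G(\delta^{\star})$ with some $D_j$ strictly improving is handled analogously, using the third-exclusion escape condition applied to the direction favorable for $D_j$. The disagreement clause is immediate: if every $\delta \in [0,1]$ is excluded, no feasible bargain Pareto-dominates the all-zero disagreement outcome, so disagreement is Pareto-optimal by default.

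The main obstacle is the global-to-local lifting in the necessity step. Strict unimodality is essential—without it, a locally Pareto-improving direction could fail globally (or a locally-stuck point could still be globally dominated), and the ``find a $D_i$ with opposing monotonicity over the whole interval'' argument would break. Handling measure-zero boundary cases (at-peak utilities, $U_G = 0$ feasibility boundary, and $U_{D_i}$ crossing zero en route to $\delta'$ so that opt-out interacts with the monotonicity story) is a secondary technical nuisance that I would isolate into a short preliminary lemma before running the main argument.
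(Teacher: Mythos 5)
Your proposal is correct and follows essentially the same route as the paper: classify each $\delta$ by the sign/monotonicity regime of every player's utility, kill the excluded points by a small local perturbation that helps $G$ and hurts no participating specialist, and protect the non-excluded points by exhibiting a specialist whose utility moves opposite to $G$'s. If anything, your necessity step (lifting the local opposing-monotonicity condition to the whole interval $[\delta^{\star},\delta']$ via strict unimodality, and explicitly flagging the at-peak and opt-out boundary cases) is spelled out more carefully than the paper's one-line justification of the same claim.
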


A proof of the above theorem is provided in Appendix \ref{app:multi-pareto}. Figure \ref{fig:multi-specialist-pareto} provides an illustration of a 2-specialist fine-tuning game where the Pareto-optimal region is not a single interval but instead consists of two disconnected intervals. The finding highlights the added complexity of the multi-specialist fine-tuning game, where individual specialists can abstain but a deal can still be reached among other players. 

\begin{figure}
    \centering
    \includegraphics[width=0.46\linewidth]{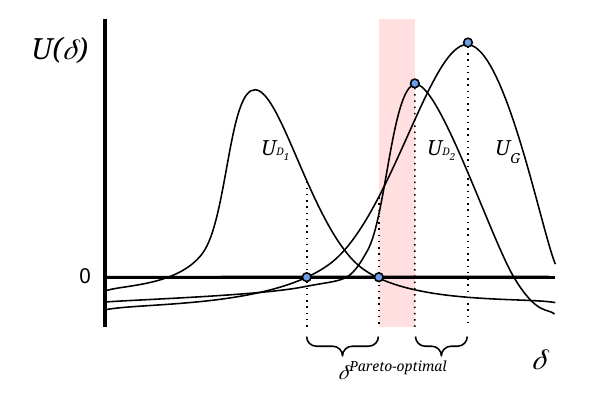}
    \caption{Illustration of Theorem \ref{thm:multi-specialist-pareto}. When there are multiple specialists, the set of Pareto-optimal bargaining agreements is not, necessarily, a single interval, even when utilities are strictly unimodal. In this illustrated example, the region in red represents values of $\delta$ for which all relevant players' utilities are increasing. In the red region, $D_1$ would abstain from the agreement. Any value of $\delta$ in this region is Pareto-dominated by $\delta^{\text{Powerful }D_2}$.}
    \label{fig:multi-specialist-pareto}
\end{figure}

\textbf{Numerical approach in the case of multiple bargaining parameters.}
Notice that the multi-specialist fine-tuning game no longer involves a bargain over a single parameter in general. Instead, the bargaining space may be represented by a vector, $\Vec{\delta}\in \mathbb{R}^n$, where $\Vec{\delta}:=[\delta_1,...,\delta_n]$. Thus, instead of a particular point value on an interval, a bargaining outcome can be represented geometrically as a point in $n$-dimensional space, where the bargaining solutions represent the coordinates. Where, in the one-specialist game, the Pareto set could be expressed as an \textit{interval} of values within $\delta \in [0,1]$, here, the Pareto Set may be enclosed by a more complex shape in higher dimensions. 
Observe that in the fine-tuning game with $n$ specialists, the Pareto-optimal solution can be phrased as the set of solutions to the following weighted optimization problem:
\begin{eqnarray*}
\max_{\Vec{\delta}} & w_0 U_G + \sum_{i=1}^n w_iU_{D_i}
\\
s.t. & \sum_{i=0}^n  w_i = 1 \\ 
\ & w_0,w_1,...,w_n \geq 0
\end{eqnarray*}

Every Pareto-optimal solution corresponds to a point on the standard $n$-simplex in $\mathbb{R}^{n+1}$ representing a weighting of each of the utility functions. The Pareto region can thus be identified numerically by computing the optimal $\vec\delta^*$ over the enumerated range of possible values of $\Vec{w}$ and $(\delta_1,...,\delta_n)$. We refer the reader to Figure \ref{fig:multi-specialist-bargain} for a visualization of the numerically-identified region of Pareto-optimal bargaining agreements.

Note that this technique does not rely on any assumptions about the shape or formal properties of players’ utility functions, and therefore suggests a broad approach for arriving at Pareto-optimal bargaining solutions to fine-tuning games, though in high dimensions (i.e., with many players) the computational burden will grow exponentially. 

While the procedure described above uses a numerical approach, in the remainder of this section, we use the specific case of quadratic cost functions to demonstrate that closed-form solutions are indeed attainable for player strategies and bargaining solutions.

\subsection{Closed-form Strategies and Bargaining Solutions} 

For the multi-specialist game, we demonstrate our results on a set of polynomial cost functions, defined below for $c_0,c_i>0$ and $k_0,k_i \geq 2$:
\begin{equation}
\label{eq:phi-0-multi}
    \phi_0(\alpha_0) := c_0\alpha_0^{k_0},
\end{equation}
\begin{equation}
\label{eq:phi-i-multi}
    \phi_i(\alpha_i;\alpha_0) := c_i(\alpha_i-\alpha_0)^{k_i}.
\end{equation}

The equilibria and bargaining solutions for this set of cost functions follow a similar set of steps to those demonstrated in the one-specialist fine-tuning game (Section \ref{sec:quadratic}). We include the full set of results in Table \ref{tab:multi-specialist-solutions}. We refer the reader to Appendix \ref{app:multi-solutions} for the full description of steps and results. The results suggest that meaningful bargaining solutions are possible for a wide set of potential domain specialties, so long as costs start out reasonably low.

\begin{table}
\centering
\begin{tabular}{l|l|l|l}
\hline
\hline
\textbf{} & \textbf{Single specialist} & \textbf{Single bargain} & \textbf{Multi-bargain} \\ \hline
$G$ Strategy  & \cellcolor{green!20}Theorem \ref{subgame-perfect-eq} & \cellcolor{green!20}Theorem \ref{thm:multi-specialist-equilibrium} & \cellcolor{green!20}Theorem \ref{thm:multi-multi-subgame-perfect-eq} \\ 
$D_i$ Strategy  & \cellcolor{green!20}Theorem \ref{subgame-perfect-eq} & \cellcolor{green!20}Theorem \ref{thm:multi-specialist-equilibrium} & \cellcolor{green!20}Theorem \ref{thm:multi-multi-subgame-perfect-eq} \\ \hline
$G$ Utility   & \cellcolor{green!20}Corollary \ref{cor:utilities-solved} & \cellcolor{green!20}Corollary \ref{cor:multi-specialist-utilities-solved} & \cellcolor{green!20}Corollary \ref{cor:multi-multi-utilities-solved} \\ 
$D_i$ Utility   & \cellcolor{green!20}Corollary \ref{cor:utilities-solved} & \cellcolor{green!20}Corollary \ref{cor:multi-specialist-utilities-solved} & \cellcolor{green!20}Corollary \ref{cor:multi-multi-utilities-solved} \\ \hline
Powerful $G$         & \cellcolor{green!20}Proposition \ref{prop:powerful-g} & \cellcolor{green!20}Proposition \ref{prop:powerful-g-multi} & \cellcolor{green!20}Proposition \ref{prop:multi-multi-powerful-g} \\ 
Powerful $D_i$       & \cellcolor{green!20}Proposition \ref{prop:powerful-d} & \cellcolor{green!20}Proposition \ref{prop:powerful-d-i-multi} & \cellcolor{green!20}Proposition \ref{prop:multi-multi-powerful-d} \\ 
VM (Utilitarian)     & \cellcolor{green!20}Proposition \ref{prop:vertical-monopoly-quadratic} & \cellcolor{green!20}Proposition \ref{prop:vertical-monopoly-multi-specialist} & \cellcolor{green!20}Proposition \ref{prop:vertical-monopoly-multi-multi} \\ 
Egalitarian          & \cellcolor{green!20}Proposition \ref{prop:kalai-quadratic}& \cellcolor{green!20}Proposition \ref{prop:egal-multi-specialist} & \cellcolor{yellow!20}Numerical   \\ 
Nash                 &  \cellcolor{yellow!20}Numerical   & \cellcolor{yellow!20}Numerical   & \cellcolor{yellow!20}Numerical   \\ 
Kalai-Smorodinsky    & \cellcolor{yellow!20}Numerical   & \cellcolor{yellow!20}Numerical   & \cellcolor{yellow!20}Numerical   \\ 
\hline
\hline
\end{tabular}
\caption{Summary of closed-form and numerical results. Each cell shaded in green corresponds to a closed-form solution, while each cell shaded yellow corresponds to numerical results. Though the equilibria strategies and utility functions are solved for general polynomial costs, the bargaining solutions are found in closed-form for cases where $k_0=k_i=2$.}
\label{tab:multi-specialist-solutions}
\end{table}

\begin{figure}
    \centering
    \includegraphics[width=.585\linewidth]{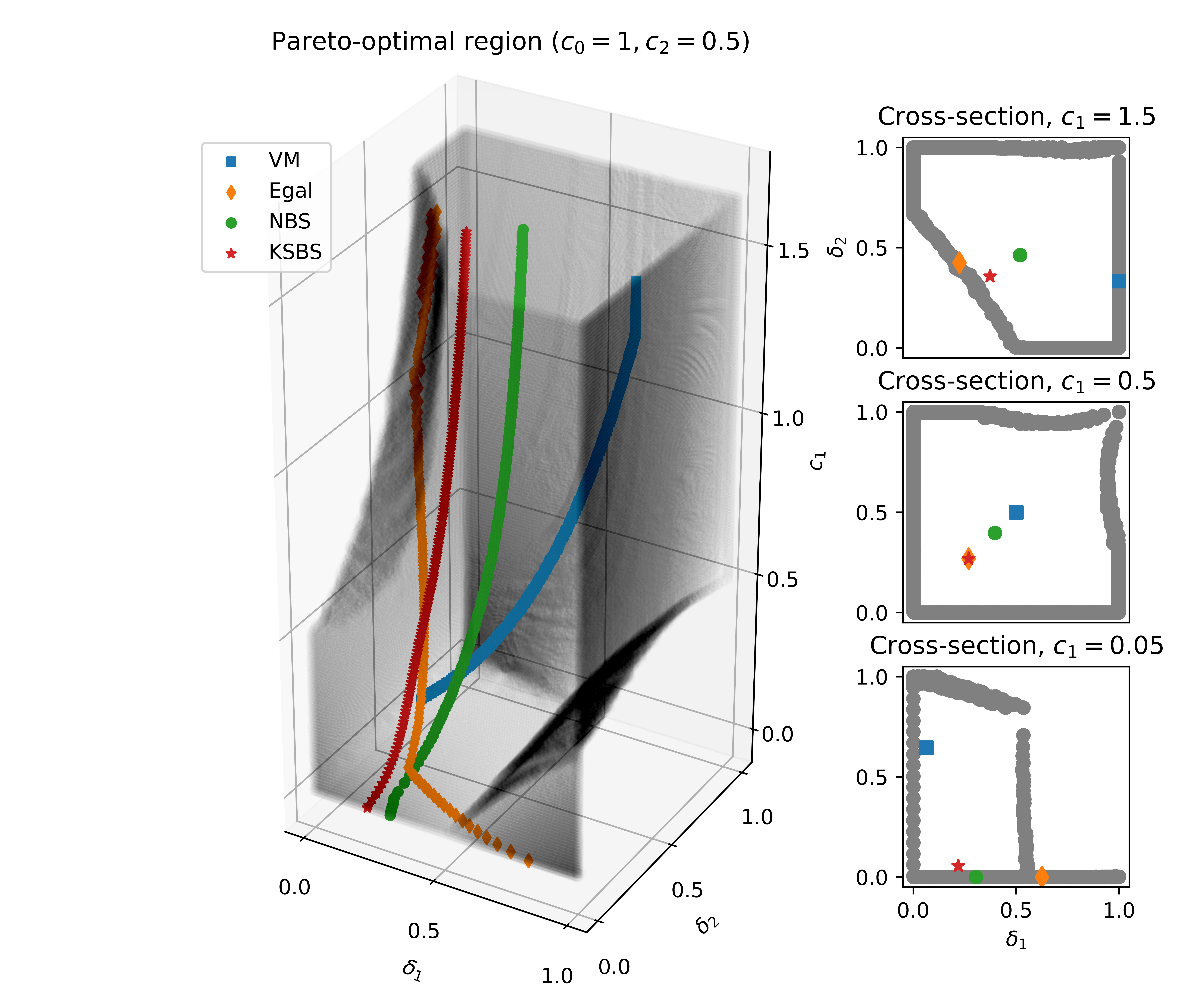}
    \vspace{-5mm}
    \caption{Pareto-optimal region and bargaining solution in a fine-tuning game with two specialists. In this example, the costs are quadratic, and the parameters $c_0=1, c_2=0.5$, and the cost parameter $c_1$ is varied over the interval $(0,1.6)$. The shape in gray bounding the set of bargaining solutions is the region of Pareto-optimal bargains, and the points plotted represent the Veritcal Monopoly (VM), Egalitarian (Egal), Nash (NBS), and Kalai-Smorodinsky (KSBS) bargaining solutions. This figure can be thought of as an extension of Figure \ref{fig:my_label} for the multi-specialist case. Note that the Egalitarian solution resides on the $\delta_2=0$ boundary for low values of $c_1$, and the VM solution resides on the $\delta_1=1$ boundary for high values of $c_1$.}
    \label{fig:multi-specialist-bargain}
\end{figure}

The slate of closed-form and numerical solutions are represented visually in Figure \ref{fig:multi-specialist-bargain}. The figure represents one example set of cost parameters to demonstrate that bargaining solutions can be identified numerically. In this example, we observe that the Pareto-optimal bargains are a connected region in $n$-dimensional space, and that bargaining solutions are contained within the Pareto set. Recall that, in the game with one specialist, all bargaining solutions were contained within $(0,1)$ for all positive finite cost parameters $c_0,c_1$. Now with multiple domain specialists, bargaining solutions including the Vertical Monopoly and Egalitarian solutions yield agreements where some players end up with all or nothing. Namely, the Vertical Monopoly solution arrives at certain agreements for which a specialist receives nothing, and the Egalitarian solution arrives at certain agreements for which the generalist receives nothing.

\subsection{Domain Specialists' Equilibrium Strategies}
\label{subsec:specialist-regimes}

When there are potentially many domains where a technology may prove useful or marketable, different strategies around investment levels and fine-tuning can arise. In some domains, a technology may be adopted `as-is' without significant additional investment or specialization. In other domains, it might be in everyone's interest for a technology to receive significant investment and specialization. Of course, in other domains, a technology might not be viable for any use at all. 
In this section, we explore the different sorts of cooperation (or non-cooperation) that can arise in domains with different characteristics. Our next general finding is a theorem on the various regimes of domain specialist strategies, depending on certain attributes of revenue and cost functions. 

First, we will offer a set of relevant definitions to help characterize the different possible regimes of strategies for the specialist. Then, we will state the formal theorem.

\begin{definition}[Contributor]
    A domain specialist $D_i$ is a \textbf{contributor} at the profit-sharing agreement $\delta_i$ if, given the generalist's optimal investment $\alpha_0$ at $\delta_i$, $D_i$'s optimal strategy is to bring the technology to performance $\alpha_i^*>\alpha_0$. 
\end{definition}
\begin{definition}[Free-rider]
    A  domain specialist $D_i$ is a \textbf{free-rider} at the profit-sharing agreement $\delta_i$ if, given the generalist's optimal investment $\alpha_0$ at $\delta_i$, $D_i$'s optimal strategy is to enter the deal without improving the technology's performance, so $\alpha_i^*=\alpha_0$.
\end{definition}
\begin{definition}[Abstainer]
    A domain specialist $D_i$ is an \textbf{abstainer} at the profit-sharing agreement $\delta$ if, given the generalist's optimal investment $\alpha_0$ at $\delta$, $D_i$'s optimal strategy is to exit the deal and opt for disagreement. 
\end{definition}

Notice that any specialist is inevitably either a contributor, a free-rider, or an abstainer. These three regimes span the possible strategies for $D_i$. Below, we outline conditions that characterize $D_i$'s strategy depending on their domain's cost and revenue $\{r_i,\phi_i\}$.

\begin{theorem}
    Suppose $G$ has produced a general-purpose technology operating at performance $\alpha_0$ and available at profit-sharing parameter $\delta_i$. For any specialist with utility unimodal in $\alpha_i$, the following conditions characterize their strategy, as shown in Table \ref{tab:specialist-regimes}.

    \begin{itemize}
        \item \textit{``Fixed Costs Under Control'' (FCUC)}: At zero investment ($\alpha_i=\alpha_0$), the domain specialist $i$'s cost is less than its share of the revenue. Formally, $r_i(\alpha_0)>\frac{1}{1-\delta_i}\phi_i(\alpha_0; \alpha_0)$.
        \item \textit{``Marginally Profitable Investment'' (MPI)}: At zero investment ($\alpha_i=\alpha_0$), a marginal investment from the domain specialist $i$ increases its revenue share more than its costs. Formally, $r_i'(\alpha_0) > \frac{1}{1-\delta_i}\phi_i'(\alpha_0; \alpha_0)$.
    \end{itemize}
    \vspace{-1mm}
    
    \begin{table}[h]
    \centering
    \begin{tabular}{c|c|c}
    \hline
    \hline
       ``Fixed Costs Under Control'' & ``Marginally Profitable Investment''   & Type of \\
       $r_i(\alpha_0)>\frac{1}{1-\delta}\phi_i(\alpha_0)$ & $r_i'(\alpha_0) > \frac{1}{1-\delta}\phi_i'(\alpha_0)$  & Specialist \\
       \hline
       T & T & Contributor \\
       T & F & Free-rider \\
       F & T & Contributor or Abstainer* \\
       F & F & Abstainer\\
       \hline
       \hline
    \end{tabular}
    \caption{Types of specialists. In the third case (*), marginal conditions alone do not determine whether the specialist contributes or abstains.}
    \label{tab:specialist-regimes}
\end{table}
    \label{thm:specialist-regimes}
\end{theorem}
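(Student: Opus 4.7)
The plan is to reduce everything to two statements about the specialist's utility
$U_{D_i}(\alpha_i) = (1-\delta_i) r_i(\alpha_i) - \phi_i(\alpha_i; \alpha_0)$ evaluated at the boundary point $\alpha_i = \alpha_0$. Multiplying FCUC through by $(1-\delta_i) > 0$ turns it into $U_{D_i}(\alpha_0) > 0$, i.e., free-riding (setting $\alpha_i = \alpha_0$) strictly beats the disagreement payoff $d_i = 0$. The same manipulation turns MPI into $U_{D_i}'(\alpha_0) > 0$ (derivative taken in $\alpha_i$), i.e., infinitesimally raising $\alpha_i$ above $\alpha_0$ strictly increases utility. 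Thus the two hypotheses control, respectively, the value and the slope of $U_{D_i}$ at $\alpha_i = \alpha_0$.

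Next, I would combine these with the unimodality hypothesis and the feasibility constraint $\alpha_i \geq \alpha_0$ to locate the constrained maximizer of $U_{D_i}$ on $[\alpha_0,\infty)$, call it $m$. If MPI holds, $U_{D_i}$ is still ascending at $\alpha_0$, so by strict unimodality $m > \alpha_0$; if MPI fails, unimodality forces $U_{D_i}$ to be non-increasing on $[\alpha_0,\infty)$, so $m = \alpha_0$. The specialist then enters and plays $\alpha_i^* = m$ if $U_{D_i}(m) > 0$, and otherwise abstains. Cross-tabulating with FCUC now fills in three of the four rows: in (T,T) we have $U_{D_i}(m) \geq U_{D_i}(\alpha_0) > 0$ with $m > \alpha_0$, so the specialist contributes; in (T,F) we have $m = \alpha_0$ with $U_{D_i}(\alpha_0) > 0$, so free-riding; in (F,F) we have $m = \alpha_0$ with $U_{D_i}(\alpha_0) \leq 0$, so abstention weakly dominates.

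The starred row (F,T) is the one that cannot be resolved by marginal information alone: we know $m > \alpha_0$, but FCUC fails so the specialist starts below the disagreement value at $\alpha_0$ and climbs from there; whether $U_{D_i}(m)$ ever crosses zero depends on the \emph{global} behavior of $r_i$ and $\phi_i$, not just their zeroth- and first-order data at $\alpha_0$. To justify the ``contributor or abstainer'' entry I would exhibit two cost/revenue pairs that agree on $r_i(\alpha_0),\ r_i'(\alpha_0),\ \phi_i(\alpha_0;\alpha_0),\ \phi_i'(\alpha_0;\alpha_0)$ but differ in the sign of $U_{D_i}(m)$, showing that no marginal refinement at $\alpha_0$ can distinguish the two outcomes.

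The main obstacle, as I see it, is not conceptual but rather handling edge conditions cleanly. One must translate ``strictly unimodal'' into usable one-sided derivative information at the boundary point $\alpha_0$ (which tacitly requires the differentiability already used in MPI), and one must fix a tie-breaking convention for the weak inequalities: e.g., in row (F,F) with $U_{D_i}(\alpha_0) = 0$, the specialist is indifferent between free-riding and abstaining, so the theorem is really classifying the generic strict case.
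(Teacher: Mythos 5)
Your proposal is correct and follows essentially the same route as the paper's proof: both reduce FCUC and MPI to the sign of $U_{D_i}(\alpha_0)$ and of $U_{D_i}'(\alpha_0)$ respectively, use unimodality to promote the local slope information to a global comparison on $[\alpha_0,\infty)$, and then cross-tabulate the four cases, leaving the (F,T) row undetermined by marginal data alone. If anything, you are slightly more explicit than the paper about where unimodality is actually needed (the ``MPI fails $\Rightarrow$ free-riding beats every contribution'' step) and about the tie-breaking at the weak inequalities, which is a welcome refinement rather than a divergence.
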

A proof of the above theorem is provided in Appendix \ref{app:regimes-proof}. The requirement that specialist utility is unimodal in $\alpha_i$ is, in our view, quite natural. It covers three possible scenarios: 1) utility is increasing with investment, 2) utility is decreasing with investment, or 3) utility increases with investment up to a certain point, beyond which any further investment is not cost-justified.

It is important to note that the three regimes defined in this section can describe a specialist's strategy in either the 1-specialist or multi-specialist fine-tuning game. In the 1-specialist case, the potential strategies describe counterfactual outcomes that depend on the particular cost and revenue functions of the specialist. In the multi-specialist game, the strategies are ways of grouping the domains and all can exist simultaneously.

One scenario portrayed in Table \ref{tab:specialist-regimes} does not determine cleanly which regime the specialist falls into. In the scenario labeled with an asterisk (*), fixed costs are not under control but it is marginally profitable to invest in the technology. At zero investment, the technology is not ready to bring to market profitably, and it is unclear only from the marginal return on an initial investment whether it is worthwhile for the specialist to invest. In this case, the technology is potentially viable with some non-zero effort or, alternatively, not viable for the domain at any level of investment. Though the marginal conditions do not tell us whether the specialist would contribute or abstain, we can identify their strategy as follows: If $(1-\delta_i)r_i(\alpha_i)-\phi_i(\alpha_i)$ has positive real roots (for values of $\alpha_i$ greater than $\alpha_0$), then $D_i$ would contribute. Otherwise, $D_i$ would abstain.

\begin{figure}
    \centering
    \includegraphics[width=.6\linewidth]{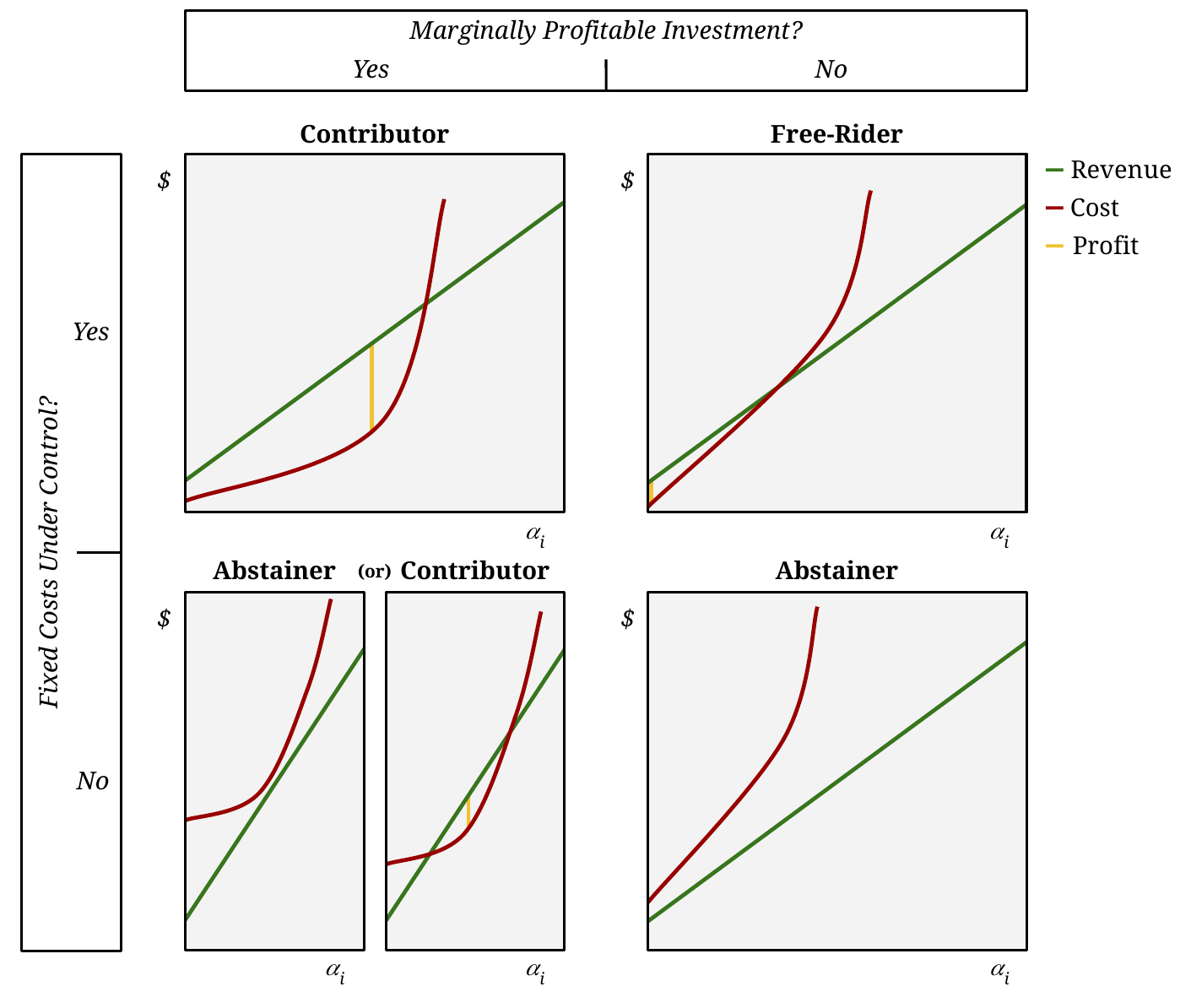}
    \vspace{-5mm}
    \caption{Examples illustrating Theorem \ref{thm:specialist-regimes}. Depending on the characteristics of the cost and revenue curves, a domain specialist might engage in different types of strategy. For instance, when fixed costs are under control but investment is not marginally profitable (upper right quadrant), the firm will free-ride. When fixed costs are too high but investment yields marginal returns (lower left), the firm either abstains, or contributes if revenue exceeds cost at any level of investment.}
    \label{fig:specialist-regimes}
    \vspace{-4mm}
\end{figure}

An illustration of Theorem \ref{thm:specialist-regimes} is provided in Figure \ref{fig:specialist-regimes}. 
A noteworthy feature of this result is that it allows us to identify particular adaptation strategies using only the attributes about the domain $i$ around $\alpha_i=\alpha_0$. In this setting, much of the information about the viability of a technology can be learned from only the 0th- and 1st-order approximations of $U_{D_i}\big\vert_{\alpha_i=\alpha_0}$, when the domain has invested minimal effort in the technology. This result perhaps coheres with the belief that a `minimum viable product' (MVP) can provide an important signal about the profitability of a technology \cite{blank2018lean, ries2011lean}. 

Our analysis helps explain why technologies see significant uptake in some domains and not others. It characterizes domains that are particularly suitable or unsuitable to adopt a general-purpose technology. It also may explain why some technologies are re-sold without additional investment while others require fine-tuning.

\section{Discussion}
\label{sec:discussion}

Our theoretical model of fine-tuning provides a way of understanding the process whereby firms adapt a general-purpose technology to a particular domain. Notably, the model highlights some of the \textit{normative} considerations in these arrangements. There are many potential mechanisms for distributing surplus between technological developers. Different bargaining solutions can highlight the ways that developers share upsides, with implications for a technology's performance (i.e., the value provided to the consumer). 

Our model provides a starting point for considering the different interests and choices involved in the development of general-purpose models. By putting forward this model, we attempt to invoke the political economy of the development of general AI technologies. These technologies are produced by a number of entities with different interests, and may potentially affect many individuals. This paper models agents' different interests explicitly, and proposes different methods for weighing between them in light of societal values. 

Further, our model may prove useful in suggesting economic and incentive-based levers that might be available to the regulatory process. Interventions and incentives could help align the production of general-purpose models with social welfare.

\subsection{Conclusion} 

In this work, we introduced the notion of a {\em fine-tuning game} as a way of modeling the economic interaction between a firm creating a general-purpose technology and one or more additional firms that specialize the technology to specific domains. We have shown how this formalism leads to an interesting class of bargaining games with rich structure, with insights into how the relative capabilities of the different parties translate into the effort invested in development and the eventual performance of the technology. 

The work suggests a number of interesting directions for further research. One direction is to identify further general existence results for bargaining solutions with general functions in this model. More broadly, we also believe that formalizing the societal interests involved in AI regulation is an important direction; such a formalism would need to build on an underlying model that contains the economic interests of the firms producing the AI technology. Our model may therefore help form the foundation for such work. 
\section{Acknowledgements}



The authors would like to thank the members of the AI, Policy and Practice group (AIPP) at Cornell University, the Center for Data Science (CDS) at NYU, the Center for Discrete Mathematics and Theoretical Computer Science (DIMACS) at Rutgers University, the Digital Life Initiative (DLI) at Cornell Tech, the EconCS group at Harvard University, and the Human + Machine Decisions Group at MIT for providing us the opportunity to present this work and for their feedback. In particular, we thank Sarah Cen, Yiling Chen, Nikhil Garg, James Grimmelmann, Karen Levy, Helen Nissenbaum, Kenny Peng, Manish Raghavan, Yoav Wald, and Lily Xu for illuminating conversations and suggestions. 

The work is supported in part by a grant from the John D. and Catherine T. MacArthur Foundation. Ben Laufer is additionally supported by a LinkedIn-Bowers CIS PhD Fellowship, a doctoral fellowship from DLI, and a SaTC NSF grant CNS-1704527. Jon Kleinberg is additionally supported by a Vannevar Bush Faculty Fellowship, AFOSR award FA9550-19-1-0183, and a grant from the Simons Foundation. Hoda Heidari acknowledges support from NSF (IIS-2040929 and IIS-2229881) and PwC (through the Digital Transformation and Innovation Center at CMU). Any opinions, findings, conclusions, or recommendations expressed in this material are those of the authors and do not reflect the views of NSF or other funding agencies.

\bibliographystyle{ACM-Reference-Format}
\bibliography{sample}

\section{Section 2 Materials}
\label{app:section-2}

\subsection{Pareto set characterization and Theorem \ref{thm:unimodal-pareto}}
\begin{proof}{Proof of Theorem \ref{thm:unimodal-pareto}.}
    Consider three non-overlapping intervals that collectively span the feasible set $\delta \in [0,1]$. These intervals are: 
    \begin{enumerate}
        \item $0\leq \delta < \min(\delta^{\text{Powerful }D},\delta^{\text{Powerful }G})$
        \item $\min(\delta^{\text{Powerful }D},\delta^{\text{Powerful }G}) \leq \delta \leq \max(\delta^{\text{Powerful }D},\delta^{\text{Powerful }G})$
        \item $\max(\delta^{\text{Powerful }D},\delta^{\text{Powerful }G}) < \delta \leq 1$
    \end{enumerate}
    We will characterize each of these intervals in turn, finding that intervals (1) and (3) are always Pareto dominated, and interval (2) is characterized by a trade-off in utilities.
    \begin{enumerate}
        \item  Within interval (1), the domain is characterized by $\delta < \min(\delta^{\text{Powerful }D},\delta^{\text{Powerful }G}) \Rightarrow \delta < \delta^{\text{Powerful }D} $ and $ \delta < \delta^{\text{Powerful }G}$. By the definition of a strictly unimodal function (\ref{def:strict-unimodal}), this means that both utility functions $\{U_D,U_G\}$ are strictly increasing over interval 1. Thus, there exists some quantity $\epsilon>0$ such that, for any value $\delta$ in interval (1), $U_D(\delta+\epsilon)>U_D(\delta)$ and $U_G(\delta+\epsilon)>U_G(\delta)$. Thus, every potential agreement in interval (1) is Pareto-dominated.
        \item Within interval (2), the domain is characterized by $\min(\delta^{\text{Powerful }D},\delta^{\text{Powerful }G}) \leq \delta$, and also $\delta \leq \max(\delta^{\text{Powerful }D},\delta^{\text{Powerful }G})$. If $\delta^{\text{Powerful }D}=\delta^{\text{Powerful }G}$, then the value $\delta=\delta^{\text{Powerful }D}=\delta^{\text{Powerful }G}$ is the unique Pareto-optimal agreement because it is optimal for both players. Otherwise if $\delta^{\text{Powerful }D} \neq \delta^{\text{Powerful }G}$, then interval (2) can be characterized as follows: For one player $P\in\{G,D\}$, the utility $U_P$ one utility function is strictly decreasing because $\delta \geq \delta^{\text{Powerful }P}$ and $U_P(\delta)$ is a strictly unimodal function. For the other player $\{G,D\}\setminus P$, the utility $U_{\{G,D\}\setminus P}$ is strictly increasing because $\delta \leq \delta^{\text{Powerful }\{G,D\}\setminus P}$ and $U_{\{G,D\}\setminus P}(\delta)$ is a strictly unimodal function. Since one player's utility is strictly increasing and the other's is strictly decreasing, any perturbation of $\delta$ within interval (2) constitutes a utility gain for one player and a utility loss for the other. For any value of $\delta$ within this interval, if both players' utilities exceed the disagreement payoff (i.e., positive utility), then $\delta$ is Pareto-optimal.
        \item  Within interval (3), the domain is characterized by $\delta > \max(\delta^{\text{Powerful }D},\delta^{\text{Powerful }G}) \Rightarrow \delta > \delta^{\text{Powerful }D} $ and $ \delta > \delta^{\text{Powerful }G}$. By the definition of a strictly unimodal function (\ref{def:strict-unimodal}), this means that both utility functions $\{U_D,U_G\}$ are strictly decreasing over interval (3). Thus, there exists some quantity $\epsilon>0$ such that, for any value $\delta$ in interval (3), $U_D(\delta-\epsilon)>U_D(\delta)$ and $U_G(\delta-\epsilon)>U_G(\delta)$. Thus, every potential agreement in interval (3) is Pareto-dominated.
    \end{enumerate}
Thus interval (2) is Pareto-efficient among the set of feasible bargaining agreements.
\end{proof}

\section{Section 3 Materials (Polynomial fine-tuning game)}
\label{app:section-3}

\subsection{Subgame perfect equilibrium findings}

\begin{proof}{Proof of Theorem \ref{subgame-perfect-eq}.}
\label{app:subgame-perfect-polynomial-proof}
We solve the game using backward induction as follows: 

First, starting with the last stage (3), we solve for $\alpha_1^*$ given $\alpha_0,\delta,c_1$: 
\begin{eqnarray}\label{eq:alpha_i^*}
&& \alpha_1^* = \text{argmax}_{\alpha_1} U_{D}(\alpha_1,\alpha_0,\delta) \nonumber \\
\Rightarrow && \frac{\partial U_D}{\partial \alpha_1}  \bigg\vert_{\alpha_1= \alpha_1^*}= 0 \nonumber \\
\Rightarrow &&\frac{\partial}{\partial \alpha_1} \left((1-\delta) \alpha_1 - c_1(\alpha_1-\alpha_0)^{k_1}\right) \bigg\vert_{\alpha_1= \alpha_1^*}= 0 \nonumber \\
\Rightarrow && (1-\delta)-k_1c_1(\alpha_1^*-\alpha_0)^{k_1-1}=0 \nonumber\\
\Rightarrow && \alpha_1^* = \alpha_0 + \left(\frac{1-\delta}{k_1c_1}\right)^{\frac{1}{k_1-1}}.
\end{eqnarray}
Note that $\frac{\partial^2 U_D}{\partial \alpha_1^2}=-k_1(k_1-1)c_1(\alpha_1-\alpha_0)^{k_1-2}$. This quantity is negative as long as $k>1$, which is assumed. Thus, the $\alpha_1^*$ derived above yields a global maximum of $U_D$.

Second, knowing $D$'s choice of $\alpha_1^*$ above, we solve for $\alpha_0^*$ as follows:
\begin{eqnarray*}
    && \alpha_0^* = \text{argmax}_{\alpha_0} U_{G}(\alpha_0, \delta) \\
    \Rightarrow && \frac{\partial U_{G}}{\partial \alpha_0}\bigg\vert_{\alpha_0= \alpha_0^*} = 0 \\
    \Rightarrow &&  \frac{\partial }{\partial \alpha_0}\left(\delta \alpha_1^* - c_0\alpha_0^{k_0}\right) \bigg\vert_{\alpha_0= \alpha_0^*} = 0 \\
    \Rightarrow &&  \frac{\partial }{\partial \alpha_0}\left(\delta \left(\alpha_0 + \left(\frac{1-\delta}{k_1c_1}\right)^{\frac{1}{k_1-1}}\right) - c_0\alpha_0^{k_0}\right) \bigg\vert_{\alpha_0= \alpha_0^*} = 0 \\
    \Rightarrow &&  \frac{\partial }{\partial \alpha_0}\left(\delta \alpha_0 + [\text{const}] - c_0\alpha_0^{k_0}\right) \bigg\vert_{\alpha_0= \alpha_0^*} = 0 \\
\Rightarrow &&  \delta - k_0c_0(\alpha_0^*)^{k_0-1} = 0 \\
\Rightarrow &&  \alpha_0^* = \left(\frac{\delta}{k_0c_0}\right)^{\frac{1}{k_0-1}}. \\
\end{eqnarray*}
The second derivative $\frac{\partial^2 U_{G}}{\partial \alpha^2}=-k_0(k_0-1)c_0(\alpha_0)^{k_0-2}$. This quantity is negative as long as $k>1$, which is assumed. Thus, the value of $\alpha_0^*$ derived above yields a global maximum of $U_G$.

Finally, plugging in  $ \alpha_0^* = \left(\frac{\delta}{k_0c_0}\right)^{\frac{1}{k_0-1}}$ into Equation \ref{eq:alpha_i^*}, we obtain the following expression for $\alpha_1^*$ as a function of $\delta$ only: 
$$\alpha_1^* =  \left(\frac{\delta}{k_0c_0}\right)^{\frac{1}{k_0-1}} + \left(\frac{1-\delta}{k_1c_1}\right)^{\frac{1}{k_1-1}}.$$
\end{proof}

\subsection{Utilities as a function of $\delta$}
\begin{proof}{Proof of Corollary \ref{cor:utilities-solved}}
Plugging the formulas from Theorem \ref{subgame-perfect-eq} into Equation \ref{u-g}, we obtain:
\begin{eqnarray*}
    U_G = && \delta \alpha_1 - \phi_0(\alpha_0) \\
    = && \delta \left(\left(\frac{\delta}{k_0c_0}\right)^{\frac{1}{k_0-1}} + \left(\frac{1-\delta}{k_1c_1}\right)^{\frac{1}{k_1-1}}\right) - c_0\left(\left(\frac{\delta}{k_0c_0}\right)^{\frac{1}{k_0-1}} \right)^{k_0} \\
    = && \delta\left(\frac{\delta}{k_0c_0}\right)^{\frac{1}{k_0-1}}+\delta\left(\frac{1-\delta}{k_1c_1}\right)^{\frac{1}{k_1-1}}-c_0\left(\frac{\delta}{k_0c_0}\right)^{\frac{k_0}{k_0-1}} \\
    = && \left[\left(\frac{1}{k_0c_0}\right)^{\frac{1}{k_0-1}}-c_0\left(\frac{1}{k_0c_0}\right)^{\frac{k_0}{k_0-1}}\right]\delta^{\frac{k_0}{k_0-1}}+ \left(\frac{1}{k_1c_1}\right)^{\frac{1}{k_1-1}}\delta(1-\delta)^{\frac{1}{k_1-1}} \\
    = && \left(\frac{1}{k_0c_0}\right)^{\frac{1}{k_0-1}}\left(1-\frac{1}{k_0}\right)\delta^{\frac{k_0}{k_0-1}} + \left(\frac{1}{k_1c_1}\right)^{\frac{1}{k_1-1}}\delta(1-\delta)^{\frac{1}{k_1-1}}.\\
\end{eqnarray*}

Plugging the formulas from Theorem \ref{subgame-perfect-eq} into Equation \ref{u-g}, we obtain:  
\begin{eqnarray*}
    U_D = && (1-\delta) \alpha_1 - \phi_i(\alpha_1;\alpha_0) \\
    = && (1-\delta)\left[\left(\frac{\delta}{k_0c_0}\right)^{\frac{1}{k_0-1}}+\left(\frac{1-\delta}{k_1c_1}\right)^{\frac{1}{k_1-1}}\right] - c_1\left[\left(\frac{\delta}{k_0c_0}\right)^{\frac{1}{k_0-1}} + \left(\frac{1-\delta}{k_1c_1}\right)^{\frac{1}{k_1-1}} - \left(\frac{\delta}{k_0c_0}\right)^{\frac{1}{k_0-1}}\right]^{k_1}\\
     = && (1-\delta)\left(\frac{\delta}{k_0c_0}\right)^{\frac{1}{k_0-1}}+(1-\delta)\left(\frac{1-\delta}{k_1c_1}\right)^{\frac{1}{k_1-1}} - c_1\left(\frac{1-\delta}{k_1c_1}\right)^{\frac{k_1}{k_1-1}} \\
     = && \left(\frac{1}{k_0c_0}\right)^{\frac{1}{k_0-1}}(1-\delta)\delta^{\frac{1}{k_0-1}} 
     + \left(\frac{1}{k_1c_1}\right)^{\frac{1}{k_1-1}}(1-\delta)^{\frac{k_1}{k_1-1}} 
     - \left(\frac{1}{k_1c_1}\right)^{\frac{1}{k_1-1}}\left(\frac{1}{k_1}\right)(1-\delta)^{\frac{k_1}{k_1-1}} \\
     = && 
     \left(\frac{1}{k_1c_1}\right)^{\frac{1}{k_1-1}}\left(1-\frac{1}{k_1}\right)(1-\delta)^{\frac{k_1}{k_1-1}}
     + \left(\frac{1}{k_0c_0}\right)^{\frac{1}{k_0-1}}(1-\delta)\delta^{\frac{1}{k_0-1}}. \\ 
\end{eqnarray*}
\end{proof}

\subsection{Utilities are strictly unimodal functions of $\delta$}
\label{app:unimodal-one-specialist}

\begin{proof}{Proof of Proposition \ref{obs:polynomial-unimodal}.} We'll start by proving $U_G$ is strictly unimodal, and then extend the results to $U_D$. 

\textbf{Beginning with $U_G$:} The proof relies on the following Lemma:

    \begin{lemma}
    \label{unimodal-lemma}
        A differentiable continuous function f($\delta$) is strictly unimodal over $\delta \in [a,b]$ if the following conditions are met: \textbf{i)} $f'(a) > 0$, \textbf{ii)} $f'$ is concave over the domain.
    \end{lemma}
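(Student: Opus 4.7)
\textbf{Proof proposal for Lemma \ref{unimodal-lemma}.} The plan is to show that the two hypotheses force $f'$ to be strictly positive on an initial sub-interval $[a,m]$ and strictly negative on the complementary sub-interval $[m,b]$ for some $m \in (a,b]$, which by Definition \ref{def:strict-unimodal} gives strict unimodality of $f$. The workhorse is the standard fact that a concave function has convex super-level sets, so $\{\delta \in [a,b] : f'(\delta) > 0\}$ is an interval, and since $f'(a) > 0$ this interval contains $a$.

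There are two cases to dispose of. If $f'(\delta) > 0$ throughout $[a,b]$, then $f$ is strictly increasing on the whole interval, which matches Definition \ref{def:strict-unimodal} with the degenerate mode $m = b$. Otherwise, invoking continuity of $f'$ on the interior of $[a,b]$ (a standard consequence of concavity), there is a smallest $m \in (a,b]$ with $f'(m) = 0$, and $f'(\delta) > 0$ for every $\delta \in [a,m)$; hence $f$ is strictly increasing on $[a,m]$.

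The crux of the argument — and the only step where concavity is really used — is to rule out that $f'$ could plateau at zero or rebound to positive values past $m$. For any $\delta \in (m, b]$, I would apply the concavity inequality at the three ordered points $a < m < \delta$, namely
\[
f'(m) \;\geq\; \tfrac{\delta - m}{\delta - a}\, f'(a) \;+\; \tfrac{m - a}{\delta - a}\, f'(\delta).
\]
Substituting $f'(m) = 0$ and solving for $f'(\delta)$ yields $f'(\delta) \leq -\tfrac{\delta - m}{m - a}\, f'(a) < 0$, where strictness comes from $f'(a) > 0$, $\delta > m$, and $m > a$. So $f$ is strictly decreasing on $[m,b]$, and combined with the first part of the argument, $f$ is strictly unimodal with mode $m$. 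The main obstacle is essentially this single three-point concavity inequality: once it is set up with the right choice of comparison point $a$ (where positivity is guaranteed by hypothesis), the lemma follows immediately; the only technical care is ensuring $f'$ is continuous enough for a smallest zero to be well-defined, which holds by standard regularity of concave functions.
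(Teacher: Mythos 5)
Your proof is correct and rests on the same key step as the paper's: the three-point concavity inequality for $f'$ anchored at the left endpoint $a$, where $f'(a)>0$ is guaranteed by hypothesis. The only difference is presentational --- the paper argues by contradiction that $f'$ cannot vanish at two points of $(a,b)$, whereas you argue directly that $f'$ is strictly negative at every point past its first zero $m$; your direct version is in fact slightly tighter, since it explicitly rules out $f'$ touching zero and rebounding to positive values, a case the paper's two-zero contradiction handles only implicitly.
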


\begin{proof}{Proof of Lemma \ref{unimodal-lemma}.}
    By the definition of strict unimodality, we can conclude a function $f(\delta)$ is strictly unimodal over an interval $[a,b]$ if one of the following properties hold: 1) $f'(\delta)>0\ \forall\ \delta \in [0,1]$, meaning the function is strictly increasing over the interval, or 2) For some value $c \in (a,b)$, the function is strictly increasing for values $[a,c)$ and strictly decreasing for values $(c,b]$. Notice that, so long as condition (\textbf{i}) holds (i.e., the function starts out strictly increasing at $a$), the function is strictly unimodal as long as its derivative crosses the $f'=0$ axis at \textit{no more than one} point in $[a,b]$. So, the remainder of the proof finds a contradiction when we assume conditions (\textbf{i}) and (\textbf{ii}) and that there are two values in $(a,b)$ for which $f'=0$. 

    \begin{figure}[h]
        \centering
        \includegraphics[width=0.65\linewidth]{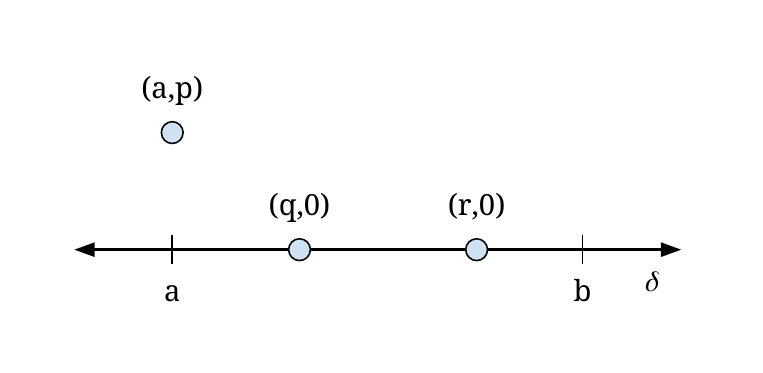}
        \caption{Illustration of the proof for Lemma \ref{unimodal-lemma}. If the derivative of a function $f$ is positive at $a$ and concave, it cannot cross the axis more than 1 time, meaning $f$ is strictly unimodal.}
        \label{fig:enter-label}
    \end{figure} 
    
    Consider the curve $f'=\frac{df}{d\delta}$. We specify a point on this curve using $\{ (x,y) | x= \delta , y = \frac{df}{d \delta}\}$. Given condition (\textbf{i}), There is some point $x_0 = (a,p)$ where $p>0$ and $x_0 \in \frac{d f}{d \delta}$. Assume for the sake of contradiction that there are two points $x_1 = (q,0), x_2 = (r,0)$ where $x_1,x_2 \in \frac{d f}{d \delta}$ and $a<q<r<b$ (without loss of generality). We can plug these three points into the definition of concavity and find our contradiction: First, notice $l=\frac{q-a}{r-a} \in (0,1)$ because $a<q<r$. Next, plugging in the definition of concavity: 
    \begin{eqnarray*}
        \frac{d f}{d \delta}((1-l)a + lr) & \geq & (1-l)\frac{d f}{d \delta}(a)+l\frac{d f}{d \delta}(r) \\
        \frac{d f}{d \delta}\left(\left(1-\frac{q-a}{r-a}\right)a + \frac{q-a}{r-a}r\right) & \geq & \left(1-\frac{q-a}{r-a}\right)[p] + \frac{q-a}{r-a}[0] \\
        \frac{d f}{d \delta}\left(a-\frac{q-a}{r-a}a + \frac{q-a}{r-a}r\right) & \geq & p-\frac{q-a}{r-a}p \\
        \frac{d f}{d \delta}\left(a + (r-a)\frac{q-a}{r-a}\right) & \geq & p-\frac{q-a}{r-a}p \\
        \frac{d f}{d \delta}\left(a + (r-a)\frac{q-a}{r-a}\right) & \geq & p-\frac{q-a}{r-a}p \\
        \frac{d f}{d \delta}\left(a + (q-a)\right) & \geq & p-\frac{q-a}{r-a}p \\
        \frac{d f}{d \delta}\left(q\right) & \geq & p-\frac{q-a}{r-a}p \\
         0 & \geq & p-\frac{q-a}{r-a}p > 0. \\
    \end{eqnarray*} Hence the contradiction: We know $p-\frac{q-a}{r-a}p>0$ is strictly greater than 0 because $p>0$ and $l\in(0,1)$. Thus a function characterized by conditions (\textbf{i}) and (\textbf{ii}) cannot contain these three points. This concludes Lemma \ref{unimodal-lemma}'s proof: If  $f'$ is concave, and starts positive at $a$, it must cross the axis at most once meaning $f$ is unimodal.
\end{proof}

\noindent
    Now, we prove the utilities are unimodal by showing that $U_G(\delta)$ satisfies both conditions in Lemma \ref{unimodal-lemma} for the domain $\delta \in [0,1]$.

    We first differentiate $U_G$ with respect to $\delta$. 

    \begin{eqnarray*}
        \frac{\partial U_G}{\partial \delta} = & \frac{\partial}{\partial \delta}\left[\left(\frac{1}{k_0c_0}\right)^{\frac{1}{k_0-1}}\left(1-\frac{1}{k_0}\right)\delta^{\frac{k_0}{k_0-1}} + \left(\frac{1}{k_1c_1}\right)^{\frac{1}{k_1-1}}\delta(1-\delta)^{\frac{1}{k_1-1}}\right] \\
        = & \frac{\partial}{\partial \delta}\left[A\delta^{\frac{k_0}{k_0-1}} + B \delta(1-\delta)^{\frac{1}{k_1-1}}\right],
    \end{eqnarray*}
    where $A := \left(\frac{1}{k_0c_0}\right)^{\frac{1}{k_0-1}}\left(1-\frac{1}{k_0}\right) > 0$ and $B := \left(\frac{1}{k_1c_1}\right)^{\frac{1}{k_1-1}} > 0$. $A$ is positive as long as $k_0>1$ and $c_0>0$, which is given. $B$ is positive as long as $k_1>1$ and $c_1>0$, which is given. Continuing:
    \begin{eqnarray*}
    = & A\left(\frac{k_0}{k_0-1}\right)\delta^{\frac{1}{k_0-1}} + B \left[-\delta \left(\frac{1}{k_1-1}\right)(1-\delta)^{\frac{1}{k_1-1}-1} + (1-\delta)^{\frac{1}{k_1-1}}\right]\\
    = & A\left(\frac{k_0}{k_0-1}\right)\delta^{\frac{1}{k_0-1}} + \frac{(1-\delta)^{\frac{1}{k_1-1}-1}}{k_1-1}\left[-\delta+(k_1-1)(1-\delta)\right]\\
    = & A\left(\frac{k_0}{k_0-1}\right)\delta^{\frac{1}{k_0-1}} + \frac{(1-\delta)^{\frac{1}{k_1-1}-1}}{k_1-1}\left[k_1 - k_1\delta - 1 + \delta - \delta\right]\\
    = & A\left(\frac{k_0}{k_0-1}\right)\delta^{\frac{1}{k_0-1}} + \frac{(1-\delta)^{\frac{1}{k_1-1}-1}}{k_1-1}\left[k_1(1-\delta) - 1\right]\\
    \frac{\partial U_G}{\partial \delta} = & A\left(\frac{k_0}{k_0-1}\right)\delta^{\frac{1}{k_0-1}} + 
    B \left(\frac{k_1}{k_1-1}\right) (1-\delta)^{\frac{1}{k_1-1}} -
    B \left(\frac{1}{k_1-1}\right) (1-\delta)^{\frac{1}{k_1-1}-1}.\\
    \end{eqnarray*}
    Now we can show the first condition (a) in Lemma \ref{unimodal-lemma} holds: 

    $$\frac{\partial U_G}{\partial \delta} \bigg\vert_{\delta=0} = [0] + B\left(\frac{k_1}{k_1-1}\right) - B\left(\frac{1}{k_1-1}\right) = B\left(\frac{k_1-1}{k_1-1}\right) = B > 0.$$

    To show the second condition (b) in Lemma \ref{unimodal-lemma} holds, we perform the second-derivative test, which requires differentiating the function two more times:

    \begin{eqnarray*}
        \frac{\partial^3 U_G}{{\partial \delta}^3} =  & A\left(\frac{k_0}{k_0-1}\right)\left(\frac{1}{k_0-1}\right)\left(\frac{1}{k_0-1}-1\right)\delta^{\frac{1}{k_0-1}-2} \\
        & + B \left(\frac{k_1}{k_1-1}\right) \left(\frac{1}{k_1-1}\right) \left(\frac{1}{k_1-1}-1\right) (1-\delta)^{\frac{1}{k_1-1}-2} \\
        & - B \left(\frac{1}{k_1-1}\right) \left(\frac{1}{k_1-1}-1\right) \left(\frac{1}{k_1-1}-2\right) (1-\delta)^{\frac{1}{k_1-1}-3}.
    \end{eqnarray*}
    
    The above expression is never positive. First, notice all three coefficients are less than or equal to zero:
    \begin{itemize}
        \item $A(\frac{k_0}{k_0-1})(\frac{1}{k_0-1})(\frac{1}{k_0-1}-1)$ is the product of one negative and otherwise non-negative numbers: Given $k_0\geq 2$, observe $A>0$, $(\frac{k_0}{k_0-1})>0$, $(\frac{1}{k_0-1})>0$, $(\frac{1}{k_0-1}-1)\leq0$. 
        \item $B(\frac{k_1}{k_1-1})(\frac{1}{k_1-1})(\frac{1}{k_1-1}-1)$ is the product of one negative and otherwise non-negative numbers: Given $k_1\geq 2$, observe $B>0$, $(\frac{k_1}{k_1-1})>0$, $(\frac{1}{k_1-1})>0$, $(\frac{1}{k_1-1}-1)\leq0$.
        \item $-B(\frac{1}{k_1-1})(\frac{1}{k_1-1}-1)(\frac{1}{k_1-1}-2)$ is the product of three negative and otherwise non-negative numbers: Given $k_1\geq 2$, observe $-B<0$, $(\frac{1}{k_1-1})>0$, $(\frac{1}{k_1-1}-1)\leq0$,$(\frac{1}{k_1-1}-1)<0$.
    \end{itemize}
    Second, notice all three expressions of $\delta$ are defined and positive on the interval $(0,1)$:
    \begin{itemize}
        \item $\delta^{\frac{1}{k_0-1}-2}$ is positive and defined $\forall \delta>0$.
        \item $(1-\delta)^{\frac{1}{k_1-1}-2}$ is positive and defined $\forall \delta<1$.
        \item $(1-\delta)^{\frac{1}{k_1-1}-3}$ is positive and defined $\forall \delta<1$.
    \end{itemize}
    Every term in our derived expression for $\frac{\partial^3 U_G}{{\partial U_G}^3}$ is non-positive. Thus the function $\frac{\partial U_G}{\partial U_G}$ is concave satisfying condition (b) in Lemma \ref{unimodal-lemma}. This completes the proof that $U_G$ is unimodal.

    \textbf{Moving on to $U_D$:} Notice the formulation of $U_G$ in Equation \ref{u-g-polynomial} is almost exactly the same functional form as $U_D$ in Equation \ref{u-d-polynomial}. If we define a variable $\gamma=(1-\delta)$, we can use the identical proof completed above to show $U_D$ is unimodal in $\gamma$. Since we prove unimodality on the interval $[0,1]$, a function defined over $\gamma \in [0,1]$ is simply a function of $\delta\in[0,1]$ reflected over the vertical line $\delta=0.5$. A transform that reflects a univariate function over the vertical line passing through the midpoint of its domain preserves strict unimodality.
\end{proof}

\subsection{Powerful-G Bargaining Solution}
\label{subsec:powerful-g-single}
\begin{proof}{Proof of Proposition \ref{prop:powerful-g}.}
The powerful-$G$ solution is the solution $\delta^{\textit{Powerful }G}$ that maximizes $U_G$ over the feasible set of $\delta \in [0,1]$:
\begin{eqnarray*}
    & \delta^{\textit{Powerful }G} = \text{argmax}_{\delta} U_G(\delta) \\
    \Rightarrow & \frac{\partial U_G}{\partial \delta} = 0 \\
    \Rightarrow & \frac{\partial}{\partial \delta} \left[\frac{\delta^2}{4c_0}+\frac{\delta}{2c_1}-\frac{\delta^2}{2c_1}\right]= 0 & \textit{Corr. \ref{cor:utilities-solved}}\\
    \Rightarrow & \frac{\delta}{2c_0} + \frac{1}{2c_1} - \frac{\delta}{c_1} = 0 \\
    \Rightarrow & \delta \left(\frac{1}{2c_0}-\frac{1}{c_1}\right) = -\frac{1}{2c_1} \\
    \Rightarrow & \delta = -\frac{1}{2c_1} \left(\frac{c_1-2c_0}{2c_1c_0}\right)^{-1} \\
    \Rightarrow & \delta = \frac{c_0}{2c_0-c_1}.
\end{eqnarray*}

The second partial derivative $\frac{\partial^2 U_G}{\partial \delta^2} = \frac{1}{2c_0} - \frac{1}{c_1}$, which is negative as long as $0 < \frac{c_1}{c_0} < 2$. Since there is only one root, the derived equation is a global maximum for $0 < c_1 < 2{c_0}$. However, notice that the derived expression is only feasible for the values ${c_1}\leq{c_0} $, since the value $\delta$ must be in the range $[0,1]$ (Specialist would not take a negative share of the profit). Thus, $\delta^{\textit{Powerful }G} = \frac{1}{2-c_1}$ for $0 < {c_1}<{c_0}$.

The remainder of the proof will show that, for $c_1 \geq c_0$, within the feasible set $0 \leq \delta \leq 1$, $\delta=1$ maximizes $U_G$. We'll do so by showing that the partial derivative $\frac{\partial U_G}{\partial \delta}$ is non-negative for all $c_1 \geq c_0$ and $0 \leq \delta \leq 1$. Assume for sake of contradiction:

\begin{eqnarray*}
    & \frac{\partial U_G}{\partial \delta} < 0 \\
    \Rightarrow & \frac{\delta}{2c_0} + \frac{1}{2c_1} - \frac{\delta}{c_1}  < 0 \\
     \Rightarrow & \delta\left(\frac{1}{2c_0}-\frac{1}{c_1}\right)+\frac{1}{2c_1}  < 0 \\
    \Rightarrow &  \delta\left(\frac{c_1-2c_0}{2c_1c_0}\right)+\frac{c_0}{2c_1c_0}  < 0 \\
     \Rightarrow & \delta(c_1-2c_0)+c_0  < 0 &  \textit{because }c_1,c_0>0. \\
    \Rightarrow &  \delta(c_1-2c_0)  < -c_0. \\
\end{eqnarray*}
Notice the resulting expression is only met when $\delta<0$ or $c_1 \leq c_0$. However, we're given $\delta \in [0,1] \cap c_1 \geq c_0$.
\end{proof}

\subsection{Powerful-D Bargaining Solution}
\label{subsec:powerful-d-single}

\begin{proof}{Proof of Proposition \ref{prop:powerful-d}.}
The powerful-$D$ solution is the solution $\delta^{\textit{Powerful }D}$ that maximizes $U_D$ over the feasible set:
\begin{eqnarray*}
    & \delta^{\textit{Powerful }D} = \text{argmax}_{\delta} U_D(\delta) \\
    \Rightarrow & \frac{\partial U_D}{\partial \delta} = 0 \\
    \Rightarrow & \frac{\partial}{\partial \delta} \left[\left(\frac{1}{4c_1}\right)+\left(\frac{1}{2c_0}-\frac{1}{2c_1}\right)\delta + \left(\frac{1}{4c_1}-\frac{1}{2c_0}\right)\delta^2\right]= 0 & \textit{Corr. \ref{cor:utilities-solved}}\\
    \Rightarrow & \left(\frac{1}{2c_0}-\frac{1}{2c_1}\right) + 2\left(\frac{1}{4c_1}-\frac{1}{2c_0}\right)\delta = 0 \\
    \Rightarrow & \left(\frac{1}{2c_1}-\frac{1}{c_0}\right)\delta = \frac{1}{2c_1}-\frac{1}{2c_0} \\
    \Rightarrow & \delta = \frac{\left(\frac{c_0-c_1}{2c_0c_1}\right)}{\left(\frac{c_0-2c_1}{2c_0c_1}\right)}\\
    \Rightarrow & \delta = \frac{c_1-c_0}{2c_1-c_0}.
\end{eqnarray*}

The second partial derivative $\frac{\partial^2 U_D}{\partial \delta^2} = \frac{1}{2c_1}-\frac{1}{c_0}$, which is negative as long as $2c_1 > c_0$. Since there is only one root, the derived equation is a global maximum for $2c_1 > c_0$. However, notice that the derived expression is only feasible for the values $c_1 \geq c_0$, since the value $\delta$ must be in the range $[0,1]$ (Generalist would not take a negative share of the profit). Thus, $\delta^{\textit{Powerful }D} = \frac{c_1-c_0}{2c_1-c_0}$ for $c_1 \geq c_0$.

The remainder of the proof will show that, for $c_1 < c_0$, within the feasible set $0 \leq \delta \leq 1$, $\delta=0$ maximizes $U_D$. We'll do so by showing that the partial derivative $\frac{\partial U_D}{\partial \delta} \leq 0$ for all $0 < c_1 < 1$ and $0 \leq \delta \leq 1$. Assume for sake of contradiction:

\begin{eqnarray*}
    & \frac{\partial U_D}{\partial \delta}  > 0 \\
     \Rightarrow & \left(\frac{1}{2c_0}-\frac{1}{2c_1}\right) + \left(\frac{1}{2c_1}-\frac{1}{c_0}\right)\delta  > 0 \\
      \Rightarrow &  \left(\frac{1}{2c_1}-\frac{1}{c_0}\right)\delta  > \frac{1}{2c_1} - \frac{1}{2c_0} \\
      \Rightarrow &  \left(\frac{c_0-2c_1}{2c_0c_1}\right)\delta  > \frac{c_0-c_1}{2c_0c_1} \\
      \Rightarrow &  \left(c_0-2c_1\right)\delta  > c_0-c_1 & \textit{because }c_0,c_1>0.\\
\end{eqnarray*}
We show the contradiction $\forall \frac{c_1}{c_0} \in (0,1)$ (equivalently, every scenario where $0<c_1\leq c_0$):
\begin{itemize}
    \item For $\frac{1}{2} < \frac{c_1}{c_0} <1$, the final step implies $\delta  > \frac{c_1-c_0}{2c_1-c_0}$, which contradicts the global optimum finding above.
    \item For $0 < \frac{c_1}{c_0}  < \frac{1}{2}$, the final step implies $\delta  \leq \frac{c_1-c_0}{2c_1-c_0}$. But notice the right-hand-side must be negative, contradicting the given range $\delta \in [0,1]$.
    \item Finally, for $\frac{c_1}{c_0}=\frac{1}{2}$, the final step implies $0>\frac{1}{2}$. 
\end{itemize}
Thus we've established the contradiction. 
\end{proof}

\subsection{Vertical monopoly bargaining solution}

\begin{proof}{Proof of Proposition \ref{prop:vertical-monopoly-quadratic}.}
    The vertical monopoly or ``utilitarian'' solution is the solution that maximizes the sum of utilities:

    \begin{eqnarray*}
        & \delta^{Vertical Monopoly} = \text{argmax}_{\delta}U_G(\delta)+U_D(\delta) \\
        \Rightarrow & \frac{\partial}{\partial \delta}\left(U_G(\delta)+U_D(\delta)\right)=0 \\
        \Rightarrow & \frac{\delta}{2c_0} + \frac{1}{2c_1} - \frac{\delta}{c_1} + \left(\frac{1}{2c_0}-\frac{1}{2c_1}\right) + 2\left(\frac{1}{4c_1}-\frac{1}{2c_0}\right)\delta =0 & \textit{Corr. \ref{cor:utilities-solved}} \\
        \Rightarrow & \delta\left(\frac{1}{2c_0} - \frac{1}{c_1} +\frac{1}{2c_1}-\frac{1}{c_0}\right) = -\frac{1}{2c_1}-\frac{1}{2c_0}+\frac{1}{2c_1} \\
        \Rightarrow & \delta\left(c_1-2c_0+c_0-2c_1\right)=-c_1 \\
        \Rightarrow &  \delta = \frac{c_1}{c_1+c_0}.
    \end{eqnarray*}

The second partial derivative is $\frac{\partial^2 U_G}{\partial\delta^2}=-\frac{1}{2c_1}-\frac{1}{2c_0}$ which is negative for any $c_0,c_1>0$, meaning $\delta^{Vertical Monopoly}=\frac{c_1}{c_1+c_0}$ maximizes the sum of utilities.
\end{proof}

\subsection{Egalitarian bargaining solution}

\begin{proof}{Proof of Proposition \ref{prop:kalai-quadratic}.}
    The Kalai (egalitarian) solution $\delta^{\textit{Egal.}}$ is the solution that maximizes the minimum utility among players. 

    First, observe that if there exists a point in the Pareto solution set where the two utilities are equal, this point must be the egalitarian solution. Pareto means that an increase in any player's utility must correspond to a decrease in another player's utility. If a solution within the Pareto set equalizes utilities, then any other solution must inevitably trade off one player's utility for the other's, meaning any alternative solution would yield a lower utility for at least one player. 
    
    So, setting the utilities equal we get:
    \begin{eqnarray*}
        U_G(\delta)=U_D(\delta) \\
        \frac{\delta^2}{4c_0}+ \frac{\delta}{2c_1}-\frac{\delta^2}{2c_1} = \frac{1}{4c_1}+\frac{\delta}{2c_0}-\frac{\delta}{2c_1}+\frac{\delta^2}{4c_1}-\frac{\delta^2}{2c_0} & \textit{Corr. \ref{cor:utilities-solved}}\\
        \delta^2(3c_1-3c_0)+\delta(4c_0-2c_1)-c_0=0
    \end{eqnarray*}

    Plugging into the quadratic formula, we get two candidate solutions:

    \begin{eqnarray*}
        \delta^{\textit{Egal.}} \stackrel{?}{=} \left\{\frac{\sqrt{c_0^2-c_0c_1+c_1^2}-c_1+2c_0}{3(c_0-c_1)},\frac{-\sqrt{c_0^2-c_0c_1+c_1^2}-c_1+2c_0}{3(c_0-c_1)}\right\}
    \end{eqnarray*}

    Notice that the first of these solutions, for $c_0,c_1>0$, is not in the feasible set $0 \leq \delta \leq 1$. Thus, the Egalitarian solution is given by:

    \begin{equation*}
        \delta^{\textit{Egal.}} = \frac{-\sqrt{c_0^2-c_0c_1+c_1^2}-c_1+2c_0}{3(c_0-c_1)}.
    \end{equation*}
\end{proof}

\subsection{Maximum-performance bargaining solution}

\begin{proof}{Proof of Proposition \ref{prop:max-alpha-quadratic}.}
    We will show that within the feasible region $\delta \in [0,1], c_1>0$, the following three properties hold:
\begin{enumerate}
    \item $\frac{\partial \alpha_1}{\partial \delta}(c_1) <0\ \ \forall\ \ c_1 < c_0$
    \item $\frac{\partial \alpha_1}{\partial \delta}(c_1) >0\ \ \forall\ \ c_1 > c_0$
    \item $\frac{\partial \alpha_1}{\partial \delta}(c_1) = 0\ \text{ for }\ c_1=c_0$
\end{enumerate}

First, we differentiate our expression for $\alpha_1(\delta;c_0,c_1)$ with respect to $\delta$, using the expression attained in Theorem \ref{subgame-perfect-eq}:
\begin{eqnarray*}
    & \frac{\partial \alpha_1}{\partial \delta} \\
    = & \frac{\partial}{\partial \delta}\left[\frac{\delta}{2c_0}+\frac{1-\delta}{2c_1}\right] \\
    = & \frac{1}{2c_0}-\frac{1}{2c_1}.
\end{eqnarray*}
    Now notice each of the three properties are satisfied in turn:
\begin{enumerate}
    \item For $c_1 < c_0$, $\frac{\partial \alpha_1}{\partial \delta} = \frac{1}{2c_0}-\frac{1}{2c_1} <0$.
    \item For $c_1 > c_0$, $\frac{\partial \alpha_1}{\partial \delta} = \frac{1}{2c_0}-\frac{1}{2c_1} >0$.
    \item For $c_1 = c_0$, $\frac{\partial \alpha_1}{\partial \delta} = \frac{1}{2c_0}-\frac{1}{2c_1} =0$.
\end{enumerate} 
\end{proof}


\bldelete{
\section{Section 4 Materials}
\subsection{Existence theorem}
\label{app:general}

\begin{proof}{Proof of Theorem \ref{thm:general}.}
We prove this theorem via a sequence of Lemmas.

\begin{lemma}
\label{lemma:general-zero}
    If $r'(0) > \lambda_0 \phi_0'(0)$ for a constant $\lambda_0\geq2$, then there exists a set $A^*\subseteq(0,1)$ such that $\frac{1}{2} \in A^*$ and for all $\delta \in A^*$, $\alpha_0^*(\delta)>0$, $U_G(\delta)>0$.
\end{lemma}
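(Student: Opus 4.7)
The plan is to first establish the two positivity properties at the distinguished value $\delta = 1/2$ and then extend them to an open neighborhood by continuity of the subgame-perfect equilibrium in $\delta$. By backward induction, the specialist's best response $\alpha_1^*(\alpha_0, \delta)$ reduces $G$'s problem to maximizing $V(\alpha_0) := \delta\, r(\alpha_1^*(\alpha_0, \delta)) - \phi_0(\alpha_0)$ over $\alpha_0 \geq 0$. I would then compute the one-sided derivative at $\alpha_0 = 0^+$,
\[
V'(0^+) \;=\; \delta \, r'(\alpha_1^*(0, \delta))\, \frac{d\alpha_1^*}{d\alpha_0}\bigg|_{\alpha_0 = 0} - \phi_0'(0),
\]
and show it is strictly positive at $\delta = 1/2$. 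Using the natural identity $\frac{d\alpha_1^*}{d\alpha_0} = 1$ that holds whenever $\phi_1$ depends only on the increment $\alpha_1 - \alpha_0$ (consistent with the paper's cost models), together with continuity of $r'$ near $0$, the hypothesis $r'(0) > \lambda_0 \phi_0'(0) \geq 2 \phi_0'(0)$ pushes $V'(0^+)$ strictly above $0$. This forces the maximizer $\alpha_0^*(1/2)$ to be strictly positive.

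Second, I would establish $U_G(1/2) > 0$ by a case split on $\alpha_1^*(0, 1/2)$. If $\alpha_1^*(0, 1/2) > 0$, then $V(0) = \tfrac{1}{2} r(\alpha_1^*(0, 1/2)) > 0$ (since $r$ is monotonic with $r(0)=0$), so by optimality $U_G(1/2) \geq V(0) > 0$. Otherwise $V(0) = 0$, but the positive right-derivative from the previous step implies $V(\alpha_0) > 0$ for all sufficiently small $\alpha_0 > 0$, and again $U_G(1/2) > 0$ by optimality.

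Third, to produce the set $A^*$, I would invoke continuity of $r$, $\phi_0$, $\phi_1$ and standard dependence of maximizers on parameters (e.g., Berge's maximum theorem under strict concavity of $V$) to conclude that $\delta \mapsto \alpha_0^*(\delta)$ and $\delta \mapsto U_G(\delta)$ are continuous on a neighborhood of $1/2$. The strict inequalities $\alpha_0^*(1/2) > 0$ and $U_G(1/2) > 0$ then persist on an open interval $A^* \ni 1/2$, completing the proof.

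The main obstacle I expect is rigorously bounding $r'(\alpha_1^*(0, 1/2))$ from below by something proportional to $r'(0)$: if $r$ were highly concave and $\alpha_1^*(0, 1/2)$ were large, this bound could fail and the factor $\lambda_0 \geq 2$ would not suffice on its own. I would handle this either by substituting via the specialist's own first-order condition $(1-\delta) r'(\alpha_1^*) = \partial \phi_1/\partial \alpha_1$, or by treating the lemma's hypothesis as a genuinely local condition at the origin and reading $V'(0^+)$ through a single chain-rule evaluation there rather than a global bound.
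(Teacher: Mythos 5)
Your proposal follows essentially the same route as the paper: fix $\delta = \tfrac{1}{2}$, show $\frac{\partial U_G}{\partial \alpha_0}\big\vert_{\alpha_0=0} = \tfrac{1}{2}r'(0) - \phi_0'(0) > 0$ from the hypothesis, and conclude both $\alpha_0^*>0$ and $U_G>0$. You are in fact more careful than the paper on exactly the points you flag: the paper silently evaluates $r'$ at $0$ rather than at $\alpha_1^*(0,\tfrac{1}{2})$ (harmless when $r$ is the identity or when the condition is read locally, as you suggest), asserts the set $A^*$ exists without any continuity argument (a singleton containing $\tfrac{1}{2}$ already satisfies the statement, so your Berge-type step is optional), and omits your case split for $U_G>0$.
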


Let's presume $\delta^*=\frac{1}{2}$. If we can show that this solution yields $\alpha_0^*>0$, then this means the generalist's utility is greater than 0 for investing some non-zero effort spend. $\alpha_0^*$ is the value of $\alpha_0$ that maximizes $U_G$, so if $U_G$ has positive slope at $\alpha_0=0$, then $\alpha_0^*>0$.

Notice that this positive-utility outcome is met as long as $\frac{\partial U_G}{\partial \alpha_0}\big\vert_{\alpha_0=0}>0$. This condition would necessarily mean that there exists some positive $\alpha_0>0$ which maximizes $U_G$. So, formally:

\begin{eqnarray*}
    r'(0) > 2 \phi_0'(0) \\
    \frac{1}{2} r'(0) - \phi_0'(0) > 0 \\
    \frac{\partial}{\partial \alpha_0}\left(\delta r - \phi_0\right)\bigg\vert_{\alpha_0=0} > 0 \\
    \frac{\partial U_G}{\partial \alpha_0}\bigg\vert_{\alpha_0=0}>0. \\
\end{eqnarray*} 

Notice that this inequality is met as long as $r'(0) > \lambda_0 \phi_0'(0) $ where $\lambda_0>2$. Thus, there is a non-empty set $A^*$ of solutions with $\frac{1}{2} \in A^*$ that yield positive $\alpha_0$ and positive $U_G$.

\bldelete{
\begin{lemma}
\label{lemma:general-one}
    If $\frac{\partial r}{\partial \alpha_1}\big\vert_{\alpha_1=\alpha_0} > \lambda_1 \frac{\partial \phi_1}{\partial \alpha_1}\big\vert_{\alpha_1=\alpha_0}$ where $\lambda_1\geq2$, then there is a non-empty set of agreements $B^*\in(0,1)$ such that $\alpha_1^* > \alpha_0^*$, $U_D>0$, and $\frac{1}{2} \subseteq B^*$.
\end{lemma}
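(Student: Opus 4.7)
The plan is to follow the template of Lemma \ref{lemma:general-zero}: exhibit $\delta = \tfrac{1}{2}$ as an explicit bargaining parameter at which all three conclusions hold, and then appeal to continuity to extend the conclusion to an open neighborhood of $\tfrac{1}{2}$ inside $(0,1)$, which will serve as (a subset of) the promised set $B^*$. First I would invoke Lemma \ref{lemma:general-zero} to fix $\alpha_0^* = \alpha_0^*(\tfrac{1}{2}) > 0$, so that the marginal hypothesis on $r$ and $\phi_1$ can be evaluated at a meaningful performance level.

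Next I would analyze $D$'s best-response problem at $\delta = \tfrac{1}{2}$, namely
\[
\alpha_1^*(\tfrac{1}{2}) = \arg\max_{\alpha_1 \geq \alpha_0^*}\bigl[\tfrac{1}{2}\, r(\alpha_1) - \phi_1(\alpha_1;\alpha_0^*)\bigr],
\]
by computing the one-sided derivative at the boundary $\alpha_1 = \alpha_0^*$:
\[
\left.\frac{\partial U_D}{\partial \alpha_1}\right|_{\alpha_1 = \alpha_0^*} = \tfrac{1}{2}\, r'(\alpha_0^*) - \phi_1'(\alpha_0^*;\alpha_0^*).
\]
Under the hypothesis $r'(\alpha_0^*) > \lambda_1\, \phi_1'(\alpha_0^*;\alpha_0^*)$ with $\lambda_1 \geq 2$, this derivative is strictly positive, so (assuming unimodality of $U_D$ in $\alpha_1$, which is standard under convex $\phi_1$ and concave $r$) the maximizer must lie strictly to the right of the boundary, yielding $\alpha_1^*(\tfrac{1}{2}) > \alpha_0^*$, the first conclusion.

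For positivity of $U_D$, I would use a baseline comparison: since $\phi_1(\alpha_0^*;\alpha_0^*) = 0$ by the paper's normalization, $U_D$ evaluated at $\alpha_1 = \alpha_0^*$ equals $\tfrac{1}{2}\, r(\alpha_0^*)$, which is strictly positive because $r(0) = 0$, $r$ is monotonic, and $\alpha_0^* > 0$. Because $\alpha_1^*$ strictly improves on this baseline, we get $U_D(\tfrac{1}{2}) > 0$. Finally, continuity of $\alpha_0^*(\delta)$, $\alpha_1^*(\delta)$, and $U_D(\delta)$ in $\delta$ — which follows from a Berge maximum-theorem argument together with uniqueness of the maximizers — lets me conclude that the two strict open conditions persist on some interval $(\tfrac{1}{2} - \varepsilon, \tfrac{1}{2} + \varepsilon)$, which gives the desired non-empty $B^*$ containing $\tfrac{1}{2}$.

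The main obstacle I expect is threading the continuity argument cleanly, because the marginal hypothesis is evaluated at the equilibrium point $\alpha_0 = \alpha_0^*(\delta)$, which itself moves with $\delta$. In particular, one must ensure that the strict marginal inequality survives a perturbation of $\alpha_0^*$, not merely a perturbation of $\delta$. The standard route is to assume $r'$ and $\phi_1'$ are continuous, so the strict inequality persists on a neighborhood of $\alpha_0^*(\tfrac{1}{2})$, and then shrink $\varepsilon$ so that $\alpha_0^*(\delta)$ stays within that neighborhood for all $\delta \in (\tfrac{1}{2} - \varepsilon, \tfrac{1}{2} + \varepsilon)$. The remainder is bookkeeping on the open inequalities.
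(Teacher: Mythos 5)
Your proposal is correct and follows essentially the same route as the paper: fix $\delta=\tfrac{1}{2}$, observe that the hypothesis with $\lambda_1\geq 2$ makes $\frac{\partial U_D}{\partial \alpha_1}\big\vert_{\alpha_1=\alpha_0}=\tfrac{1}{2}r'(\alpha_0)-\phi_1'(\alpha_0)>0$, conclude $\alpha_1^*>\alpha_0^*$ and hence $U_D(\alpha_1^*)>U_D(\alpha_0^*)\geq 0$. Your added continuity/Berge argument to inflate $B^*$ to an open neighborhood is more than the paper does (it simply takes $B^*\ni\tfrac{1}{2}$, which already satisfies the lemma's non-emptiness requirement), but it is harmless extra rigor rather than a different approach.
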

}

\begin{lemma}
\label{lemma:general-one}
    If $r'(\alpha_0) > \lambda_1 \phi_1'(\alpha_0)$ for a constant $\lambda_1\geq2$, then there exists a set $B^*\subseteq(0,1)$ such that $\frac{1}{2} \in B^*$ and for all $\delta \in B^*$, $\alpha_1^*(\delta) > \alpha_0^*$, $U_D(\delta)>0$.
\end{lemma}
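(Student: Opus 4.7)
The plan is to mirror the strategy of Lemma \ref{lemma:general-zero}: pick $\delta = \tfrac{1}{2}$ as a witness, verify both conclusions there, and then use continuity to obtain a neighborhood $B^* \ni \tfrac{1}{2}$ on which they persist. The key observation is that $\alpha_1^*$ is obtained by maximizing $U_D(\alpha_1) = (1-\delta) r(\alpha_1) - \phi_1(\alpha_1; \alpha_0)$ over $\alpha_1 \geq \alpha_0$, so $\alpha_1^* > \alpha_0$ precisely when the right-derivative of $U_D$ at $\alpha_1 = \alpha_0$ is strictly positive.

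First I would evaluate this right-derivative at $\delta = \tfrac{1}{2}$ and $\alpha_0 = \alpha_0^*(\tfrac{1}{2})$:
\[
\frac{\partial U_D}{\partial \alpha_1}\bigg\vert_{\alpha_1 = \alpha_0} = (1-\delta)\, r'(\alpha_0) - \phi_1'(\alpha_0; \alpha_0) = \tfrac{1}{2} r'(\alpha_0) - \phi_1'(\alpha_0; \alpha_0).
\]
Under the hypothesis $r'(\alpha_0) > \lambda_1 \phi_1'(\alpha_0)$ with $\lambda_1 \geq 2$, this quantity is strictly positive, so the specialist's optimizer $\alpha_1^*(\tfrac{1}{2})$ must lie strictly above $\alpha_0^*(\tfrac{1}{2})$.

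Next I would verify $U_D > 0$ at $\delta = \tfrac{1}{2}$. By Lemma \ref{lemma:general-zero}, $\alpha_0^*(\tfrac{1}{2}) > 0$, and since $r$ is monotone with $r(0) = 0$ we have $r(\alpha_0^*) > 0$. Using the normalization $\phi_1(\alpha_0; \alpha_0) = 0$, the utility at the baseline $\alpha_1 = \alpha_0$ evaluates to $\tfrac{1}{2} r(\alpha_0^*) > 0$, and since $\alpha_1^*$ maximizes $U_D$ over $\alpha_1 \geq \alpha_0$, we conclude $U_D(\alpha_1^*; \tfrac{1}{2}) \geq \tfrac{1}{2} r(\alpha_0^*) > 0$.

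Finally, to pass from the single point $\delta = \tfrac{1}{2}$ to an open set $B^*$, I would invoke continuity: both strict inequalities depend continuously on $\delta$, through the explicit coefficient $(1-\delta)$ and through the induced map $\delta \mapsto \alpha_0^*(\delta)$. Hence they persist on some open interval around $\tfrac{1}{2}$, which we take as $B^*$. The main obstacle is justifying continuity of $\alpha_0^*(\delta)$ and of the induced marginal quantities $r'(\alpha_0^*(\delta))$ and $\phi_1'(\alpha_0^*(\delta); \alpha_0^*(\delta))$; this follows from the implicit function theorem or Berge's maximum theorem given $C^1$ regularity of $r, \phi_0, \phi_1$ and strict concavity of $U_G$ in $\alpha_0$ near the witness, conditions consistent with those implicitly used in Lemma \ref{lemma:general-zero}.
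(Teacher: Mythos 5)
Your proof is correct and follows essentially the same route as the paper's: evaluate the marginal condition at $\delta=\tfrac12$, note that $r'(\alpha_0)>\lambda_1\phi_1'(\alpha_0)$ with $\lambda_1\ge 2$ forces $\tfrac12 r'(\alpha_0)-\phi_1'(\alpha_0;\alpha_0)>0$, hence $\alpha_1^*>\alpha_0^*$ and $U_D>0$. The only quibbles are that your positivity step detours through the companion lemma's conclusion $\alpha_0^*(\tfrac12)>0$ (whose hypothesis is not assumed here, and which is unnecessary since $U_D\big\vert_{\alpha_1=\alpha_0}=\tfrac12 r(\alpha_0)\ge 0$ together with the strictly positive slope at $\alpha_1=\alpha_0$ already yields $U_D(\alpha_1^*)>0$), and that the continuity argument producing an open interval is more than the statement demands, since $B^*=\{\tfrac12\}$ already satisfies it --- which is all the paper establishes.
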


Let's presume $\delta^*=\frac{1}{2}$. If we can show that this solution yields $\alpha_1^*>\alpha_0^*$, then this means the domain specialist's utility is greater than 0 for investing some non-zero effort spend. $\alpha_1^*$ is the value of $\alpha_1$ that maximizes $U_D$, so if $U_D$ has positive slope at $\alpha_1=\alpha_0^*$, then $\alpha_1^*>\alpha_0^*$.

Notice that this positive-utility outcome is met as long as $\frac{\partial U_D}{\partial \alpha_1}\big\vert_{\alpha_1=\alpha_0^*}>0$ -- this condition would necessarily mean that there exists some $\alpha_1^*>\alpha_0^*$ which maximizes $U_D$. So, formally: 
 \begin{eqnarray*}
    r'(\alpha_0) > 2 \phi_1'(\alpha_0) \\
    \frac{1}{2} r'(\alpha_0) - \phi_1'(\alpha_0) > 0 \\
    \frac{\partial}{\partial \alpha_1}\left((1-\delta) r - \phi_1\right)\bigg\vert_{\alpha_1=\alpha_0} > 0 \\
    \frac{\partial U_D}{\partial \alpha_1}\bigg\vert_{\alpha_1=\alpha_0}>0. \\
\end{eqnarray*} 

Notice that this inequality is met as long as $r'(\alpha_0) > \lambda_1 \phi_1'(\alpha_0)$ where $\lambda_1\geq2$. Thus, there is a non-empty set $B^*$ of solutions with $\frac{1}{2} \in B^*$ that yield $\alpha_1^*>\alpha_0^*$ and positive $U_D$.
\bldelete{
\begin{lemma}
    $A^* \cap B^*$ is a non-empty set where any solution $\delta^* \in A^* \cap B^*$ satisfies the three properties: $0<\delta_i<1$; $0<\alpha_0^*<\alpha_1^*$; and $U_D,U_G>0$.
\end{lemma}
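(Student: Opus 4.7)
The plan is to prove this closing lemma as an almost immediate corollary of Lemmas \ref{lemma:general-zero} and \ref{lemma:general-one}, both of which are assumed to hold under their respective marginal hypotheses. First I would address non-emptiness: since both lemmas explicitly place $\delta = \tfrac{1}{2}$ in the sets $A^*$ and $B^*$ respectively, we have $\tfrac{1}{2} \in A^* \cap B^*$, so the intersection is non-empty and in particular contains an interior point of $[0,1]$.

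Next, I would fix an arbitrary $\delta^* \in A^* \cap B^*$ and verify the three listed properties by reading each one off from the appropriate prior lemma. The bound $0 < \delta^* < 1$ is inherited directly from the containment $A^*, B^* \subseteq (0,1)$. The chain $0 < \alpha_0^*(\delta^*) < \alpha_1^*(\delta^*)$ combines the strict positivity $\alpha_0^*(\delta^*) > 0$ guaranteed by $\delta^* \in A^*$ with the strict improvement $\alpha_1^*(\delta^*) > \alpha_0^*(\delta^*)$ guaranteed by $\delta^* \in B^*$. The utility positivity $U_G(\delta^*) > 0$ and $U_D(\delta^*) > 0$ then follows with one inequality supplied by each lemma, giving the desired conjunction.

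The only subtle point — and the place I expect to need a line of care rather than a full argument — is confirming that the two lemmas' conclusions can be composed without interference under backward induction. Lemma \ref{lemma:general-one}'s marginal condition on $r'$ and $\phi_1'$ is evaluated at the generalist's equilibrium investment $\alpha_0$, which is itself determined by $\delta^*$ via the subgame-perfect choice from stage 2. Membership in $A^*$ ensures this equilibrium $\alpha_0^*(\delta^*)$ is strictly positive, so $D$'s best response at stage 3 is well-defined and the strict inequality $\alpha_1^* > \alpha_0^*$ delivered by $B^*$ is genuinely strict at the realized $\alpha_0^*$. Beyond this consistency check, the proof is pure bookkeeping: an intersection of two non-empty sets yields a non-empty set, and the conjunction of their defining properties yields the three claimed properties at any common point.
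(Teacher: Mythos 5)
Your proposal is correct and matches the paper's own argument: non-emptiness via $\tfrac{1}{2}\in A^*\cap B^*$, and the three properties read off componentwise from Lemmas \ref{lemma:general-zero} and \ref{lemma:general-one}. If anything, your handling of the bound $0<\delta^*<1$ for an \emph{arbitrary} element of the intersection (via $A^*,B^*\subseteq(0,1)$) is slightly more careful than the paper's, which only verifies it at $\delta=\tfrac{1}{2}$.
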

}
\begin{corollary}
    $A^* \cap B^*$ is a non-empty set where any solution $\delta^* \in A^* \cap B^*$ satisfies the three properties: $0<\delta_i<1$; $0<\alpha_0^*<\alpha_1^*$; and $U_D,U_G>0$.
\end{corollary}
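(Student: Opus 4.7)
The plan is to deduce this corollary as an almost immediate combination of the two lemmas just established. First, I would verify non-emptiness of $A^* \cap B^*$ by observing that both Lemmas \ref{lemma:general-zero} and \ref{lemma:general-one} explicitly assert that $\tfrac{1}{2}$ lies in their respective sets; hence $\tfrac{1}{2} \in A^* \cap B^*$ and the intersection is non-empty, giving a distinguished witness point rather than relying on an abstract existence argument.

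Next, for an arbitrary $\delta^* \in A^* \cap B^*$, I would verify the three claimed properties one at a time. The bound $0 < \delta^* < 1$ follows because each lemma constructs its set as a subset of the open interval $(0,1)$, and this containment passes to the intersection. The chain $0 < \alpha_0^* < \alpha_1^*$ splits into two inequalities: Lemma \ref{lemma:general-zero} supplies $\alpha_0^*(\delta^*) > 0$ because $\delta^* \in A^*$, while Lemma \ref{lemma:general-one} supplies $\alpha_1^*(\delta^*) > \alpha_0^*$ because $\delta^* \in B^*$; concatenating these two gives the strict chain. Positivity of both utilities is read off directly from the conclusion of each lemma, with $U_G(\delta^*) > 0$ coming from membership in $A^*$ and $U_D(\delta^*) > 0$ from membership in $B^*$.

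The only subtlety --- rather than a genuine obstacle --- is confirming that the quantity $\alpha_0^*$ referred to in Lemma \ref{lemma:general-one} is the same subgame-perfect investment guaranteed to be positive by Lemma \ref{lemma:general-zero}. Because both lemmas concern the unique equilibrium associated with the same bargaining parameter $\delta$, backward induction identifies these two quantities automatically: given any $\delta$, the generalist's best response $\alpha_0^*(\delta)$ is pinned down first, and the specialist's best response $\alpha_1^*(\delta)$ is then defined relative to that value. Once this alignment is noted, the argument essentially amounts to reading off the lemmas' conclusions, so I would keep the proof itself brief and refer back to the witness $\delta^* = \tfrac{1}{2}$ whenever an explicit point in $A^* \cap B^*$ is useful.
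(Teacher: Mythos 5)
Your proposal is correct and follows essentially the same route as the paper: non-emptiness via the common witness $\tfrac{1}{2} \in A^* \cap B^*$, and each of the three properties read off directly from the conclusions of Lemmas \ref{lemma:general-zero} and \ref{lemma:general-one}. Your handling of property 1 (noting $A^*, B^* \subseteq (0,1)$ so the bound holds for every element of the intersection, not just the witness) is if anything slightly cleaner than the paper's phrasing, but it is not a different argument.
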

This Corollary follows from the findings that have already been shown in the former Lemmas.

First, $A^* \cap B^*$ is non-empty because $\frac{1}{2}\in A^*$ (as shown in Lemma \ref{lemma:general-zero}) and $\frac{1}{2}\in B^*$ (as shown in Lemma \ref{lemma:general-one}) so it follows that $\frac{1}{2} \in A^* \cap B^*$. 

The three properties are each met for the following reasons:
\begin{enumerate}
    \item Property 1 ($0<\delta^*<1$) is met because $\frac{1}{2} \in A^* \cap B^*$ and $0<\frac{1}{2} < 1$ so the existence finding is satisfied for a non-extreme value of $\delta$. This means, for $\delta^* \in A^* \cap B^*$, players share revenue.
    \item Property 2 ($0<\alpha_0^*<\alpha_1^*$) is met because any solution in $A^*$ yields $\alpha_0^* > 0$ (as shown in Lemma \ref{lemma:general-zero}) and any solution in $B^*$ yields $\alpha_1^* > \alpha_0^*$ (as shown in Lemma \ref{lemma:general-one}). This means, for $\delta^* \in A^* \cap B^*$, players do not free-ride (they both act to improve the technology).
    \item Property 3 ($U_G,U_D>0$) is met because any solution in $A^*$ yields $U_G>0$ (as shown in Lemma \ref{lemma:general-zero}) and any solution in $B^*$ yields $U_D>0$ (as shown in Lemma \ref{lemma:general-one}). This means any solution $\delta^* \in A^* \cap B^*$ yields positive utility for both players, which Pareto-dominates the disagreement scenario in which both players have zero utility.
\end{enumerate}    
\end{proof}
}

\section{Section \ref{sec:multi} Materials}
\label{app:multi-solutions}

\subsection{Proof of Pareto Set Characterization in the Multi-specialist Game}
\label{app:multi-pareto}
\begin{proof}{Proof of Theorem \ref{thm:multi-specialist-pareto}.}
Say there are $n \geq 2$ specialists. For each player in the game, at a given value of $\delta$, the utility curve can exhibit one of three regimes: 1) negative utility 2) increasing utility 3) decreasing utility. Thus, at a particular value of $\delta$, the utilities as a function of delta can exhibit $3^{n+1}$ different combinations of regimes. For example, if there are $n=2$ domain specialists, a given value of $\delta$ might exhibit one of $3^3=27$ regimes, including $\{D_1 \text{ negative}, D_2 \text{ decreasing}, G \text{ increasing}\}$ as one example.

We systematically analyze all possible regimes:
\begin{enumerate}
    \item All regimes where the generalist's utility are negative $U_G<0$ would lead the generalist to opt for the no-deal solution -- this captures a total of $3^n$ combinations. In the $n=2$ example, these scenarios are described by $\{\cdot,\cdot,G \text{ negative}\}$, so a total of $3^2=9$ out of $27$ possible regimes are accounted for.
    \item All regimes where the $U_G$ is positive-increasing and no specialist is positive-decreasing are Pareto-dominated. This is because, if no specialist is decreasing, then for some small increment of $\delta$, all players are better off. This captures a total of $2^n$ combinations. In the $n=2$ example, these scenarios account for a total of $2^2 = 4$ combinations. The four combinations would be: $\{D_1 \text{ increasing},D_2 \text{ increasing},G \text{ increasing}\}$,
$\{D_1 \text{ negative},D_2 \text{ increasing},G \text{ increasing}\}$,\\
$\{D_1 \text{ increasing},D_2 \text{ negative},G \text{ increasing}\}$,
$\{D_1 \text{ negative},D_2 \text{ negative},G \text{ increasing}\}$.
    \item All regimes where the $U_G$ is positive-decreasing and no specialist is positive-increasing are Pareto-dominated. This is because, if no specialist is decreasing, then for some small decrease of $\delta$, all players are better off. This captures a total of $2^n$ combinations. In the $n=2$ example, these scenarios account for a total of $2^2 = 4$ combinations.
    \item All regimes where $U_G$ is positive-increasing and at least one specialist is positive-decreasing are Pareto-optimal. This is because, for unimodal curves, it is impossible to improve one players' utility without a corresponding decrease in the other's. This secnario is exactly the set outside scenario (2) above. Thus, this accounts for $3^n-2^n$ cases. In the case where $n=2$, this would be 5 cases.
    \item All regimes where $U_G$ is positive-decreasing and at least one specialist is positive-increasing are Pareto-optimal. This is because, for unimodal curves, it is impossible to improve one players' utility without a corresponding decrease in the other's. This secnario is exactly the set outside scenario (3) above. Thus, this accounts for $3^n-2^n$ cases. In the case where $n=2$, this would be 5 cases.
\end{enumerate}
In total, we've defined 5 non-overlapping cases, collectively accounting for $3^n+2^n+2^n$ cases where the value of $\delta$ is not Pareto-optimal, and $(3^n-2^n)+(3^n-2^n)$ cases where the value of $\delta$ is Pareto-optimal. Overall these cases sum to $3^n+2^n+2^n+(3^n-2^n)+(3^n-2^n)=3*3^n = 3^{n+1}$, so we've accounted for every possible regime.
\end{proof} 


\subsection{Analysis for Single Bargain $\delta$}

\subsubsection{Derivation of the Subgame Perfect Equilibrium for a Given $\delta$}

We use backward induction to determine the multi-specialist fine-tuning game's subgame perfect equilibrium, as we did in Section \ref{subsec:subgame-quadratic}.

\begin{theorem}
\label{thm:multi-specialist-equilibrium}
    For a fixed $\delta$, the sub-game perfect equilibrium of the multi-specialist, single-bargain fine-tuning game with polynomial costs yields the following best-response strategies: $\alpha_0^* = \sqrt[k_0-1]{\frac{n\delta}{c_0k_0}}$; $\alpha_i^* = \sqrt[k_0-1]{\frac{n\delta}{c_0k_0}}+\sqrt[k_i-1]{\frac{1-\delta}{c_ik_i}}$.
\end{theorem}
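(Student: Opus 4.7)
The plan is to mirror the backward-induction argument used for Theorem \ref{subgame-perfect-eq}, extending it to handle the sum of specialists' contributions to $G$'s revenue. I will first solve each specialist's stage-3 problem independently, then substitute their best responses into $G$'s objective and solve the resulting single-variable optimization.

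Starting from the last stage, fix $\alpha_0$ and $\delta$, and consider specialist $D_i$'s choice of $\alpha_i$. The utility $U_{D_i} = (1-\delta)\alpha_i - c_i(\alpha_i - \alpha_0)^{k_i}$ depends only on $D_i$'s own $\alpha_i$ together with the generalist-chosen $\alpha_0$; it does not involve any $\alpha_j$ with $j\neq i$, so each specialist's optimization decouples. Setting $\partial U_{D_i}/\partial \alpha_i = 0$ yields $\alpha_i^* = \alpha_0 + \left((1-\delta)/(k_i c_i)\right)^{1/(k_i-1)}$, exactly as in the one-specialist derivation, and the second derivative $-k_i(k_i-1)c_i(\alpha_i-\alpha_0)^{k_i-2}$ is negative under the standing assumption $k_i>1$, $c_i>0$, so this is a global maximum.

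Next, I substitute each $\alpha_i^*$ into $G$'s utility. Because $\alpha_i^* = \alpha_0 + C_i(\delta)$ where $C_i(\delta)$ is independent of $\alpha_0$, the revenue term $\delta\sum_i \alpha_i^*$ becomes $\delta\bigl[n\alpha_0 + \sum_i C_i(\delta)\bigr]$. The only substantive departure from the single-specialist calculation is that each of the $n$ specialists contributes one copy of $\alpha_0$ to $G$'s revenue, so the linear coefficient on $\alpha_0$ in $G$'s maximization becomes $n\delta$ rather than $\delta$. The first-order condition $n\delta - k_0 c_0 \alpha_0^{k_0-1} = 0$ then gives $\alpha_0^* = \left(n\delta/(k_0 c_0)\right)^{1/(k_0-1)}$, and concavity of $U_G$ in $\alpha_0$ (from $-k_0(k_0-1)c_0\alpha_0^{k_0-2} < 0$) guarantees this is a global maximum. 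Plugging $\alpha_0^*$ back into the formula for $\alpha_i^*$ produces the stated expression.

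The main obstacle is really only bookkeeping rather than a genuine mathematical difficulty: I need to confirm that the specialists' stage-3 problems are fully decoupled, so that the single-bargain structure does not induce any cross-specialist coupling beyond the common $\delta$ and the shared starting point $\alpha_0$. Since the polynomial cost functions in Equations~(\ref{eq:phi-0-multi})--(\ref{eq:phi-i-multi}) contain no $\alpha_j$-cross terms for $j\neq i$, and the only channel through which $G$'s choice affects specialist $i$ is the shift $(\alpha_i - \alpha_0)$ in $\phi_i$, the problem reduces cleanly to $n$ parallel copies of the one-specialist stage-3 problem followed by a stage-2 problem that differs from the single-specialist version only in the factor of $n$ that multiplies $\delta$.
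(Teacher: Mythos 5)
Your proposal is correct and follows essentially the same backward-induction argument as the paper: solve each specialist's decoupled stage-3 problem to get $\alpha_i^* = \alpha_0 + \left(\frac{1-\delta}{k_ic_i}\right)^{1/(k_i-1)}$, then observe that the $n$ copies of $\alpha_0$ in $G$'s revenue turn the first-order condition into $n\delta - k_0c_0\alpha_0^{k_0-1}=0$, with the same second-derivative checks. The only addition is your explicit verification that the specialists' problems decouple, which the paper leaves implicit.
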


\begin{proof}{Proof of Theorem \ref{thm:multi-specialist-equilibrium}.}
    We solve the game using backward induction as follows:

    First, starting with the last stage (3), we solve for $\alpha_i^*$ given $\alpha$ and $\delta$:
    \begin{eqnarray}\label{eq:alpha_i-star-multi-specialist}
        & \alpha_i^* = \text{argmax}_{\alpha_i}U_{D_i}(\alpha_i,\alpha_0,\delta) \nonumber \\
        \Rightarrow & \frac{\partial U_{D_i}}{\partial \alpha_i}\bigg\vert_{\alpha_i=\alpha_i^*}=0 \nonumber \\
        \Rightarrow & (1-\delta) - \frac{\partial \phi_i}{\partial \alpha_i} = 0 \nonumber \\
        \Rightarrow & (1-\delta) - c_ik_i(\alpha_i-\alpha_0)^{k_i-1} \nonumber \\
        \Rightarrow & \alpha_i^* = \alpha_0 + \sqrt[k_i-1]{\frac{1-\delta}{c_ik_i}}.
    \end{eqnarray}

    Note that $\frac{\partial^2 U_{D_i}}{\partial \alpha_i^2}= -c_ik_i(k_i-1)(\alpha_i-\alpha_0)^{k_i-2}$ is negative for all $c_i>0$ and $k_i>1$ meaning that $\alpha_i^*$ derived above yields a global maximum of $U_{D_i}$. 

    Second, knowing $D_i$'s choice of $\alpha_i^*$ above, we solve for $\alpha_0^*$ as follows:
    \begin{eqnarray*}
         & \alpha_0^* = \text{argmax}_{\alpha_0}U_G(\alpha_0,\delta) \\
         \Rightarrow & \frac{\partial U_G}{\partial \alpha_0} \bigg\vert_{\alpha_0 = \alpha_0^*}=0\\
         \Rightarrow & \frac{\partial}{\partial \alpha_0} \left(\sum_{i=1}^n\delta\alpha_i - c_0(\alpha_0)^{k_0}\right)\bigg\vert_{\alpha_0 = \alpha_0^*}=0\\
         \Rightarrow & n\delta - c_0k_0\alpha_0^{k_0-1}=0\\
         \Rightarrow & \alpha_0^* = \sqrt[k_0-1]{\frac{n\delta}{c_0k_0}}.\\
    \end{eqnarray*}
    The second derivative $\frac{\partial^2 U_{G}}{\partial \alpha_i^2}=-c_0k_0(k_0-1)\alpha_0^{k_0-2}$ is negative for all $c_i>0$ and $k_i>1$ meaning that $\alpha_0^*$ derived above yields a global maximum of $U_G$.

    Finally, plugging in $\alpha_0^*= \sqrt[k_0-1]{\frac{n\delta}{c_0k_0}}$ into Equation \ref{eq:alpha_i-star-multi-specialist}, we obtain the following expression for $\alpha_i^*$ as a function of $\delta,c_0,c_i,k_0,k_i$ only:
    $$\alpha_i^* = \sqrt[k_0-1]{\frac{n\delta}{c_0k_0}} + \sqrt[k_i-1]{\frac{1-\delta}{c_ik_i}}.$$
\end{proof}

As a corollary to Theorem \ref{thm:multi-specialist-equilibrium}, we can specify the utilities of each player as a function of $\delta$ alone:

\begin{corollary}
\label{cor:multi-specialist-utilities-solved} 
For a fixed bargaining parameter $\delta$, the players' utilities are as follows:
    \begin{equation}
    \label{eq:u-g-multi}
        U_G(\delta)=n\left(\frac{n}{c_0k_0}\right)^{\frac{1}{k_0-1}}\left(1-\frac{1}{k_0}\right)\delta^{\frac{k_0}{k_0-1}}+\delta\sum_i \left(c_ik_i\right)^{\frac{-1}{k_i-1}}(1-\delta)^{\frac{1}{k_i-1}},
    \end{equation}
    \begin{equation}
    \label{eq:u-d-i-multi}
    U_{D_i}(\delta) = \left(\frac{n}{c_0k_0}\right)^{\frac{1}{k_0-1}}\delta^{\frac{1}{k_0-1}}(1-\delta)
     +\left(c_ik_i\right)^{\frac{-1}{k_i-1}}\left(1-\frac{1}{k_i}\right)(1-\delta)^{\frac{k_i}{k_i-1}}.
    \end{equation}
\end{corollary}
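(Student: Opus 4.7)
The plan is a direct substitution: take the closed forms $\alpha_0^{*}=(n\delta/(c_0k_0))^{1/(k_0-1)}$ and $\alpha_i^{*}=\alpha_0^{*}+((1-\delta)/(c_ik_i))^{1/(k_i-1)}$ from Theorem~\ref{thm:multi-specialist-equilibrium} and plug them into the utility definitions, which in this single-bargain, identity-revenue setting reduce to $U_G=\delta\sum_i \alpha_i^{*}-c_0(\alpha_0^{*})^{k_0}$ and $U_{D_i}=(1-\delta)\alpha_i^{*}-c_i(\alpha_i^{*}-\alpha_0^{*})^{k_i}$. No new ideas are needed beyond those already used to derive the equilibrium; the work is purely algebraic simplification.

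For $U_G$, I would first split $\delta\sum_i\alpha_i^{*}=n\delta\,\alpha_0^{*}+\delta\sum_i\bigl((1-\delta)/(c_ik_i)\bigr)^{1/(k_i-1)}$, so that the entire dependence on $\alpha_0^{*}$ collects into $n\delta\,\alpha_0^{*}-c_0(\alpha_0^{*})^{k_0}$. Writing both of these as powers of $\delta$ with common base $(n/(c_0k_0))^{1/(k_0-1)}$ and using the identity $n-c_0\cdot n/(c_0k_0)=n(1-1/k_0)$ produces the first term of Equation~(\ref{eq:u-g-multi}); the leftover piece is already in the desired form of the second term.

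For $U_{D_i}$, I would decompose $(1-\delta)\alpha_i^{*}$ as $(1-\delta)\alpha_0^{*}+(1-\delta)((1-\delta)/(c_ik_i))^{1/(k_i-1)}$ and note that $\phi_i$ depends only on the marginal increment $\alpha_i^{*}-\alpha_0^{*}=((1-\delta)/(c_ik_i))^{1/(k_i-1)}$, which cleanly equals $(c_ik_i)^{-k_i/(k_i-1)}(1-\delta)^{k_i/(k_i-1)}$ after raising to the $k_i$. Collecting the two terms with common factor $(c_ik_i)^{-1/(k_i-1)}(1-\delta)^{k_i/(k_i-1)}$ and using $1-c_i/(c_ik_i)=1-1/k_i$ gives the second term of Equation~(\ref{eq:u-d-i-multi}); the remaining $(1-\delta)\alpha_0^{*}$ is exactly the first term once $\alpha_0^{*}$ is expanded.

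The main obstacle, such as it is, is bookkeeping of the fractional exponents $1/(k_0-1)$, $k_0/(k_0-1)$, $1/(k_i-1)$, $k_i/(k_i-1)$ across many terms; in particular one has to notice the combinations $c_0\cdot(n/(c_0k_0))^{k_0/(k_0-1)}=(n/(c_0k_0))^{1/(k_0-1)}\cdot n/k_0$ and $c_i\cdot(c_ik_i)^{-k_i/(k_i-1)}=(c_ik_i)^{-1/(k_i-1)}/k_i$ so that like powers of $\delta$ and $(1-\delta)$ consolidate. Once those are handled the result follows immediately, so the corollary is really a consequence of careful accounting rather than a separate argument.
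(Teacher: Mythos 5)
Your proposal is correct and matches the paper's own proof: both plug the equilibrium strategies from Theorem~\ref{thm:multi-specialist-equilibrium} directly into the utility definitions and consolidate like powers of $\delta$ and $(1-\delta)$, using exactly the simplifications $c_0\left(\frac{n}{c_0k_0}\right)^{\frac{k_0}{k_0-1}}=\frac{n}{k_0}\left(\frac{n}{c_0k_0}\right)^{\frac{1}{k_0-1}}$ and $c_i\left(c_ik_i\right)^{\frac{-k_i}{k_i-1}}=\frac{1}{k_i}\left(c_ik_i\right)^{\frac{-1}{k_i-1}}$ that you identify. Nothing is missing.
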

\begin{proof}{Proof of Corollary \ref{cor:multi-specialist-utilities-solved}.}
    Plugging in the formulas from Theorem \ref{thm:multi-specialist-equilibrium} into equation \ref{eq:phi-0-multi}, we obtain:
\begin{eqnarray*}
    U_G = & \sum_i\delta r_i(\alpha_i)-\phi_0(\alpha_0) \\
    = & \delta \sum_i \left( \sqrt[k_0-1]{\frac{n\delta}{c_0k_0}} + \sqrt[k_i-1]{\frac{1-\delta}{c_ik_i}} \right) - c_0\left(\sqrt[k_0-1]{\frac{n\delta}{c_0k_0}}\right)^{k_0} \\
    = & \delta \sum_i \left(\frac{n\delta}{c_0k_0}\right)^{\frac{1}{k_0-1}} + \delta\sum_i\left(\frac{1-\delta}{c_ik_i}\right)^{\frac{1}{k_i-1}} - c_0\left(\frac{n\delta}{c_0k_0}\right)^{\frac{k_0}{k_0-1}} \\
    = & n\left(\frac{n}{c_0k_0}\right)^{\frac{1}{k_0-1}}\delta^{\frac{k_0}{k_0-1}}-c_0\left(\frac{n}{c_0k_0}\right)^{\frac{k_0}{k_0-1}}\delta^{\frac{k_0}{k_0-1}} + \delta\sum_i\left(\frac{1}{c_ik_i}\right)^{\frac{1}{k_i-1}}(1-\delta)^{\frac{1}{k_i-1}}\\
    = & n\left(\frac{n}{c_0k_0}\right)^{\frac{1}{k_0-1}}\left(1-\frac{1}{k_0}\right)\delta^{\frac{k_0}{k_0-1}}+\delta\sum_i \left(c_ik_i\right)^{\frac{-1}{k_i-1}}(1-\delta)^{\frac{1}{k_i-1}}.
\end{eqnarray*}

Plugging in the formulas from Theorem \ref{thm:multi-specialist-equilibrium} into equation \ref{eq:phi-i-multi}, we obtain:
\begin{eqnarray*}
    U_{D_i} = & (1-\delta) r_i(\alpha_i)-\phi_i(\alpha_i;\alpha_0) \\
    = & (1-\delta)\left( \sqrt[k_0-1]{\frac{n\delta}{c_0k_0}} + \sqrt[k_i-1]{\frac{1-\delta}{c_ik_i}}\right)-c_i\left( \sqrt[k_0-1]{\frac{n\delta}{c_0k_0}} + \sqrt[k_i-1]{\frac{1-\delta}{c_ik_i}}-\sqrt[k_0-1]{\frac{n\delta}{c_0k_0}}\right)^{k_i} \\
    = & (1-\delta) \left(\frac{n\delta}{c_0k_0}\right)^{\frac{1}{k_0-1}} + (1-\delta)\left(\frac{1-\delta}{c_ik_i}\right)^{\frac{1}{k_i-1}} - c_i\left(\frac{1-\delta}{c_ik_i}\right)^{\frac{k_i}{k_i-1}}\\
    = & \left(\frac{n}{c_0k_0}\right)^{\frac{1}{k_0-1}}\delta^{\frac{1}{k_0-1}}(1-\delta) + (c_ik_i)^{\frac{-1}{k_i-1}}\left(1-\frac{1}{k_i}\right)(1-\delta)^{\frac{k_i}{k_i-1}}. \\ 
\end{eqnarray*}
\end{proof}

\subsubsection{Deriving the Set of Pareto-Optimal Solutions}

In order to determine the set of Pareto-optimal bargaining agreements, first notice that when $k_0=2$ and $k_i=2\ \forall i$ (i.e., the case of quadratic costs), the utilities are strictly unimodal.

\begin{observation}
\label{obs:multi-quadratic-unimodal}
    In the multi-specialist fine-tuning game with quadratic costs, $U_G$ and $U_{D_i}$ are strictly unimodal functions of $\delta$, for all domain-specialists $i$.
\end{observation}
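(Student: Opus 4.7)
My strategy is direct: substitute $k_0 = k_i = 2$ into the formulas from Corollary~\ref{cor:multi-specialist-utilities-solved} and observe that both $U_G(\delta)$ and each $U_{D_i}(\delta)$ collapse to polynomials of degree at most two in $\delta$. Concretely I would expand
\[
U_G(\delta) = \left(\tfrac{n^2}{4c_0} - \sum_{i=1}^n \tfrac{1}{2c_i}\right)\delta^2 + \left(\sum_{i=1}^n \tfrac{1}{2c_i}\right)\delta,
\]
and similarly $U_{D_i}(\delta)$ is a quadratic with leading coefficient $\tfrac{1}{4c_i} - \tfrac{n}{2c_0}$. With everything reduced to quadratics, strict unimodality on $[0,1]$ follows as soon as I rule out the only obstruction: being a strictly convex quadratic whose minimum lies in the open interior $(0,1)$. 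In every other case (strictly concave, affine, or strictly convex with vertex outside $(0,1)$), the function is either monotonic or increases-then-decreases, both of which satisfy Definition~\ref{def:strict-unimodal}.

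The key computations are two endpoint-derivative sign checks. For $U_G$, I would evaluate
\[
U_G'(0) \;=\; \sum_{i=1}^n \tfrac{1}{2c_i} \;>\; 0.
\]
If the leading coefficient of $U_G$ is positive (the potentially bad, strictly convex case), then $U_G'$ is an affine function with positive slope that starts out positive at $\delta = 0$, so it remains positive on all of $[0,1]$; hence $U_G$ is strictly increasing, and therefore strictly unimodal with $m = 1$. The strictly concave and affine cases are immediate. Symmetrically, for $U_{D_i}$ I would compute $U_{D_i}'(1) = -\tfrac{n}{2c_0} < 0$; if the leading coefficient is positive, then the affine derivative ends negative at $\delta = 1$ with positive slope, so it is negative throughout $[0,1]$, making $U_{D_i}$ strictly decreasing (strictly unimodal with $m = 0$). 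The remaining cases are again immediate.

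\textbf{Where the work sits.} There is no real technical obstacle. The only subtlety is that one must avoid arguing by ``concavity of the whole family of polynomial utilities'' in the style of the single-specialist Proposition~\ref{obs:polynomial-unimodal} (whose proof leveraged third-derivative analysis and specifically required $k_0, k_i \geq 2$ together with a careful case decomposition), because with $n$ specialists summed into $U_G$, the leading coefficient $\tfrac{n^2}{4c_0} - \sum_i \tfrac{1}{2c_i}$ can take either sign depending on the cost parameters. Exploiting the quadratic structure and checking endpoint slopes sidesteps this entirely. A minor edge case worth noting is when a leading coefficient vanishes exactly, in which case the utility is affine in $\delta$ and is trivially strictly unimodal.
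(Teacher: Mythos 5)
Your proof is correct: with $k_0=k_i=2$ both utilities are quadratics in $\delta$, your expansions and the endpoint derivative values $U_G'(0)=\sum_i \tfrac{1}{2c_i}>0$ and $U_{D_i}'(1)=-\tfrac{n}{2c_0}<0$ are right, and the sign casework on the leading coefficient covers every case. The paper takes a shorter route: it observes that $U_G$ collapses to $A'\delta^2+B'\delta(1-\delta)$ with $A',B'>0$ and that $U_{D_i}$ has exactly the functional form of the single-specialist $U_D$, and then cites Proposition \ref{obs:polynomial-unimodal} via Lemma \ref{unimodal-lemma} (derivative positive at the left endpoint plus a concave derivative forces at most one sign change of $f'$). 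Your stated reason for avoiding that reduction is mistaken: the hypothesis in Lemma \ref{unimodal-lemma} is concavity of the \emph{derivative} $f'$ --- for a quadratic, $f'$ is affine and hence concave regardless of the sign of the leading coefficient --- not concavity of $f$ itself; moreover the single-specialist quadratic $U_G=(\tfrac{1}{4c_0}-\tfrac{1}{2c_1})\delta^2+\tfrac{1}{2c_1}\delta$ already has a leading coefficient of indeterminate sign, so the summation over $n$ specialists introduces nothing new. Your argument is thus an unpacked, self-contained instance of the same mechanism (an affine derivative with known sign at one endpoint crosses zero at most once), traded against re-deriving it rather than citing it. One small point in your favor: you flag the degenerate affine case explicitly, and it is indeed harmless here only because the resulting slopes ($\sum_i\tfrac{1}{2c_i}$ for $U_G$, and $-\tfrac{n}{2c_0}$ for $U_{D_i}$ at the parameter values where its leading coefficient vanishes) are nonzero, so no constant function --- which would fail Definition \ref{def:strict-unimodal} --- can arise.
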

\begin{proof}{Proof of Observation \ref{obs:multi-quadratic-unimodal}.}
    Notice that the utility functions take the same form as the one-specialist case, whose analogous observation is proven in Appendix \ref{app:unimodal-one-specialist}. There, we proved that a utility of the form $A\delta^{\frac{k_0}{k_0-1}} + B \delta(1-\delta)^{\frac{1}{k_1-1}}$ is unimodal for $A,B>0$, $k_0,k_1\geq2$, $\delta\in[0,1]$.

    For simplicity, in the multi-specialist case, we prove unimodality for the case of quadratic costs (this is all we need to arrive at the bargaining solutions reported in the paper). We show that the same proof holds for both the generalist and specialists' utilities:
    \begin{itemize}
        \item For the specialist, the utility function given in Equation \ref{eq:u-d-i-multi} is already of the form proven unimodal in Proposition \ref{obs:polynomial-unimodal}.
        \item For the generalist, using Equation \ref{eq:u-g-multi}, $U_G(\delta)=n\left(\frac{n}{c_0k_0}\right)\left(1-\frac{1}{2}\right)\delta^{2}+\delta\sum_i \left(c_ik_i\right)^{-1}(1-\delta)=A'\delta^2 + B' \delta (1-\delta)$. This matches the form already proven unimodal in Proposition \ref{obs:polynomial-unimodal}. 
    \end{itemize}
 \end{proof}

This suggests that this multi-specialist fine-tuning game has a set of Pareto-optimal solutions that are characterized by Theorem \ref{thm:multi-specialist-pareto}.

\subsubsection{Powerful-Player Solutions}

\begin{proposition}[Powerful-$G$ Solution]
    The Powerful-$G$ solution to the multi-specialist fine-tuning game with quadratic costs is as follows:
\begin{equation*}
    \delta^{\text{Powerful }G}= 
    \begin{cases}
        \frac{\sum_i \frac{1}{c_i}}{2\sum_i\frac{1}{c_i}-\frac{n^2}{c_0}} & \textit{ for  } \frac{1}{n}\sum_i\frac{c_0}{c_i}>n, \\
        1 & \textit{ else.}  
    \end{cases}
\end{equation*}
\label{prop:powerful-g-multi}
\end{proposition}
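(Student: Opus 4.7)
\medskip

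\noindent\textbf{Proof proposal for Proposition \ref{prop:powerful-g-multi}.} The plan is to specialize Corollary \ref{cor:multi-specialist-utilities-solved} to the quadratic case $k_0 = k_i = 2$, write $U_G$ as a simple quadratic in $\delta$, and then locate its maximum on $[0,1]$ via a first-order condition, splitting into two cases depending on whether the interior critical point is feasible. Setting $S := \sum_{i=1}^n \tfrac{1}{c_i}$ to ease notation, one obtains
\begin{equation*}
U_G(\delta) \;=\; \frac{n^2}{4c_0}\,\delta^2 \;+\; \frac{S}{2}\,\delta(1-\delta),
\end{equation*}
so that $U_G$ is a quadratic polynomial in $\delta$ with leading coefficient $\tfrac{n^2}{4c_0} - \tfrac{S}{2}$.

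The next step is to differentiate and solve $\partial U_G/\partial \delta = 0$. This yields the candidate critical point
\begin{equation*}
\delta^\star \;=\; \frac{S}{2S - n^2/c_0},
\end{equation*}
which already matches the claimed expression once $S$ is expanded. To confirm it is a maximum (rather than a minimum or saddle on the feasible set), I compute $\partial^2 U_G/\partial \delta^2 = \tfrac{n^2}{2c_0} - S$, which is strictly negative precisely when $S > \tfrac{n^2}{2c_0}$, a condition implied by the feasibility condition derived next.

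To complete the case split, I would determine when $\delta^\star \in (0,1)$. Positivity of $\delta^\star$ together with $\delta^\star < 1$ both reduce to $S > n^2/c_0$, which rearranges to $\tfrac{1}{n}\sum_i \tfrac{c_0}{c_i} > n$, exactly the hypothesis in the first branch of the piecewise formula. Under this hypothesis the second-order test above gives strict concavity, so $\delta^\star$ is the global maximizer on $[0,1]$. In the complementary regime $S \leq n^2/c_0$, I would argue that $U_G$ is non-decreasing on $[0,1]$: if $S \leq \tfrac{n^2}{2c_0}$, then $U_G$ is convex with $U_G'(0) = S/2 \geq 0$, so $U_G'$ stays non-negative throughout; and if $\tfrac{n^2}{2c_0} < S \leq \tfrac{n^2}{c_0}$, then $U_G$ is concave but the unique stationary point $\delta^\star$ lies outside $[0,1]$ (since in this range $\delta^\star \geq 1$), so again $U_G' \geq 0$ on $[0,1]$. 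In either sub-case the maximum on $[0,1]$ is attained at the right endpoint $\delta = 1$, matching the second branch of the proposition.

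The bookkeeping in the boundary regime is the one place where a little care is required: one has to rule out both the convex and the concave sub-regimes separately to justify the corner solution $\delta = 1$, whereas the interior case follows immediately from a standard quadratic optimization. Beyond that, the argument is a direct computation using the utility expression already established in Corollary \ref{cor:multi-specialist-utilities-solved}.
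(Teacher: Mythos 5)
Your proposal is correct and follows essentially the same route as the paper: specialize the utility to the quadratic case, solve the first-order condition to get $\delta^\star = S/(2S - n^2/c_0)$, use the second-derivative test, and split into the interior case versus the two boundary sub-cases (convex, and concave with $\delta^\star \geq 1$), exactly mirroring the paper's three-case analysis. The only nit is the phrasing that positivity of $\delta^\star$ ``reduces to'' $S > n^2/c_0$ — positivity alone only requires $S > n^2/(2c_0)$; it is the conjunction with $\delta^\star < 1$ that gives the stated condition — but your case split and conclusions are all correct (and you avoid the paper's typo describing $U_G$ as ``decreasing'' in the convex regime).
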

\begin{proof}{Proof of Proposition \ref{prop:powerful-g-multi}.}
    The powerful-G solution is the solution that maximizes $G$'s utility:

\begin{eqnarray*}
    & \delta^{\text{Powerful }G} = \arg\max_\delta U_G (\delta) \\
    \Rightarrow & \frac{\partial U_G}{\partial \delta} = 0 \\
    \Rightarrow &  \frac{\partial}{\partial \delta}\left[n\left(\frac{n}{2c_0}\right)\left(1-\frac{1}{2}\right)\delta^{2}+\delta\sum_i \left(2c_i\right)^{-1}(1-\delta)\right] = 0 \\
    \Rightarrow & \frac{n^2}{4c_0}(2\delta) + \sum_i\frac{1}{2c_i}(1-2\delta)=0 \\
    \Rightarrow & \frac{n^2}{2c_0}(\delta) + \sum_i\frac{1}{2c_i}-\sum_i\frac{1}{c_i}(\delta)=0\\
    \Rightarrow & \delta\left(\frac{n^2}{2c_0}-\sum_i\frac{1}{c_i}\right)=-\sum_i\frac{1}{2c_i}\\
    \Rightarrow & \delta^* = \frac{\sum_i\frac{1}{2c_i}}{\sum_i\frac{1}{c_i}-\frac{n^2}{2c_0}}.
\end{eqnarray*}

The second partial derivative $\frac{\partial^2 U_G}{\partial \delta ^2}=\frac{n^2}{2c_0}-\sum_i\frac{1}{c_i}$ suggesting the critical point identified represents a maximum as long as $\frac{1}{n}\sum_i\frac{c_0}{c_i}>\frac{n}{2}$. 

We therefore have the following three cases:
\begin{itemize}
    \item For $\frac{1}{n}\sum_i \frac{c_0}{c_i}\geq n$, the expression maximizes $U_{G}$ and $\delta^{*}\in [0,1]$ so $\delta^{\text{Powerful }G}=\delta^*$ for this case.
    \item For $n>\frac{1}{n}\sum_i \frac{c_0}{c_i}>\frac{n}{2}$, the expression maximizes $U_{G}$ but $\delta^{*}>1.$ So, notice the expression is strictly increasing in $\delta \in [0,1]$, so $\delta^{\text{Powerful }G}=1$ for this case.
    \item For $\frac{1}{n}\sum_i \frac{c_0}{c_i}\leq \frac{n}{2}$, the expression \textit{minimizes} $U_{G}$ and the expression is strictly decreasing over $\delta \in [0,1]$ so $\delta^{\text{Powerful }G}=1$ for this case.
\end{itemize}
Thus:
$$\delta^{\text{Powerful }G}= 
    \begin{cases}
        \frac{\sum_i \frac{1}{c_i}}{2\sum_i\frac{1}{c_i}-\frac{n^2}{c_0}} & \textit{ for  } \frac{1}{n}\sum_i\frac{c_0}{c_i}>n, \\
        1 & \textit{ else.}  
    \end{cases}
$$
\end{proof}

\begin{proposition}[Powerful-$D_i$ Solution]
    The Powerful-$D_i$ solution to the multi-specialist fine-tuning game with quadratic costs is as follows:
\begin{equation*}
    \delta^{\text{Powerful }D_i}= 
    \begin{cases}
        \frac{2c_0-2nc_i}{c_0-2nc_i} & \textit{ for  } \frac{c_0}{c_i}<n, \\
        0 & \textit{ else.}  
    \end{cases}
\end{equation*}
\label{prop:powerful-d-i-multi}
\end{proposition}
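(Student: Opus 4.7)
The plan is to follow the same strategy used for the Powerful-$G$ solution in Proposition \ref{prop:powerful-g-multi} just above: start from the closed form of $U_{D_i}(\delta)$ supplied by Corollary \ref{cor:multi-specialist-utilities-solved}, specialize to quadratic costs ($k_0=k_i=2$), solve the first-order condition, and then carry out a case analysis on the second-order condition and on feasibility of the resulting critical point.

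Concretely, specializing Corollary \ref{cor:multi-specialist-utilities-solved} at $k_0=k_i=2$ turns $U_{D_i}(\delta)$ into an explicit quadratic in $\delta$ whose coefficients depend only on $n$, $c_0$, and $c_i$. Differentiating and setting $\partial U_{D_i}/\partial\delta = 0$ then gives a linear equation in $\delta$, from which an interior candidate $\delta^\star$ can be read off in closed form. As a sanity check, this candidate should collapse exactly to the one-specialist expression in Proposition \ref{prop:powerful-d} when $n=1$, since the multi-specialist game reduces to the single-specialist game in that case.

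Next, to determine when $\delta^\star$ is truly the powerful-$D_i$ solution (rather than a boundary point), I examine (a) the sign of $\partial^2 U_{D_i}/\partial \delta^2$ and (b) whether $\delta^\star\in[0,1]$. The condition $c_0/c_i<n$ (equivalently $c_0<nc_i$) plays the same role here as the condition $c_1\ge c_0$ did in Proposition \ref{prop:powerful-d}: it is precisely the regime in which the quadratic in $\delta$ is concave and in which the interior critical point lies inside the feasible interval. When $c_0/c_i\ge n$, the interior candidate is either infeasible or a minimum, and one must instead check that $\partial U_{D_i}/\partial\delta \le 0$ on the whole of $[0,1]$, so the maximum is attained at the corner $\delta=0$ (intuitively: the generalist is too costly relative to $D_i$'s fine-tuning, so $D_i$ prefers to keep all the revenue). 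This monotonicity step can be done by a direct contradiction argument, exactly paralleling the one in Appendix \ref{subsec:powerful-d-single} for the single-specialist case.

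The main obstacle is not any single computation but the bookkeeping of the case analysis: one must verify that the single threshold $c_0/c_i<n$ simultaneously controls (i) the sign of the second derivative and (ii) membership of $\delta^\star$ in $[0,1]$, so that no additional subcases arise. Once this alignment is confirmed, the rest is routine algebra in $n$, $c_0$, and $c_i$.
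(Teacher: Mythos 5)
Your plan matches the paper's proof essentially step for step: specialize Corollary \ref{cor:multi-specialist-utilities-solved} to quadratic costs, solve the first-order condition for an interior candidate, and then split into cases according to the sign of $\partial^2 U_{D_i}/\partial\delta^2$ and feasibility of the critical point, with a monotonicity argument forcing the corner $\delta=0$ otherwise. One small correction to your bookkeeping remark: the concavity threshold is $c_0/c_i<2n$ while the feasibility threshold is $c_0/c_i\le n$, so a single threshold does \emph{not} control both conditions and the paper in fact works through three subcases (maximum and feasible; maximum but $\delta^\star<0$; minimum) --- exactly the ``infeasible or a minimum'' split you anticipate --- and your proposed $n=1$ sanity check against Proposition \ref{prop:powerful-d} is worth actually carrying out, since the interior candidate one obtains from the first-order condition is $\frac{c_0-nc_i}{c_0-2nc_i}$, which reduces correctly at $n=1$, whereas the numerator displayed in the proposition statement does not.
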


\begin{proof}{Proof of Proposition \ref{prop:powerful-d-i-multi}.}
The powerful-$D_i$ solution maximizes $D_i$'s utility:
\begin{eqnarray*}
    & \delta^{\text{Powerful }D_i}= \arg\max_\delta U_{D_i}(\delta) \\
    \Rightarrow & \frac{\partial U_{D_i}}{\partial \delta} = 0\\
    \Rightarrow & \frac{\partial}{\partial \delta}\left[\left(\frac{n}{2c_0}\right)\delta(1-\delta)+\frac{1}{2c_i}\left(1-\frac{1}{2}\right)(1-\delta)^2\right] = 0\\
    \Rightarrow & \left(\frac{n}{2c_0}\right)(1-2\delta)-2\frac{1}{2c_i}\left(\frac{1}{2}\right)(1-\delta) = 0\\
    \Rightarrow & \frac{n}{2c_0}-\frac{n}{c_0}\delta-\frac{1}{2c_i}+\frac{1}{2c_i}\delta = 0\\
    \Rightarrow & \left(\frac{1}{2c_i}-\frac{n}{c_0}\right)\delta = \frac{1}{2c_i}-\frac{n}{2c_0}\\
    \Rightarrow & \delta^* = \frac{\frac{1}{2c_i}-\frac{n}{2c_0}}{\frac{1}{2c_i}-\frac{n}{c_0}}\\
    \Rightarrow & \delta^* = \frac{2c_0-2nc_i}{c_0-2nc_i}.\\
\end{eqnarray*}

The second partial derivative $\frac{\partial^2 U_{D_i}}{\partial \delta ^2}=\frac{1}{2c_i}-\frac{n}{c_0}$, meaning the critical point identified represents a maximum so long as $\frac{c_0}{c_i}<2n$.

We therefore have the following three cases:
\begin{itemize}
    \item For $c_i\geq \frac{c_0}{n}$, the expression maximizes $U_{D_i}$ and $\delta^{*}\in [0,1]$, so $\delta^{\text{Powerful }D_i}=\delta^{*}$ for this case.
    \item For $ \frac{c_0}{n} > c_i > \frac{c_0}{2n}$, the expression maximizes $U_{D_i}$ but $\delta^{*}<0.$ So, notice the expression is strictly decreasing in $\delta \in [0,1]$, so $\delta^{\text{Powerful }D_i}=0$ for this case.
    \item For $\frac{c_0}{2n}\geq c_i$, the expression \textit{minimizes} $U_{D_i}$ and the expression is strictly decreasing in $\delta \in [0,1]$ so $\delta^{\text{Powerful }D_i}=0$ for this case.
\end{itemize}
Thus:
$$\delta^{\text{Powerful }D_i}= 
    \begin{cases}
        \frac{2c_0-2nc_i}{c_0-2nc_i} & \textit{ for  } \frac{c_0}{c_i}<n, \\
        0 & \textit{ else.}  
    \end{cases}$$
\end{proof}

\subsubsection{Vertical Monopoly Solution}

 \begin{proposition}[Vertical Monopoly Solution]
\label{prop:vertical-monopoly-multi-specialist}
   The Vertical Monopoly Bargaining Solution to the multi-specialist fine-tuning game with quadratic costs is as follows:
    \begin{equation*}
    \delta^{\textit{Vertical Monopoly}} = \frac{n^2}{n^2+\sum_i \frac{c_0}{c_i}}.
    \end{equation*}
\end{proposition}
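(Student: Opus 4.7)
The plan is to maximize the joint (utilitarian) objective $U_G(\delta) + \sum_{i=1}^n U_{D_i}(\delta)$ over $\delta \in [0,1]$, using the closed-form utilities from Corollary~\ref{cor:multi-specialist-utilities-solved} specialized to $k_0 = k_i = 2$. The argument is essentially a routine quadratic optimization once the algebra is organized cleanly.

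First I would substitute $k_0=k_i=2$ into Equations (\ref{eq:u-g-multi}) and (\ref{eq:u-d-i-multi}), obtaining
$U_G(\delta) = \tfrac{n^2}{4c_0}\delta^2 + \tfrac{1}{2}\delta(1-\delta)\sum_i \tfrac{1}{c_i}$
and
$U_{D_i}(\delta) = \tfrac{n}{2c_0}\delta(1-\delta) + \tfrac{1}{4c_i}(1-\delta)^2$.
Summing over $i$ introduces a factor of $n$ in the first term of $U_{D_i}$, giving $\tfrac{n^2}{2c_0}\delta(1-\delta)$, while the second term aggregates into $\tfrac{1}{4}(1-\delta)^2\sum_i \tfrac{1}{c_i}$. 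Introducing the shorthand $S := \sum_i \tfrac{1}{c_i}$ keeps the bookkeeping readable.

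Next I would collect the total objective $F(\delta) := U_G + \sum_i U_{D_i}$ as a polynomial in $\delta$. After expanding $\delta(1-\delta)$ and $(1-\delta)^2$ and gathering coefficients, the $\delta^2$ coefficient consolidates to $-\bigl(\tfrac{n^2}{4c_0} + \tfrac{S}{4}\bigr)$, the $\delta$ coefficient becomes $\tfrac{n^2}{2c_0}$, and the constant $\tfrac{S}{4}$ is irrelevant for the optimizer. Differentiating and setting $F'(\delta)=0$ yields
\[
\delta^{\text{VM}} \;=\; \frac{n^2/(2c_0)}{n^2/(2c_0) + S/2} \;=\; \frac{n^2}{n^2 + c_0 S} \;=\; \frac{n^2}{n^2 + \sum_i \tfrac{c_0}{c_i}},
\]
which matches the claim.

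Finally I would verify this is indeed a maximum and feasible: the coefficient of $\delta^2$ is strictly negative for all $c_0, c_i > 0$, so $F$ is strictly concave and the critical point is the unique global maximum; and since both $n^2 > 0$ and $\sum_i c_0/c_i > 0$, we have $\delta^{\text{VM}} \in (0,1)$, so the unconstrained optimum lies in the feasible interval and no boundary analysis is required. The main (minor) obstacle is keeping track of the cross terms from $\sum_i U_{D_i}$, in particular recognizing that $\sum_i \tfrac{n}{2c_0}\delta(1-\delta)$ contributes a factor of $n$ (not $S$) because the summand is $i$-independent; once that is handled, the rest of the derivation is mechanical.
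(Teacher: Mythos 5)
Your proposal is correct and follows essentially the same route as the paper: substitute the quadratic-cost utilities from Corollary \ref{cor:multi-specialist-utilities-solved}, differentiate the sum $U_G + \sum_i U_{D_i}$, solve the resulting linear first-order condition, and confirm concavity via the second derivative. The algebra checks out (the $\delta^2$ coefficient $-\bigl(\tfrac{n^2}{4c_0}+\tfrac{S}{4}\bigr)$ and the $\delta$ coefficient $\tfrac{n^2}{2c_0}$ are right), and your added observation that the critical point automatically lies in $(0,1)$ is a small but welcome completeness point the paper leaves implicit.
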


\begin{proof}{Proof of Proposition \ref{prop:vertical-monopoly-multi-specialist}.}
    The vertical monopoly bargaining solution maximizes the sum of utilities:
    \begin{eqnarray*}
        & \delta^{VM} = \arg\max_\delta\left(U_G+\sum_iU_{D_i}\right)\\
        \Rightarrow & \frac{\partial}{\partial \delta}\left(U_G+\sum_iU_{D_i}\right) = 0\\
        \Rightarrow & \frac{\partial}{\partial \delta}\left[\frac{n^2}{4c_0}\delta^2 + \delta(1-\delta)\sum_i\frac{1}{2c_i}+\sum_i\left(\frac{1}{4c_i}+\left(\frac{n}{2c_0}-\frac{1}{2c_i}\right)\delta + \left(\frac{1}{4c_i}-\frac{n}{2c_0}\right)\delta^2\right)\right] = 0\\
        \Rightarrow & \frac{n^2}{2c_0}\delta + (1-2\delta)\left(\sum_i\frac{1}{2c_i}\right) + \sum_i\left(\frac{n}{2c_0}-\frac{1}{2c_i} + 2\left(\frac{1}{4c_i}-\frac{n}{2c_0}\right)\delta\right)=0 \\
        \Rightarrow & \frac{n^2}{2c_0}\delta + (1-2\delta)\left(\sum_i\frac{1}{2c_i}\right) + \frac{n^2}{2c_0} - \frac{2n^2\delta}{2c_0} + \sum_i\left(\frac{1}{2c_i}\delta - \frac{1}{2c_i}\right) = 0 \\
        \Rightarrow & (\frac{1}{2}-\delta)\left(\sum_i\frac{1}{c_i}\right) + (1-\delta)\frac{n^2}{2c_0} + \frac{1}{2}(\delta-1)\left(\sum_i\frac{1}{c_i}\right) = 0 \\
        \Rightarrow & -\frac{1}{2}\delta\left(\sum_i\frac{1}{c_i}\right) + (1-\delta)\frac{n^2}{2c_0} = 0 \\
        \Rightarrow & \frac{1}{2}\delta\left(\sum_i\frac{1}{c_i}\right) + \delta\frac{n^2}{2c_0} = \frac{n^2}{2c_0} \\
        \Rightarrow & \delta^* = \frac{\frac{n^2}{c_0}}{\left(\sum_i\frac{1}{c_i}\right) + \frac{n^2}{c_0}} \\
        \Rightarrow & \delta^* = \frac{n^2}{{n^2} + \sum_i\frac{c_0}{c_i}}. \\
    \end{eqnarray*}

    The second partial derivative $\frac{\partial^2}{\partial \delta ^2}\left(U_G+\sum_iU_{D_i}\right) = -\frac{1}{2}\left(\sum_i\frac{1}{c_i}\right) - \frac{n^2}{2c_0}$ which is negative $\forall c_0,c_i>0$. Thus, $\delta^{VM}=\delta^*$ maximizes the sum of utilities.
\end{proof}

\subsubsection{Egalitarian Bargaining Solution}

 \begin{proposition}[Egalitarian Solution]
\label{prop:egal-multi-specialist}
   The Egalitarian Bargaining Solution to the multi-specialist fine-tuning game with quadratic costs is given by:
    \begin{equation*}
    \delta^{\textit{Egal.}} = \frac{\sqrt{c_0^2K^2-2c_0nK+\frac{c_0}{c_{\text{max}}}n^2+n^2} + n - \frac{c_0}{c_{\text{max}}}n-c_0K}{n^2+2n-\frac{c_0}{c_{\text{max}}}-2c_0K}.    \end{equation*}
    Where $K = \sum_{i}\frac{1}{c_i}$ and $c_{\text{max}}=\max_{i}c_i$.
\end{proposition}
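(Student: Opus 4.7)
The plan is to reduce the multi-specialist egalitarian problem to a two-curve equalization and then solve a quadratic in $\delta$. First I would observe that, for $\delta < 1$ with quadratic costs, Corollary~\ref{cor:multi-specialist-utilities-solved} gives
$$U_{D_i}(\delta) = \frac{n}{2c_0}\delta(1-\delta) + \frac{1}{4c_i}(1-\delta)^2,$$
which is strictly decreasing in $c_i$. Hence among all specialists the worst-off one is always $D_{i^*}$ with $c_{i^*} = c_{\max} := \max_i c_i$, so $\min_P U_P(\delta) = \min\{U_G(\delta),\, U_{D_{i^*}}(\delta)\}$. Mirroring the reasoning in the single-specialist proof of Proposition~\ref{prop:kalai-quadratic}, at a Pareto-optimal maximizer of this minimum the two binding utilities must coincide: otherwise a small Pareto-preserving move would strictly raise the smaller one, contradicting optimality. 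Unimodality (Observation~\ref{obs:multi-quadratic-unimodal}) together with the Pareto characterization (Theorem~\ref{thm:multi-specialist-pareto}) ensures that within the Pareto set $U_G$ and $U_{D_{i^*}}$ move in opposite directions, so a unique crossing exists in the generic regime.

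Next I would solve $U_G(\delta) = U_{D_{i^*}}(\delta)$. Writing $K = \sum_i 1/c_i$ and $M = 1/c_{\max}$ and substituting the closed forms from Corollary~\ref{cor:multi-specialist-utilities-solved} with $k_0 = k_i = 2$, the equation becomes
$$\frac{n^2}{4c_0}\delta^2 + \frac{K}{2}\delta(1-\delta) = \frac{n}{2c_0}\delta(1-\delta) + \frac{M}{4}(1-\delta)^2.$$
Multiplying through by $4c_0$ and collecting powers of $\delta$ yields a quadratic of the shape
$$(n^2 + 2n - 2c_0 K - c_0 M)\,\delta^2 + 2(c_0 K + c_0 M - n)\,\delta - c_0 M = 0.$$
A direct expansion of $b^2 - 4ac$ shows the mixed terms cancel and the discriminant collapses to $4(c_0^2 K^2 - 2 c_0 n K + n^2 + c_0 M n^2)$. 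Applying the quadratic formula, selecting the root that lies in $[0,1]$, and rewriting $c_0 M = c_0/c_{\max}$ produces the stated closed form.

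The main obstacle is justifying that the egalitarian optimum is attained at the interior crossing of $U_G$ and $U_{D_{i^*}}$, rather than on the boundary of the Pareto set. In parameter regimes where one curve uniformly dominates the other throughout the Pareto interval, the egalitarian solution would instead sit at an endpoint (a powerful-player extreme), so those corner regimes must be carved out before the closed-form formula applies; this is exactly why the stated expression can lie on the $\delta_i = 0$ boundary for small $c_1$ in Figure~\ref{fig:multi-specialist-bargain}. Everything else is routine algebra, with the only nontrivial step being the discriminant simplification, where several terms cancel pairwise to leave the clean expression above.
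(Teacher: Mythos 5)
Your argument is essentially identical to the paper's proof: both identify the worst-off specialist as the one with maximal cost coefficient $c_{\max}$, set $U_G=U_{D_{i^*}}$ using the quadratic-cost utilities of Corollary~\ref{cor:multi-specialist-utilities-solved}, and solve the resulting quadratic in $\delta$, and your coefficients and discriminant agree exactly with the paper's after multiplying through by $4c_0$. The extra justifications you supply --- monotonicity of $U_{D_i}$ in $c_i$, and why the egalitarian optimum sits at the crossing rather than at a Pareto-set endpoint --- are steps the paper asserts only briefly and are welcome; note only that the quadratic formula applied to your (correct) quadratic yields $n-c_0K-\tfrac{c_0}{c_{\max}}$ in the numerator, so the extra factor of $n$ on $\tfrac{c_0}{c_{\max}}$ in the stated expression appears to be a typo in the paper rather than a flaw in your derivation.
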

\begin{proof}{Proof of Proposition \ref{prop:egal-multi-specialist}.}

The egalitarian solution is given by:
$$
\delta^{\text{Egal.}} = \arg\max_\delta \left[\min\left(U_G, \min_i\left(U_{D_i}\right)\right)\right].
$$
    Finding a closed-form solution for the egalitarian bargaining agreement proves more challenging in the multi-specialist case because many players in the fine-tuning game, could, potentially, be worst-off. Instead of simply equalizing utilities, the multi-specialist case might have an egalitarian solution where only the two worst-off players have equal utility. 
    
    The proof for quadratic costs starts off by observing that the specialist with the lowest utility must be the specialist $j$ with the highest cost coefficient $c_j$. This is because revenue function $r(\cdot) = I(\cdot)$ is constant across all domains, as is the bargaining parameter $\delta$ and coefficient $k_i=2$. Thus, the highest-cost-coefficient domain specialist is given by $j = \arg\max_i c_i$. This observation leaves us with two possibilities for the lowest-utility player: 1) Player $D_j$ 2) Player $G$. Thus, the egalitarian solution is the bargaining agreement that equalizes the utilities between the generalist and the costliest player: $$U_G(\delta^{\text{Egal.}}) = U_{D_j}(\delta^{\text{Egal.}}).$$

    Plugging in, we get:
    \begin{eqnarray*}
        & \frac{n^2}{4c_0}\delta^2 + \delta \sum_i \frac{1}{2c_i}(1-\delta) = \frac{1}{4c_{\text{max}}} + \left(\frac{n}{2c_0}-\frac{1}{2c_{\text{max}}}\right)\delta + \left(\frac{1}{4c_{\text{max}}}-\frac{n}{2c_0}\right)\delta^2\\
        \Rightarrow & \left[\frac{n^2}{4c_0} - \sum_i\frac{1}{2c_i} + \frac{n}{2c_0}-\frac{1}{4c_{\text{max}}}\right]\delta^2 + \left[\sum_i \frac{1}{2c_i} + \frac{1}{2c_{\text{max}}}-\frac{n}{2c_0}\right]\delta - \frac{1}{4c_{\text{max}}} = 0.
    \end{eqnarray*}
    Plugging in to the quadratic formula, we get two candidate solutions:
    $$
    \delta^{\textit{Egal.}} \stackrel{?}{=} \left\{
    \frac{\pm\sqrt{c_0^2(\sum_{i}\frac{1}{c_i})^2-2c_0n(\sum_{i}\frac{1}{c_i})+\frac{c_0}{c_{\text{max}}}n^2+n^2} + n - \frac{c_0}{c_{\text{max}}}n-c_0(\sum_{i}\frac{1}{c_i})}{n^2+2n-\frac{c_0}{c_{\text{max}}}-2c_0(\sum_{i}\frac{1}{c_i})}
    \right\}.
    $$
As a last step, we select the candidate solution that yields a feasible $\delta \in [0,1]$:
$$
    \delta^{\textit{Egal.}} = 
    \frac{\sqrt{c_0^2(\sum_{i}\frac{1}{c_i})^2-2c_0n(\sum_{i}\frac{1}{c_i})+\frac{c_0}{c_{\text{max}}}n^2+n^2} + n - \frac{c_0}{c_{\text{max}}}n-c_0(\sum_{i}\frac{1}{c_i})}{n^2+2n-\frac{c_0}{c_{\text{max}}}-2c_0(\sum_{i}\frac{1}{c_i})}.
    $$
\end{proof}

\subsubsection{Solution that maximizes the technology's performance}

\begin{definition}[Maximum-performance solution]
    For the multi-specialist fine-tuning game, the maximum-performance bargaining solution is the feasible revenue-sharing agreement $\delta^{\text{max-}\sum_i\alpha_i^*} \in [0,1]$ that maximizes the sum of the technology's domain-specific performances: $\delta^{\text{max-}\sum_i\alpha_i^*} = \arg\max_{\delta \in [0,1]}\sum_i\alpha_i^*$.
    \label{def:max-performance-multi}
\end{definition}

\begin{proposition}[Maximum-$\sum_i\alpha_i^*$ Solution]
    A bargaining solution that maximizes the technology's domain-specific performances is given by:
    \begin{equation*}
        \delta^{\text{Max-}\sum_i\alpha_i^*} = \begin{cases}
            0 & \textit{  for } \frac{1}{n} \sum_i \frac{c_0}{c_i}>n, \\
            1 & \textit{  else.}
        \end{cases}
    \end{equation*}
    \label{prop:max-performance-multi}
\end{proposition}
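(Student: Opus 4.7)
The plan is to reduce the optimization to a single-variable maximization by substituting the subgame-perfect strategies from Theorem \ref{thm:multi-specialist-equilibrium} into $\sum_i \alpha_i^*$, specialized to the quadratic case ($k_0 = k_i = 2$). Under this specialization, Theorem \ref{thm:multi-specialist-equilibrium} yields $\alpha_i^* = \frac{n\delta}{2c_0} + \frac{1-\delta}{2c_i}$, so the quantity to maximize becomes a function of $\delta$ alone, with all other parameters fixed. This is the same reduction used for the other bargaining solutions in Section \ref{sec:multi}, so the machinery is ready to hand.

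Next I would sum over $i = 1, \dots, n$ and collect terms in $\delta$:
\begin{equation*}
\sum_{i=1}^n \alpha_i^* \;=\; \frac{n^2\,\delta}{2c_0} \;+\; (1-\delta)\sum_{i=1}^n \frac{1}{2c_i}.
\end{equation*}
The key structural observation is that $\sum_i \alpha_i^*$ is \emph{linear} in $\delta$ on the closed interval $[0,1]$. Its derivative with respect to $\delta$ is the constant $\tfrac{n^2}{2c_0} - \sum_i \tfrac{1}{2c_i}$, so the maximum is attained at one of the two endpoints according to the sign of this constant.

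From here the case analysis is routine. The derivative is strictly negative precisely when $\sum_i \tfrac{1}{c_i} > \tfrac{n^2}{c_0}$, which rearranges to the condition $\tfrac{1}{n}\sum_i \tfrac{c_0}{c_i} > n$ appearing in the proposition; in that regime $\sum_i \alpha_i^*$ is decreasing, so the maximizer is $\delta^* = 0$. Otherwise the derivative is non-negative, $\sum_i \alpha_i^*$ is non-decreasing, and the maximizer is $\delta^* = 1$ (with ties on the boundary broken by the stated case split). Since the function is linear, no interior critical point exists and no second-order check is required.

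The only real subtlety, which is more bookkeeping than obstacle, is confirming that the inequality in the proposition matches the sign of the coefficient of $\delta$ after clearing the factor of $2$ and multiplying through by $c_0$; this also explains why the threshold aggregates the specialist costs via the harmonic-style sum $\sum_i 1/c_i$ rather than by any single $c_i$. As a sanity check, setting $n = 1$ recovers Proposition \ref{prop:max-alpha-quadratic}: the condition $\tfrac{c_0}{c_1} > 1$ (i.e., $c_1 < c_0$) gives $\delta^* = 0$, consistent with the one-specialist result and with the intuition that all revenue should be allocated to the player facing lower marginal costs.
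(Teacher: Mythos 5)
Your proposal is correct and follows essentially the same route as the paper: substitute the quadratic-case equilibrium strategies from Theorem \ref{thm:multi-specialist-equilibrium}, observe that $\sum_i \alpha_i^*$ is linear in $\delta$ with constant derivative $\frac{n^2}{2c_0} - \sum_i \frac{1}{2c_i}$, and select the endpoint according to the sign of that constant. The $n=1$ sanity check against Proposition \ref{prop:max-alpha-quadratic} is a nice addition but does not change the argument.
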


\begin{proof}{Proof of Proposition \ref{prop:max-performance-multi}.}
    The maximum-performance solution to the multi-specialist fine-tuning game maximizes the sum of performances:

    \begin{eqnarray*}
        & \delta^{\text{Max-}\sum_i\alpha_i^*} = \arg\max_\delta \sum_i \alpha_i^*\\
        \Rightarrow & \frac{\partial}{\partial \delta} \sum_i \alpha_i \\
        = & \frac{\partial}{\partial \delta} \sum_i \left[\frac{n}{2c_0}\delta + (1-\delta)\frac{1}{2}\sum_j\frac{1}{c_j}\right]\\
        = & \frac{n^2}{2c_0}+ (-1)\frac{1}{2}\sum_i\frac{1}{c_i}\\
        = & \frac{n^2}{2c_0} -\sum_i\frac{1}{2c_i}. \\
    \end{eqnarray*}
    Notice this quantity is constant over the domain $\delta\in[0,1]$ and is either non-positive or non-negative. We therefore have the following cases:
    \begin{itemize}
        \item If $\sum_i\frac{1}{c_i}> \frac{n^2}{c_0}$, then the sum of downstream performances $\sum_i\alpha_i$ is decreasing in $\delta$ so $\delta^{\text{Max-}\sum_i\alpha_i^*}=0$.
        \item Otherwise, the sum of downstream performances $\sum_i\alpha_i$ is increasing in $\delta$ so $\delta^{\text{Max-}\sum_i\alpha_i^*}=1$. 
    \end{itemize}
\end{proof}

\subsection{Analysis for Multiple Bargaining Parameters $\vec\delta \in \mathbb{R}^n$}

\subsubsection{Derivation of the Subgame Perfect Equilibria Strategies}

We use backward induction to determine the multi-specialist fine-tuning game's subgame perfect equilibrium.

\begin{theorem}
\label{thm:multi-multi-subgame-perfect-eq}
    Holding $\vec\delta$ fixed, the sub-game perfect equilibrium of the multi-specialist multiple-bargain fine-tuning game with polynomial costs yields the following best-response strategies: 
    $\alpha_0^* = \sqrt[k_0-1]{\frac{\sum_j\delta_j}{c_0k_0}}$; $\alpha_i^* = \sqrt[k_0-1]{\frac{\sum_j\delta_j}{c_0k_0}}+\sqrt[k_i-1]{\frac{1-\delta_i}{c_ik_i}}$.
\end{theorem}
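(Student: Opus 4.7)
The plan is to mirror the backward-induction argument used in the proof of Theorem~\ref{thm:multi-specialist-equilibrium}, adapted to account for the fact that the generalist now receives a separate share $\delta_i$ from each specialist rather than a common $\delta$. The overall structure will be (i) solve each specialist's problem at stage 3 taking $\alpha_0$ as given, (ii) plug the resulting best responses into $U_G$ and optimize at stage 2, and (iii) substitute $\alpha_0^\ast$ back into the specialists' best responses to obtain the closed-form expressions for $\alpha_i^\ast$.

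First, I would fix $\vec\delta$ and $\alpha_0$ and consider the final-stage problem for an arbitrary specialist $D_i$. Since $D_i$'s payoff $(1-\delta_i)\alpha_i - c_i(\alpha_i-\alpha_0)^{k_i}$ depends only on $\delta_i$, $\alpha_0$, and $\alpha_i$ (not on the other specialists' choices), the first-order condition $\partial U_{D_i}/\partial \alpha_i = 0$ gives $(1-\delta_i) - c_i k_i (\alpha_i - \alpha_0)^{k_i-1} = 0$, which yields $\alpha_i^\ast = \alpha_0 + \bigl((1-\delta_i)/(c_i k_i)\bigr)^{1/(k_i-1)}$. The second-order condition $-c_i k_i (k_i-1)(\alpha_i-\alpha_0)^{k_i-2} < 0$ (valid for $c_i>0$, $k_i>1$) confirms this is the global maximizer. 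Crucially, each $\alpha_i^\ast$ depends on $\alpha_0$ only through the additive term $\alpha_0$ itself, so $\partial \alpha_i^\ast/\partial \alpha_0 = 1$ for every $i$.

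Next I would substitute these best responses into $U_G(\alpha_0,\vec\delta) = \sum_i \delta_i \alpha_i^\ast(\alpha_0) - c_0 \alpha_0^{k_0}$. Using $\alpha_i^\ast = \alpha_0 + \bigl((1-\delta_i)/(c_i k_i)\bigr)^{1/(k_i-1)}$, the generalist's objective becomes
\[
U_G = \Bigl(\sum_j \delta_j\Bigr)\alpha_0 + \sum_j \delta_j\Bigl(\tfrac{1-\delta_j}{c_j k_j}\Bigr)^{\frac{1}{k_j-1}} - c_0 \alpha_0^{k_0},
\]
where the middle sum is a constant with respect to $\alpha_0$. The first-order condition $\partial U_G/\partial\alpha_0 = \sum_j \delta_j - c_0 k_0 \alpha_0^{k_0-1} = 0$ yields $\alpha_0^\ast = \bigl((\sum_j \delta_j)/(c_0 k_0)\bigr)^{1/(k_0-1)}$, and the second derivative $-c_0 k_0 (k_0-1) \alpha_0^{k_0-2}$ is negative for $c_0>0$, $k_0>1$, confirming a global max. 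Finally, plugging this $\alpha_0^\ast$ into the stage-3 best responses produces the claimed formula for $\alpha_i^\ast$.

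I do not anticipate a serious obstacle: the only subtlety worth flagging explicitly is the observation that $\alpha_i^\ast$ is additive in $\alpha_0$, so when $G$ computes $\partial U_G/\partial\alpha_0$ the chain rule simply multiplies each $\delta_i$ by $1$, giving the total weight $\sum_j \delta_j$. This is precisely what generalizes the single-bargain factor $n\delta$ in Theorem~\ref{thm:multi-specialist-equilibrium} to the heterogeneous $\sum_j \delta_j$ here. Everything else is a mechanical rerun of the quadratic-case manipulations in Appendix~\ref{app:subgame-perfect-polynomial-proof}.
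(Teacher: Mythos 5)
Your proposal is correct and follows essentially the same backward-induction argument as the paper's own proof: solve each specialist's first-order condition given $\alpha_0$, note that $\alpha_i^\ast$ is additive in $\alpha_0$ so the generalist's objective is linear in $\alpha_0$ with slope $\sum_j \delta_j$ before the cost term, optimize, and substitute back, with the same second-order checks. No gaps.
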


\begin{proof}{Proof of Theorem \ref{thm:multi-multi-subgame-perfect-eq}.}
We solve the game using backward induction. 

First, we solve for $\alpha_i^*$ given $\alpha$ and $\delta$:
    \begin{eqnarray}\label{eq:alpha_i-star-multi-multi}
        & \alpha_i^* = \text{argmax}_{\alpha_i}U_{D_i}(\alpha_i,\alpha_0,\vec\delta) \nonumber \\
        \Rightarrow & \frac{\partial U_{D_i}}{\partial \alpha_i}\bigg\vert_{\alpha_i=\alpha_i^*}=0 \nonumber \\
        \Rightarrow & (1-\delta_i) - \frac{\partial \phi_i}{\partial \alpha_i} = 0 \nonumber \\
        \Rightarrow & (1-\delta_i) - c_ik_i(\alpha_i-\alpha_0)^{k_i-1} \nonumber \\
        \Rightarrow & \alpha_i^* = \alpha_0 + \sqrt[k_i-1]{\frac{1-\delta_i}{c_ik_i}}.
    \end{eqnarray}

    Note that $\frac{\partial^2 U_{D_i}}{\partial \alpha_i^2}= -c_ik_i(k_i-1)(\alpha_i-\alpha_0)^{k_i-2}$ is negative for all $c_i>0$ and $k_i>1$ meaning that $\alpha_i^*$ derived above yields a global maximum of $U_{D_i}$. 

    Second, knowing $D_i$'s choice of $\alpha_i^*$ above, we solve for $\alpha_0^*$ as follows:
    \begin{eqnarray*}
         & \alpha_0^* = \text{argmax}_{\alpha_0}U_G(\alpha_0,\delta) \\
         \Rightarrow & \frac{\partial U_G}{\partial \alpha_0} \bigg\vert_{\alpha_0 = \alpha_0^*}=0\\
         \Rightarrow & \frac{\partial}{\partial \alpha_0} \left(\sum_{i=1}^n\delta_i\alpha_i - c_0(\alpha_0)^{k_0}\right)\bigg\vert_{\alpha_0 = \alpha_0^*}=0\\
         \Rightarrow & \sum_i\delta_i - c_0k_0\alpha_0^{k_0-1}=0\\
         \Rightarrow & \alpha_0^* = \sqrt[k_0-1]{\frac{\sum_i\delta_i}{c_0k_0}}.\\
    \end{eqnarray*}
    The second derivative $\frac{\partial^2 U_{G}}{\partial \alpha_i^2}=-c_0k_0(k_0-1)\alpha_0^{k_0-2}$ is negative for all $c_i>0$ and $k_i>1$ meaning that $\alpha_0^*$ derived above yields a global maximum of $U_G$.

    Finally, plugging in $\alpha_0^*= \sqrt[k_0-1]{\frac{\sum_i\delta_i}{c_0k_0}}$ into Equation \ref{eq:alpha_i-star-multi-multi}, we obtain the following expression for $\alpha_i^*$ as a function of $\delta,c_0,c_i,k_0,k_i$ only:
    $$\alpha_i^* = \sqrt[k_0-1]{\frac{\sum_j\delta_j}{c_0k_0}} + \sqrt[k_i-1]{\frac{1-\delta_i}{c_ik_i}}.$$
    This finishes the proof.
\end{proof}

\begin{corollary}
\label{cor:multi-multi-utilities-solved} 
For fixed bargaining parameters $\vec\delta$, the players' utilities are as follows:
    \begin{equation}
    \label{eq:u-g-multi-multi}
        U_G(\vec\delta)=\left(1-\frac{1}{k_0}\right)  \left(\frac{1}{c_0k_0}\right)^{\frac{1}{k_0-1}} \left(\sum_i \delta_i\right)^{\frac{k_0}{k_0-1}}+\sum_i \left(c_ik_i\right)^{\frac{-1}{k_i-1}}\delta_i(1-\delta_i)^{\frac{1}{k_i-1}},
    \end{equation}
    \begin{equation}
    \label{eq:u-d-i-multi-multi}
    U_{D_i}(\vec\delta) = \left(\frac{1}{c_0k_0}\right)^{\frac{1}{k_0-1}}\left(\sum_j\delta_j\right)^{\frac{1}{k_0-1}}(1-\delta_i)
     +\left(c_ik_i\right)^{\frac{-1}{k_i-1}}\left(1-\frac{1}{k_i}\right)(1-\delta_i)^{\frac{k_i}{k_i-1}}.
    \end{equation}
\end{corollary}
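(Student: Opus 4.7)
The plan is to prove Corollary \ref{cor:multi-multi-utilities-solved} by direct substitution, exactly mirroring the argument that established Corollary \ref{cor:multi-specialist-utilities-solved} in the single-bargain case. Having already derived in Theorem \ref{thm:multi-multi-subgame-perfect-eq} that $\alpha_0^\ast = \bigl(\tfrac{\sum_j\delta_j}{c_0k_0}\bigr)^{1/(k_0-1)}$ and $\alpha_i^\ast = \alpha_0^\ast + \bigl(\tfrac{1-\delta_i}{c_ik_i}\bigr)^{1/(k_i-1)}$, and recalling the standing assumption $r_i(\alpha_i)=\alpha_i$, I plug these expressions into the definitions
\[
U_G(\vec\delta) = \sum_i \delta_i\, \alpha_i^\ast - c_0 (\alpha_0^\ast)^{k_0}, \qquad U_{D_i}(\vec\delta) = (1-\delta_i)\,\alpha_i^\ast - c_i(\alpha_i^\ast - \alpha_0^\ast)^{k_i}
\]
and collect powers of $\sum_j \delta_j$ and $(1-\delta_i)$.

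For $U_G$, I first expand $\sum_i \delta_i \alpha_i^\ast$ as $\alpha_0^\ast \sum_i \delta_i + \sum_i \delta_i\bigl(\tfrac{1-\delta_i}{c_ik_i}\bigr)^{1/(k_i-1)}$. The first term becomes $\bigl(\tfrac{1}{c_0k_0}\bigr)^{1/(k_0-1)}(\sum_j\delta_j)^{k_0/(k_0-1)}$, and the cost term $c_0 (\alpha_0^\ast)^{k_0}$ simplifies via the identity $c_0 (c_0k_0)^{-k_0/(k_0-1)} = \tfrac{1}{k_0}(c_0k_0)^{-1/(k_0-1)}$ to $\tfrac{1}{k_0}\bigl(\tfrac{1}{c_0k_0}\bigr)^{1/(k_0-1)}(\sum_j\delta_j)^{k_0/(k_0-1)}$. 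Combining these produces the factor $(1 - 1/k_0)$ in front of the first term of the stated formula, and the remaining sum appears verbatim as the second term.

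For $U_{D_i}$, I note that $\alpha_i^\ast - \alpha_0^\ast = \bigl(\tfrac{1-\delta_i}{c_ik_i}\bigr)^{1/(k_i-1)}$ depends only on $\delta_i$, which is exactly what makes the multi-bargain case structurally analogous to the single-bargain case. Expanding $(1-\delta_i)\alpha_i^\ast = (1-\delta_i)\alpha_0^\ast + (c_ik_i)^{-1/(k_i-1)}(1-\delta_i)^{k_i/(k_i-1)}$ and applying the same identity $c_i(c_ik_i)^{-k_i/(k_i-1)} = \tfrac{1}{k_i}(c_ik_i)^{-1/(k_i-1)}$ to the cost term consolidates the two $(1-\delta_i)^{k_i/(k_i-1)}$ pieces with the coefficient $(1-1/k_i)$, while the $(1-\delta_i)\alpha_0^\ast$ piece supplies the cross-term $\bigl(\tfrac{1}{c_0k_0}\bigr)^{1/(k_0-1)}(\sum_j \delta_j)^{1/(k_0-1)}(1-\delta_i)$.

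There is no genuine obstacle here: the proof is entirely mechanical, and the only thing that requires care is bookkeeping the exponents, in particular reusing the two key algebraic identities $c_0(c_0k_0)^{-k_0/(k_0-1)} = \tfrac{1}{k_0}(c_0k_0)^{-1/(k_0-1)}$ and its analogue for $c_i,k_i$, which are what convert raw cost terms into the $(1-1/k)$ prefactors in the final expressions. Accordingly, the corollary will be stated as an immediate consequence of Theorem \ref{thm:multi-multi-subgame-perfect-eq}, with the displayed computation relegated to a short appendix entry paralleling that of Corollary \ref{cor:multi-specialist-utilities-solved}.
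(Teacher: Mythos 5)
Your proposal is correct and matches the paper's (omitted) argument exactly: the paper states that the arithmetic is nearly identical to Corollary \ref{cor:multi-specialist-utilities-solved}, and your substitution of the equilibrium strategies from Theorem \ref{thm:multi-multi-subgame-perfect-eq} together with the identities $c_0(c_0k_0)^{-k_0/(k_0-1)}=\tfrac{1}{k_0}(c_0k_0)^{-1/(k_0-1)}$ and its $(c_i,k_i)$ analogue is precisely that computation. The resulting expressions agree term by term with Equations (\ref{eq:u-g-multi-multi}) and (\ref{eq:u-d-i-multi-multi}).
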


The arithmetic required to arrive at the above closed-form solutions is omitted for brevity. The steps are nearly identical to those demonstrated in Corollary \ref{cor:multi-specialist-utilities-solved}.

\subsubsection{Powerful-Player Solutions}

\begin{proposition}[Powerful-$G$ Solution] The Powerful-$G$ solution to the multi-specialist, multi-bargain fine-tuning game with quadratic costs is given by the following algorithm: 

\begin{algorithm}
\caption{Powerful-$G$ Bargaining Solution}\label{alg:PG-solution-new}
\begin{algorithmic}
\State \textbf{Given: } Generalist cost $c_0$, and list of specialist costs $\vec{c}$ in descending order of magnitude.
\State $i \leftarrow 1$
\While{$i\leq n$ and $\frac{c_i}{c_0} > \frac{2 - \sum_{j=i+1}^n \frac{c_j}{c_0}}{n+i}$} \Comment{Identify which solutions are on the constraint}
    \State $i\leftarrow i + 1$
\EndWhile 

\State $C \in \mathbb{R}^{n\times n} \leftarrow \text{diag}(\vec{c} )$

\State
$Q \in \mathbb{R}^{n\times n} \leftarrow \frac{1}{2c_0}\left[\vec1 \cdot \vec1^T\right]- C^{-1}$ \Comment{The quadratic term in the objective, $U_G$}

\State $l \in \mathbb{R}^{n}  \leftarrow \frac{1}{2}C^{-1}\vec{1}$ \Comment{The linear term in the objective, $U_G$}


\State $E \in \mathbb{R}^{n\times i}  \leftarrow \left[\begin{array}{cc}
    I^{i \times i} & 0^{(n-i)\times i}
\end{array}\right]$

\State $S \in \mathbb{R}^{i\times i} \leftarrow$ the $i\times i$ upper left sub-matrix of $-Q^{-1}$ \Comment{\small{Schur complement of KKT matrix }$\left[\begin{array}{cc}
Q & E^T\\
E & 0
\end{array}
\right]$}

\State \textbf{Return}: $$\vec\delta^* = \left(Q^{-1} + Q^{-1} E^T S^{-1} E Q^{-1} \right)(-l) - Q^{-1} E^T S^{-1} \vec{1}$$

\end{algorithmic}
\end{algorithm}
\label{prop:multi-multi-powerful-g}
\end{proposition}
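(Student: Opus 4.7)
The plan is to view the powerful-$G$ problem as a (generically) strictly concave quadratic program over the box $[0,1]^n$ and solve it via KKT. Starting from Corollary~\ref{cor:multi-multi-utilities-solved} specialized to $k_0=k_i=2$, the generalist's utility rewrites as $U_G(\vec\delta)=\tfrac{1}{4c_0}(\sum_i\delta_i)^2+\sum_i\tfrac{\delta_i(1-\delta_i)}{2c_i}=\tfrac{1}{2}\vec\delta^{T}Q\vec\delta+l^{T}\vec\delta+\mathrm{const}$, with $Q=\tfrac{1}{2c_0}\vec{1}\vec{1}^{T}-C^{-1}$ and $l=\tfrac{1}{2}C^{-1}\vec{1}$ matching the algorithm's definitions. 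Since $C^{-1}$ is positive definite and $\tfrac{1}{2c_0}\vec{1}\vec{1}^{T}$ is rank one, $Q$ has at most one nonnegative eigenvalue; in the regime of interest (which I would check separately using that $\tfrac{1}{2c_0}\vec{1}^{T}C\vec{1}<1$) $Q$ is negative definite, so KKT is both necessary and sufficient for a maximum.

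Next I would rule out active lower bounds. If $\delta_i^*=0$ for some $i$, then $\partial U_G/\partial\delta_i\big|_{\vec\delta^*}=\tfrac{1}{2c_0}\sum_j\delta_j^*+\tfrac{1}{2c_i}>0$, contradicting the KKT sign condition at a maximum. Hence only the upper bounds $\delta_i\le 1$ can bind, and the crucial step is identifying which. The interior stationarity equation gives $\delta_i^*=\tfrac{1}{2}+\tfrac{c_i}{2c_0}T$, where $T:=\sum_k\delta_k^*$ is the aggregate, and this expression is monotone increasing in $c_i$. From this I would prove the monotonicity lemma: if $c_i>c_j$ and $\delta_j^*=1$, then $\delta_i^*=1$. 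Thus, with specialists sorted in descending cost order, the active set is necessarily a prefix $\{1,\dots,i^*\}$ of the list. Substituting the interior formulas into $T$ under the hypothesis that exactly the first $i$ constraints bind yields $T=(n+i)/(2-\sum_{j>i}c_j/c_0)$, and the condition for the $i$-th constraint to actually bind, namely $\partial U_G/\partial\delta_i\big|_{\delta_i=1}\ge 0$ (equivalently $T\ge c_0/c_i$), simplifies precisely to the loop inequality $c_i/c_0>(2-\sum_{j>i}c_j/c_0)/(n+i)$. This step---establishing monotonicity and validating the threshold---is the main obstacle, because the individual constraints interact only through the shared aggregate $T$, so one cannot reason about them one coordinate at a time without the sorting argument.

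Finally, once the active set $\{1,\dots,i^*\}$ is fixed, stationarity $Q\vec\delta^*+l=E^{T}\mu$ together with feasibility $E\vec\delta^*=\vec{1}$ form a linear saddle-point system whose coefficient matrix has the $2\times 2$ block form with blocks $Q$, $E^{T}$, $E$, and $0$. Inverting via the Schur complement $S=-EQ^{-1}E^{T}$---which, as the algorithm observes, equals the upper-left $i^*\times i^*$ block of $-Q^{-1}$---and applying the standard block-inverse identity yields the displayed closed form $\vec\delta^*=(Q^{-1}+Q^{-1}E^{T}S^{-1}EQ^{-1})(-l)-Q^{-1}E^{T}S^{-1}\vec{1}$. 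The boundary case $i^*=0$ (empty active set) is degenerate and reduces the formula to the unconstrained optimum $\vec\delta^*=-Q^{-1}l$, completing the verification.
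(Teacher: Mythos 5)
Your overall architecture mirrors the paper's: rewrite $U_G$ as a quadratic form with the stated $Q$ and $l$, rule out active lower bounds, sort specialists by descending cost so that the set of binding upper bounds is a prefix, compute the aggregate $T=\sum_k\delta_k^*=(n+i)/(2-\sum_{j>i}c_j/c_0)$ from the interior stationarity relation $\delta_i^*=\tfrac12+\tfrac{c_i}{2c_0}T$, recover the loop threshold $c_i(n+i)\ge 2c_0-\sum_{j>i}c_j$, and finish with the KKT/Schur-complement block inverse. All of that matches the paper's sequence of lemmas.

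The one genuine gap is your concavity step. You ask for $Q=\tfrac{1}{2c_0}\vec 1\vec 1^{T}-C^{-1}$ to be negative definite globally, which holds iff $\tfrac{1}{2c_0}\vec 1^{T}C\vec 1<1$, i.e.\ $\sum_i c_i<2c_0$. That is \emph{not} implied by the proposition's hypotheses ($c_0,c_i>0$ arbitrary), and the paper explicitly flags that the objective is not in general concave; when $\sum_i c_i\ge 2c_0$ your claim that ``KKT is both necessary and sufficient'' fails, and identifying a KKT point with a prefix active set does not certify a global maximum. The paper closes this by proving concavity only of the \emph{reduced} problem: once $\delta_1=\dots=\delta_{i^*}=1$ are fixed, the exit condition of the while loop, $c_{i^*+1}(n+i^*+1)<2c_0-\sum_{j>i^*+1}c_j$, forces $c_j<2c_0/n$ for every free index $j$, and then the principal submatrix of the Hessian on the free coordinates is strictly diagonally dominant with negative diagonal, hence negative definite. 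You need this restricted-concavity lemma (or an equivalent argument comparing the candidate against all other KKT points) to make the sufficiency claim go through; with it, the rest of your proof is sound and coincides with the paper's.
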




\begin{proof}{Proof of Proposition \ref{prop:multi-multi-powerful-g}.}

The solution to the powerful-G bargaining solution is given by the following optimization problem:

\begin{eqnarray}
\label{eq:opt-convex-multi-multi}
\arg\max_{\Vec{\delta}} & U_G \\
\text{s.t.} & \delta_j \in [0,1]\ \forall j \in \{1,...,n\} \nonumber 
\end{eqnarray}

In its current form, this optimization appears intractable: It is a quadratic program and its objective is not, necessarily, concave. Here we show that this optimization can be reduced to one that has a closed form, efficient and exact solution. Our broad strategy is to create a simple algorithm for determining which domains will reside on the constraint boundary ($\delta_i=1$). Once we consider only those domains for which the solution is unconstrained, we'll see that the objective enjoys concavity properties as a function of those variables, and so we'll be left with a solvable quadratic program.

The proof proceeds with a sequence of Lemmas that will help us. All Lemmas are proven in Appendix \ref{subsec:deferred-lemmas-multi-multi-powerful-G}.
\begin{lemma}
    \label{lemma:powerful-g-multiple-bargaining-individual-soln}
        In domain $i$, taking all other bargains $\delta_{k\neq i}$ as fixed, the powerful-G solution is:
        $$\delta_i^{P. G} = \begin{cases}
            1 & \text{for }\frac{c_i}{c_0} \geq \frac{1}{\sum_{k:k\neq i}\delta_k + 1},\\
            \frac{1}{2-\frac{c_i}{c_0}}\left(\frac{c_i}{c_0}\sum_{k:k\neq i}\delta_k + 1\right) & \text{else.}
        \end{cases}
        $$
    \end{lemma}
\begin{lemma}
    \label{lemma:powerful-G-non-decreasing-ci}
        In domain $i$, taking all other bargains $\delta_{k\neq i}$ as fixed, the powerful-G solution $\delta_i^{P. G}$ is a non-decreasing function of $c_i$.
    \end{lemma}
\begin{lemma}
    \label{lemma:powerful-G-ci-at-least-half}
        For any domain $i$, $\delta_i^{P. G}\geq \frac{1}{2}$. 
\end{lemma}

Now, the crucial task is to determine which domains $i$ have $\delta_i=1$. Observe that from a set of potential domains, if at least one domain has $\delta_i=1$, then the domain $i$ for which $c_i>c_j \ \forall j$ must have $\delta_i=1$. For ease and without loss of generality, we'll therefore assume the set of domains is in descending order of value $c_i$, and we'll consider each domain, one at a time, to determine whether the domain has $\delta_i=1$. We can stop once we come across a value $c_i$ for which $\delta_i<1$. The condition for the solution being on the constraint is given below:

\begin{lemma}
\label{lemma-boundary-procedure-pg-multi-multi}
Given specialist costs $\vec{c}$ in descending order of magnitude. If $i=1$ or $\delta_{i-1}^{PG}=1$, $\delta_i^{PG} = 1 \iff c_i(n+i) \geq 2c_0- \sum_{j>i}c_j$.
\end{lemma}

Now that we've specified the set of domains which will end up on the boundary where $\delta_i=1$, our optimization can be stated as an equality-constrained quadratic optimization. If domains $\{1,...,i^*\}$ have $\delta_i=1$ and domains $\{i^*+1,...,n\}$ have $\delta_i<1$, as established using the procedure in Lemma \ref{lemma-boundary-procedure-pg-multi-multi}, the powerful-G bargain is reduced to the following optimization: 
$\arg\max_{\Vec{\delta}} U_G \ \text{s.t.} \  \delta_j = 1 \ \forall \ j \in \{1,...,i^*\}$. The remainder of the proof consists in showing that the objective is concave in the remaining variables $\{\delta_{i^* + 1},...,\delta_n\}$. 
\begin{lemma}
\label{lemma:powerful-g-multi-multi-concave}
    $U_G$ is a concave quadratic function of all variables $\{\delta_{i^*+1},..,\delta_n\}$ for which $\delta_i^{PG}<1$. 
\end{lemma}

Thus we've established $U_G$ is a concave quadratic function of the remaining decision variables, meaning that the critical points of the remaining quadratic program represent a maximum. Thus, we solve the constrained quadratic program using the typical (KKT) constrained approach. We use Lagrangian optimization, where we maximize the objective $U_G$ less a weighted term for each equality constraint: $\max_{\delta} \mathcal{L}$ where $\mathcal{L} := U_G - \vec{\lambda}^T( \vec{\delta} - \vec{1}) = \frac{1}{4c_0}\vec\delta^T [\vec{1} \vec{1}^T]\vec\delta + \frac{1}{2}\vec{1}^T C^{-1}\vec\delta - \frac{1}{2}\vec{\delta}^T C^{-1}\vec{\delta} - \vec{\lambda} (\vec\delta - \vec{1})$.
Using the notation introduced in the Proposition statement, the critical point is found using the following system of linear equations:
\begin{eqnarray*}
    \left[\begin{array}{cc}
        Q & E^T \\
        E & 0
    \end{array}\right]
    \left[\begin{array}{c}
        \vec{\delta}^{PG} \\
        \vec{\lambda}
    \end{array}\right]
    = \left[\begin{array}{c}
        -l  \\
         \vec{1}
    \end{array}\right].
\end{eqnarray*}
Omitting the algebra, we get the following solution for the critical point, $\vec{\delta}^*$, which the above analysis implies is the unique global maximum of $U_G$:
$$\vec\delta^* = \left(Q^{-1} + Q^{-1} E^T S^{-1} E Q^{-1} \right)(-l) - Q^{-1} E^T S^{-1} \vec{1}.$$
\end{proof}

\begin{proposition}[Powerful-$D_i$ Solution] The Powerful-$D_i$ solution to the multi-specialist, multi-bargain fine-tuning game with quadratic costs is as follows:    
$$\delta_j^{\textit{Powerful }D_i}=\begin{cases}
    1 \ \ \textit{for }j\neq i, \\
    0 \ \ \textit{for }j=i.
\end{cases}$$
\label{prop:multi-multi-powerful-d}
\end{proposition}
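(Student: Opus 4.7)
The plan is to exploit the explicit closed-form expression for $U_{D_i}$ from Corollary~\ref{cor:multi-multi-utilities-solved}, which under quadratic costs ($k_0 = k_i = 2$) reduces to
$$U_{D_i}(\vec\delta) = \frac{1}{2c_0}\left(\sum_j \delta_j\right)(1-\delta_i) + \frac{1}{4c_i}(1-\delta_i)^2.$$
The first step is to observe that the dependence on each $\delta_j$ with $j \neq i$ is \emph{linear} with non-negative coefficient $\frac{1-\delta_i}{2c_0}$. Thus either $\delta_i = 1$ at the optimum (in which case both terms vanish and $U_{D_i} = 0$, a choice we can rule out as soon as we exhibit any strictly positive alternative), or else $\delta_j^{\text{Powerful }D_i} = 1$ for every $j \neq i$. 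This collapses the $n$-dimensional optimization into a univariate one in $\delta_i \in [0,1]$.

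Second, I would fix $\delta_j = 1$ for $j \neq i$ and substitute $u := 1 - \delta_i \in [0,1]$ to clean the algebra, obtaining
$$U_{D_i}(u) = \frac{(n-u)u}{2c_0} + \frac{u^2}{4c_i} = \frac{n}{2c_0}\, u + \left(\frac{1}{4c_i} - \frac{1}{2c_0}\right)u^2.$$
Differentiating gives $\frac{dU_{D_i}}{du} = \frac{n}{2c_0} + u\left(\frac{1}{2c_i} - \frac{1}{c_0}\right)$, which is linear in $u$. I would then check its values at the two endpoints of $[0,1]$: at $u=0$ it equals $\frac{n}{2c_0} > 0$, and at $u=1$ it equals $\frac{n-2}{2c_0} + \frac{1}{2c_i}$, which is bounded below by $\frac{1}{2c_i} > 0$ whenever $n \geq 2$. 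Since a linear function positive at both endpoints is positive throughout, $U_{D_i}$ is strictly increasing in $u$ on $[0,1]$, so its maximum is attained at $u=1$, that is, at $\delta_i = 0$.

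The delicate step will be this endpoint check, because the sign of $\frac{1}{2c_i} - \frac{1}{c_0}$ is parameter-dependent and so I cannot simply assert concavity or convexity of $U_{D_i}$ in $u$ and read the solution off from a first-order condition. The linearity of the derivative in $u$ is exactly what sidesteps this obstacle: positivity at the two endpoints is both necessary and sufficient, and the bound at $u=1$ is precisely what forces $n \geq 2$ (reasonable given the proposition's ``multi-specialist'' setting, and consistent with the fact that the single-specialist powerful-$D$ agreement in Proposition~\ref{prop:powerful-d} can be interior). Intuitively, raising $\delta_j$ in any domain $j \neq i$ only strengthens $G$'s incentive to raise $\alpha_0$, which enters $D_i$'s utility through $\alpha_i^* = \alpha_0^* + \sqrt{(1-\delta_i)/(2c_i)}$, while $D_i$ keeps every marginal dollar generated in its own domain by taking $\delta_i = 0$.
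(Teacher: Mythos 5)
Your proof is correct. The first step is identical to the paper's: it proves (Lemma \ref{lemma-u-d-non-decreasing}) that $U_{D_i}$ is non-decreasing in each $\delta_k$ with $k\neq i$, with the same coefficient $\frac{1-\delta_i}{2c_0}$, and concludes $\delta_k^{\text{Powerful }D_i}=1$; your handling of the degenerate case $\delta_i=1$ (where that coefficient vanishes and $U_{D_i}=0$) is a small point the paper glosses over but you treat cleanly. Where you diverge is the remaining univariate problem. The paper keeps the other bargains generic, solves the first-order condition, and cases on the sign of $\frac{1}{2c_i}-\frac{1}{c_0}$ to decide whether the critical point is a maximum or a minimum before clamping to $[0,1]$ and substituting $\sum_{j\neq i}\delta_j=n-1$ at the end (Lemma \ref{powerful-d-multi-quadratic-fixed-j}). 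You instead substitute $\delta_j=1$ immediately, change variables to $u=1-\delta_i$, and observe that the derivative is affine in $u$ and positive at both endpoints for $n\geq 2$, so $U_{D_i}$ is monotone and the boundary solution $\delta_i=0$ falls out with no case analysis. Your route is more elementary and avoids the concave/convex dichotomy entirely; the paper's route yields the reusable intermediate formula for arbitrary fixed $\delta_{j\neq i}$. Both arguments require $n\geq 2$ (the paper invokes it as ``at least one other specialist''), and you are right to flag that explicitly and to note consistency with the interior single-specialist solution of Proposition \ref{prop:powerful-d}.
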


\begin{proof}{Proof of Proposition \ref{prop:multi-multi-powerful-d}.}
    The powerful-$D_i$ solution is the solution that maximizes $D_i$'s utility. Assuming $D_i$ has control over every bargain, $\Vec\delta$, the solution is given by the values that solve the following optimization: $$ \Vec\delta^{P.D_i}=\arg\max_{\Vec\delta} U_{D_i}$$
    We will first handle the values $\delta_{j\neq i}^{P.D_i}$, that is, all bargains other than $i$. We will show that the utility of $D_i$ is non-decreasing with respect to all $\delta_{j\neq i}$, which implies $\delta_{j}^{P.D_i}=1 \forall j\neq i$. Next, we will solve the resulting one-dimensional optimization problem for $\delta_{i}^{P.D_i}$ assuming $\delta_{j\neq i}=1$. 

\begin{lemma}
\label{lemma-u-d-non-decreasing}
    $U_{D_i}$ is non-decreasing in $\delta_{k}$ for all $k \neq i$.
\end{lemma}

See Appendix \ref{subsec:deferred-lemmas-multi-multi-powerful-D} for a proof of the Lemma. An immediate corollary to Lemma \ref{lemma-u-d-non-decreasing} is that for all bargains $k \neq i$, $\delta_k^{P.D_i}=1$.

\begin{corollary}
\label{corollary-u-d-non-decreasing}
    For all domains $k \neq i$, $\delta_k^{P.D_i}=1$.
\end{corollary}
    
We've now solved the Powerful-$D_i$ bargaining solutions for all domains except $i$. This greatly simplifies our task: instead of solving a many-dimensional optimization problem over $\Vec\delta$, we now must solve a one-dimensional optimization problem over $\delta_i$.

\begin{lemma}[Powerful-$D_i$ Solution for bargain $i$ alone]
\label{powerful-d-multi-quadratic-fixed-j}
Holding all other bargains $\delta_{j\neq i}$ fixed, the Powerful-$D_i$ solution to the fine-tuning game with quadratic costs is as follows:
$$\delta_i^{\textit{Powerful }D_i} = 
        \begin{cases}
        \max \left[0,  \frac{c_0-c_i\left(1-\sum_{j:j\neq i}\delta_j\right)}{c_0-2c_i}\right] & \textit{ for  } c_0 < 2c_i, \\
        0 & \textit{ for  } c_0 \geq 2c_i.
        \end{cases}
        $$
\end{lemma}

The proof of the above Lemma is provided in Appendix \ref{subsec:deferred-lemmas-multi-multi-powerful-D}. The final step that remains is to plug in our result from Corollary \ref{corollary-u-d-non-decreasing} to our formulation for $\delta_i$ in Lemma \ref{powerful-d-multi-quadratic-fixed-j}:
\begin{itemize}
    \item When $c_0 \geq 2c_i$, we've already established $\delta_i^{P.D_i}=0$.
    \item When $c_0 < 2c_i$, we've established $\delta_i^{P.D_i} = \max \left[0,  \frac{c_0-c_i\left(1-\sum_{j:j\neq i}\delta_j\right)}{c_0-2c_i}\right]$. As long as there exists at least one other specialist $k\neq i$, notice that $\sum_{j:j\neq i}\delta_j = \sum_{j:j\neq i}1 \geq 1$. Notice the second value has negative denominator, and the numerator must be positive. Thus, $\delta_i^{P.D_i}=0$.
\end{itemize}

Corollary \ref{corollary-u-d-non-decreasing} establishes that in any case where there exists some $j:j\neq i$, $\delta_j^{P.D_i}=1$. The two bullet points above establish that in these cases, $\delta_i^{P.D_i}=0$. Thus we've proven the proposition.
\end{proof}

\subsubsection{Vertical Monopoly (Utilitarian) Solution}

\begin{proposition}[Vertical Monopoly Solution]
\label{prop:vertical-monopoly-multi-multi}
   In the fine-tuning game with multiple specialists and multiple bargaining parameters, the Vertical Monopoly Bargaining Solution to the multi-specialist, multi-bargain fine-tuning game with quadratic costs is is given by Algorithm \ref{alg:VM-solution-multi-multi}.

\begin{algorithm}
\caption{Vertical Monopoly Bargaining Solution}\label{alg:VM-solution-multi-multi}
\begin{algorithmic}
\State \textbf{Given: } Generalist cost $c_0$, and list of specialist costs $\vec{c}$ in descending order of magnitude.
\State $i \leftarrow 1$
\While{$i\leq n$ and $c_i(n-i) \geq c_0 + \sum_{j=i}^nc_j$} \Comment{Identify which solutions are on the constraint}
    \State $i\leftarrow i + 1$ 
\EndWhile 

\State $C \in \mathbb{R}^{n\times n} \leftarrow \text{diag}(\vec{c} )$

\State
$Q \in \mathbb{R}^{n\times n} \leftarrow \frac{-1}{2c_0}\left[\vec1 \cdot \vec1^T\right]- \frac{1}{2} C^{-1}$ 
\Comment{The quadratic term in the objective, $U_G + \sum_i U_{D_i}$}
\State $l \in \mathbb{R}^{n}  \leftarrow \frac{n}{2c_0}\vec{1}$ \Comment{The linear term in the objective, $U_G + \sum_i U_{D_i}$}
\State $E \in \mathbb{R}^{n\times i}  \leftarrow \left[\begin{array}{cc}
    I^{i \times i} & 0^{(n-i)\times i}
\end{array}\right]$

\State $S \in \mathbb{R}^{i\times i} \leftarrow$ the $i\times i$ upper left sub-matrix of $-Q^{-1}$ \Comment{\small{Schur complement of KKT matrix }$\left[\begin{array}{cc}
Q & E^T\\
E & 0
\end{array}
\right]$}

\State \textbf{Return}: $$\vec\delta^* = \left(Q^{-1} + Q^{-1} E^T S^{-1} E Q^{-1} \right)(-l) - Q^{-1} E^T S^{-1} \vec{1}$$

\end{algorithmic}
\end{algorithm}
\end{proposition}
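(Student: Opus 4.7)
Proof proposal: The plan is to exploit the observation that, in contrast with the Powerful-$G$ analysis (Proposition~\ref{prop:multi-multi-powerful-g}), the joint objective here is \emph{globally strictly concave}, so the KKT conditions alone pin down a unique optimum; the only remaining work is to identify which of the upper-bound constraints $\delta_i\le 1$ are active, and then solve the resulting equality-constrained concave QP in closed form.

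First I would compute $U_G+\sum_i U_{D_i}$ explicitly. Using Corollary~\ref{cor:multi-multi-utilities-solved} with $k_0=k_i=2$ and writing $S=\sum_j\delta_j$, the cross terms $\tfrac{1}{2c_0}S(1-\delta_i)$ in $\sum_i U_{D_i}$ combine with $U_G$'s $\tfrac{S^2}{4c_0}$ term, and the specialist contributions $\tfrac{\delta_i(1-\delta_i)}{2c_i}+\tfrac{(1-\delta_i)^2}{4c_i}$ collapse to $\tfrac{1-\delta_i^2}{4c_i}$, yielding
\[
U_G+\sum_i U_{D_i} \;=\; \frac{2nS-S^2}{4c_0} \;+\; \sum_i \frac{1-\delta_i^2}{4c_i}.
\]
Reading off the quadratic and linear parts matches $Q=-\tfrac{1}{2c_0}\vec1\vec1^{T}-\tfrac12 C^{-1}$ and $l=\tfrac{n}{2c_0}\vec1$ from the algorithm. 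Because $Q$ is the sum of a negative semidefinite rank-one matrix and a negative definite diagonal matrix, it is strictly negative definite, so the objective has a unique maximizer on $[0,1]^n$.

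Second I would rule out the lower constraints. The partial derivative of the objective with respect to $\delta_i$ at $\delta_i=0$ equals $(n-S)/(2c_0)\ge 0$ since $\vec\delta\in[0,1]^n$ forces $S\le n$. Hence no $\delta_i\ge 0$ constraint is strictly active at the optimum, and it suffices to characterize the active set $\mathcal{A}\subseteq\{1,\dots,n\}$ of indices with $\delta_i^{VM}=1$. Third, and this is the main technical step, I would justify the greedy boundary-detection loop. For any candidate $\mathcal{A}$, stationarity on the remaining coordinates gives
\[
\frac{n-S}{2c_0}-\frac{\delta_j}{2c_j}=0 \;\Longrightarrow\; \delta_j=\frac{c_j(n-S)}{c_0},\quad j\notin\mathcal{A}.
\]
Summing over $j\notin\mathcal{A}$ and using $S=|\mathcal{A}|+\sum_{j\notin\mathcal{A}}\delta_j$ yields a closed-form expression $n-S=c_0(n-|\mathcal{A}|)/(c_0+\sum_{j\notin\mathcal{A}}c_j)$, so on the free coordinates $\delta_j$ is \emph{proportional to $c_j$}. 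By a short monotonicity argument this implies that the $i$'th largest cost has the $i$'th largest candidate $\delta_j$, so if any free $\delta_j$ violates the cap, the violator of largest cost must be moved into $\mathcal{A}$; equivalently, $\mathcal{A}$ must be a prefix of the descending-cost ordering. Starting from $\mathcal{A}=\varnothing$ and growing it by one index at a time is exactly the while loop; the stopping criterion $c_i(n-i)\ge c_0+\sum_{j=i}^n c_j$ is (up to a shift in the index convention used for $E$) the algebraic rewrite of $\delta_i\ge 1$ in the current sub-problem.

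Finally, once $\mathcal{A}=\{1,\dots,i^\ast\}$ is fixed, what remains is the equality-constrained concave QP $\max \tfrac12\vec\delta^{T} Q\vec\delta+l^{T}\vec\delta$ subject to $E\vec\delta=\vec1$. The KKT system
\[
\begin{bmatrix} Q & E^{T}\\ E & 0\end{bmatrix}\begin{bmatrix}\vec\delta\\ \vec\lambda\end{bmatrix}=\begin{bmatrix}-l\\ \vec1\end{bmatrix}
\]
can be solved by block elimination using the Schur complement $S=-EQ^{-1}E^{T}$ (the upper-left $i\times i$ block of $-Q^{-1}$, since $E$ selects the first $i$ coordinates), which gives exactly the formula $\vec\delta^{\ast}=(Q^{-1}+Q^{-1}E^{T}S^{-1}EQ^{-1})(-l)-Q^{-1}E^{T}S^{-1}\vec1$ returned by the algorithm. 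The main obstacle I anticipate is the third step: verifying that the greedy prefix rule is not merely necessary but sufficient to identify the active set. This relies essentially on the rank-one-plus-diagonal structure of $Q$, which makes the unconstrained components scale linearly in $c_j$ and ensures that ``fix the largest cost first'' is globally optimal rather than only locally improving.
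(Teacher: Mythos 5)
Your proposal is correct and follows essentially the same route as the paper: show the joint objective is a concave quadratic, identify the active set of $\delta_i=1$ constraints greedily in descending cost order, and solve the resulting equality-constrained QP via the KKT system and Schur complement. The paper proves concavity by diagonal dominance of $-H$ where you use the cleaner ``negative-definite diagonal plus negative-semidefinite rank-one'' decomposition; otherwise the concavity step is identical. Where you genuinely add value is the active-set step: the paper's proof simply asserts that ``the remainder consists in identifying the boundary cases \dots exactly as demonstrated in the Powerful-$G$ solution'' without re-deriving the threshold for the VM objective, whereas you actually carry out the stationarity computation on the free coordinates, obtaining $\delta_j = c_j(n-S)/c_0$ and hence $\delta_j = c_j(n-a)/(c_0+\sum_{j\notin\mathcal{A}}c_j)$, which both justifies the prefix structure of the active set and produces the explicit cap condition. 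Notably, your derivation gives the threshold $c_i(n-i+1)\ge c_0+\sum_{j=i}^{n}c_j$, not the $c_i(n-i)\ge c_0+\sum_{j=i}^{n}c_j$ appearing in the algorithm's while-loop (for $n=2$, $i=1$ the paper's condition reads $0\ge c_0+c_2$ and can never fire); your hedge about an ``index shift'' is too charitable --- this looks like an off-by-one in the stated algorithm that the paper's own proof, by deferring this step, never checks. You should also tighten the prefix/greedy argument slightly (once a coordinate is pinned at $1$, re-solving only increases the remaining free candidates, so no pinned coordinate ever wants to come back off the boundary), but that is the same level of detail the paper's Powerful-$G$ lemma supplies.
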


\begin{proof}{Proof of Proposition \ref{prop:vertical-monopoly-multi-multi}.}
The solution to the vertical monopoly (VM) bargaining solution is given by the following optimization problem:

\begin{eqnarray}
\label{eq:opt-convex}
\arg\max_{\Vec{\delta}} & U_G + \sum_i U_{D_i} \\
\text{s.t.} & \delta_j \in [0,1]\ \forall j \in \{1,...,n\} \nonumber 
\end{eqnarray}

The proof proceeds almost identically to the Powerful-G bargaining solution, however, this utilitarian solution is markedly `easier' because the objective is \textit{always} concave maximization, so we do not need to do the work to reduce the problem to one that enjoys convexity properties. Here, we show the objective is concave. 
Given this condition, solving the (convex) quadratic program proceeds identically.

\begin{lemma}
\label{lemma:VM-objective-concave-quadratic-multi-multi}
The objective $U_G + \sum_i U_{D_i}$ is a concave quadratic function of $\vec\delta$.
\end{lemma}
\begin{proof}{Proof of Lemma \ref{lemma:VM-objective-concave-quadratic-multi-multi}.}
    We show the quadratic term of the objective is negative definite, which implies the Hessian is negative definite. 

\begin{eqnarray*}
    U_G + \sum_i U_{D_i} = & \frac{1}{4c_0}\vec\delta^T [\vec{1} \vec{1}^T]\vec\delta + \frac{1}{2}\vec{1}^T C^{-1}\vec\delta - \frac{1}{2}\vec{\delta}^T C^{-1}\vec{\delta} + \sum_i \left(\frac{1}{2c_0}\right)\left(\sum_j \delta_j\right)\left(1-\delta_i\right) + \frac{1}{4c_i}\left(1-\delta_i\right)^2\\
    = & \vec\delta^T\left[-\frac{1}{4}C^{-1} - \frac{1}{4c_0}[\vec{1}\vec{1}^T]\right]\vec\delta + \vec{1}^T\left[\frac{n}{2c_0}\right]\vec\delta + \vec{1}^T\left[\frac{1}{4}C^{-1}\right]\vec{1}
\end{eqnarray*}
The Hessian is $\frac{1}{2}$ the quadratic matrix term: $H= [-\frac{1}{2}C^{-1} - \frac{1}{2c_0}[\vec{1}\vec{1}^T]]$. The full specification of each entry in the Hessian matrix is given by:
$$H_{ii}=-\frac{1}{2c_0}-\frac{1}{2c_i}\ ;\ H_{ij}=-\frac{1}{2c_0}.$$
By identical logic used in Lemma \ref{lemma:powerful-g-multi-multi-concave}, notice that $-H$ is a diagonal-dominant Hermitian matrix with real and positive diagonals, which must be positive definite. Thus $H$ is negative definite. 
\end{proof}

    


The remainder consists in identifying the boundary cases for which $\delta_i^{VM}=1$ and then solving the equality-constrained, convex optimization exactly as demonstrated in the Powerful-G solution.
\end{proof}

\subsection{Theorem on the Three Specialist Regimes}
\label{app:regimes-proof}

\begin{proof}{Proof of Theorem \ref{thm:specialist-regimes}.}
    We prove this theorem in a sequence of Lemmas. The proof follows for any given specialist $D_i$ and revenue-sharing parameter $\delta$.

    \begin{lemma}
    \label{lemma:regimes1}
        If fixed costs are under control, meaning $r_i(\alpha_0)>\frac{1}{1-\delta}\phi_i(\alpha_0)$, then $D_i$ will not abstain -- instead, $D_i$ would always prefer to free-ride.
    \end{lemma}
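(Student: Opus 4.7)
The plan is to prove the lemma by direct comparison of the two alternative strategies (free-riding and abstaining) rather than by optimizing over all $\alpha_i$. Since the claim only says $D_i$ strictly prefers free-riding to abstaining under the FCUC condition, it suffices to compare these two utilities directly; whether free-riding is in fact the \emph{globally} optimal strategy (vs.\ contributing) is irrelevant for ruling out abstention.

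First, I would write down the specialist's utility at the free-riding strategy $\alpha_i = \alpha_0$, which by definition equals
\[
U_{D_i}^{\text{FR}} := U_{D_i}(\alpha_i = \alpha_0) = (1-\delta_i)\,r_i(\alpha_0) - \phi_i(\alpha_0; \alpha_0).
\]
Next, I would recall the disagreement payoff: by the standing assumption $d_i = 0$, abstaining yields $U_{D_i}^{\text{Abs}} = 0$.

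The key step is to use the FCUC hypothesis $r_i(\alpha_0) > \tfrac{1}{1-\delta_i}\phi_i(\alpha_0; \alpha_0)$ and multiply both sides by the strictly positive factor $(1-\delta_i)$ (note $\delta_i \in [0,1)$ is implicit here, since $\delta_i = 1$ would make the FCUC condition vacuous/undefined). This gives
\[
(1-\delta_i)\,r_i(\alpha_0) > \phi_i(\alpha_0; \alpha_0),
\]
and rearranging yields exactly $U_{D_i}^{\text{FR}} > 0 = U_{D_i}^{\text{Abs}}$. Hence free-riding strictly dominates abstaining, so $D_i$ would never choose to abstain whenever the free-riding option is available, proving the lemma.

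The only subtlety worth flagging is that the statement of the theorem writes $\phi_i(\alpha_0)$ as shorthand for $\phi_i(\alpha_0;\alpha_0)$, which is the fixed cost of entering the agreement without any additional investment (note this need not be zero in the full generality of Theorem~\ref{thm:specialist-regimes}, even though the standing model assumption sets $\phi_i(\alpha_0;\alpha_0) = 0$); the argument goes through either way. There is no real obstacle here since the lemma is essentially a one-line rearrangement, so my main concern is making the logical structure clear: we are not claiming free-riding is the specialist's \emph{best} response, only that it Pareto-dominates the abstention outside option, which rules out abstaining as the best response.
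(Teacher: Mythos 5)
Your proof is correct and takes essentially the same approach as the paper: evaluate $U_{D_i}$ at $\alpha_i=\alpha_0$, observe that the FCUC condition makes it strictly positive, and compare to the zero disagreement payoff. If anything you are slightly more careful than the paper, which writes $U_{D_i}\big\vert_{\alpha_i=\alpha_0}$ as $r_i(\alpha_0)-\frac{1}{1-\delta}\phi_i(\alpha_0)$ (off by the positive factor $\frac{1}{1-\delta}$ from the true utility $(1-\delta)r_i(\alpha_0)-\phi_i(\alpha_0;\alpha_0)$, which does not affect the sign), whereas you state the utility exactly and flag the $\delta_i<1$ caveat explicitly.
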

    
    \begin{proof}{Proof.} If $r_i(\alpha_0)>\frac{1}{1-\delta}\phi_i(\alpha_0)$, then $U_{D_i}\big\vert_{\alpha_i=\alpha_0} = r_i(\alpha_0)-\frac{1}{1-\delta}\phi_i(\alpha_0)$ is simply the RHS minus the LHS of the inequality. This means $U_{D_i}$ must be positive at $\alpha_i=\alpha_0$. Thus, as long as fixed costs are under control, the specialist prefers free-riding to abstaining.
    \end{proof}

    \begin{lemma}
    \label{lemma:regimes2}
        If fixed costs are not under control, meaning $r_i(\alpha_0)<\frac{1}{1-\delta}\phi_i(\alpha_0)$, then $D_i$ will not free-ride -- instead, $D_i$ would always prefer to abstain.
    \end{lemma}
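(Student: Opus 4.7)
The plan is to mirror the structure of Lemma \ref{lemma:regimes1} exactly, just with the inequality reversed. The key observation is that the free-riding utility $U_{D_i}\big\vert_{\alpha_i = \alpha_0}$ is, up to a positive scaling, precisely the gap between the two sides of the ``fixed costs under control'' inequality, so negating that inequality immediately drives the free-ride payoff below the disagreement payoff.

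Concretely, I would start from the hypothesis $r_i(\alpha_0) < \frac{1}{1-\delta}\phi_i(\alpha_0; \alpha_0)$. Since $\delta \in [0,1)$ (if $\delta = 1$ the specialist clearly prefers to abstain, so I may treat that as a trivial boundary case), the factor $(1-\delta)$ is strictly positive and the inequality can be multiplied through without flipping, yielding $(1-\delta) r_i(\alpha_0) < \phi_i(\alpha_0; \alpha_0)$. Rearranging gives
\begin{equation*}
U_{D_i}\big\vert_{\alpha_i = \alpha_0} \;=\; (1-\delta) r_i(\alpha_0) - \phi_i(\alpha_0; \alpha_0) \;<\; 0.
\end{equation*}
Thus the free-riding strategy yields strictly negative utility for $D_i$.

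By the disagreement convention stated in the model setup ($d_i = 0$ unless otherwise specified), abstaining yields utility $0$, which strictly exceeds the free-riding payoff just computed. Hence between the two options of free-riding and abstaining, $D_i$ strictly prefers to abstain. This is exactly what the lemma asserts; note the statement does not preclude contributing being preferable to abstaining in some sub-cases (the starred row in Table~\ref{tab:specialist-regimes}), only that free-riding is never the chosen strategy once fixed costs exceed the scaled revenue at $\alpha_i = \alpha_0$.

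I do not foresee a genuine obstacle here: the entire content of the lemma is a one-line algebraic rearrangement combined with the disagreement payoff convention. The only subtle point to flag is the implicit assumption that the specialist's disagreement value is $d_i = 0$; if a generalized non-zero $d_i$ were allowed, the statement would have to be refined to $d_i \geq (1-\delta) r_i(\alpha_0) - \phi_i(\alpha_0; \alpha_0)$, but under the paper's standing convention this is automatic.
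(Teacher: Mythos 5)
Your proof is correct and takes essentially the same route as the paper's: both reduce the claim to observing that the free-riding utility $U_{D_i}\big\vert_{\alpha_i=\alpha_0}=(1-\delta)r_i(\alpha_0)-\phi_i(\alpha_0;\alpha_0)$ has the same sign as the gap in the ``fixed costs under control'' inequality, hence is negative and thus worse than the disagreement payoff of $0$. Your version is in fact slightly more careful than the paper's, which writes the utility with a spurious $\frac{1}{1-\delta}$ factor (harmless for the sign argument), and your explicit handling of the $\delta=1$ boundary and the remark about non-zero $d_i$ are fine additions.
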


     \begin{proof}{Proof.} If $r_i(\alpha_0)<\frac{1}{1-\delta}\phi_i(\alpha_0)$, then $U_{D_i}\big\vert_{\alpha_i=\alpha_0} = r_i(\alpha_0)-\frac{1}{1-\delta}\phi_i(\alpha_0)$ is simply the RHS minus the LHS of the inequality. This means $U_{D_i}$ must be negative at $\alpha_i=\alpha_0$. Thus, as long as fixed costs are not under control, the specialist prefers abstaining to free-riding.
     \end{proof}

    \begin{lemma}
    \label{lemma:regimes3}
        If it is marginally profitable to invest in the technology, meaning $r_i'(\alpha_0) > \frac{1}{1-\delta}\phi_i'(\alpha_0)$, then $D_i$ will not free-ride -- instead, $D_i$ would always prefer to contribute.
    \end{lemma}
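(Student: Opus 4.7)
{Proof proposal for Lemma \ref{lemma:regimes3}.}
The plan is to show that at the free-riding point $\alpha_i = \alpha_0$, the specialist's utility $U_{D_i}$ has a strictly positive right derivative in $\alpha_i$ under the MPI hypothesis. Combined with the unimodality assumption carried over from Theorem \ref{thm:specialist-regimes}, this forces the optimal $\alpha_i^\ast$ to lie strictly above $\alpha_0$, which is exactly the statement that $D_i$ strictly prefers contributing to free-riding.

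First I would recall that, for a fixed generalist investment $\alpha_0$ and a fixed bargain $\delta$, the specialist's utility is
\[
U_{D_i}(\alpha_i) \;=\; (1-\delta)\, r_i(\alpha_i) \;-\; \phi_i(\alpha_i;\alpha_0),
\]
so that differentiating in $\alpha_i$ and evaluating at $\alpha_i = \alpha_0$ yields
\[
\frac{\partial U_{D_i}}{\partial \alpha_i}\bigg|_{\alpha_i = \alpha_0}
\;=\; (1-\delta)\, r_i'(\alpha_0) \;-\; \phi_i'(\alpha_0;\alpha_0).
\]
The MPI condition $r_i'(\alpha_0) > \tfrac{1}{1-\delta}\phi_i'(\alpha_0;\alpha_0)$, multiplied through by the positive quantity $(1-\delta)$, is exactly the statement that this derivative is strictly positive. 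Hence moving $\alpha_i$ infinitesimally above $\alpha_0$ strictly increases $U_{D_i}$.

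Next I would invoke unimodality of $U_{D_i}$ in $\alpha_i$, which is the hypothesis of Theorem \ref{thm:specialist-regimes}. A strictly unimodal function whose derivative is positive at a point cannot attain its maximum at or below that point, so the optimizer $\alpha_i^\ast$ must satisfy $\alpha_i^\ast > \alpha_0$. In particular, $U_{D_i}(\alpha_i^\ast) > U_{D_i}(\alpha_0)$, meaning that between the two strategies of free-riding ($\alpha_i = \alpha_0$) and contributing ($\alpha_i > \alpha_0$), the specialist strictly prefers the latter.

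The main obstacle here is essentially notational rather than mathematical: the statement of Lemma \ref{lemma:regimes3} only claims that $D_i$ prefers contributing over free-riding, not that contributing dominates abstaining. The argument above yields precisely that pairwise comparison without having to compare against the disagreement payoff; the contributor-versus-abstainer dichotomy in the third row of Table \ref{tab:specialist-regimes} (the starred case) is then a separate question, handled by checking whether $(1-\delta)r_i(\alpha_i)-\phi_i(\alpha_i;\alpha_0)$ actually crosses zero for some $\alpha_i > \alpha_0$, as noted in the discussion immediately following Theorem \ref{thm:specialist-regimes}. So the proof reduces to the short derivative calculation plus one appeal to unimodality, and care is only needed to state the conclusion in the pairwise form the Lemma asks for.
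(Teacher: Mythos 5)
Your proposal is correct and follows essentially the same route as the paper: compute $\partial U_{D_i}/\partial\alpha_i$ at $\alpha_i=\alpha_0$, observe that the MPI condition makes it strictly positive, and conclude the optimizer lies strictly above $\alpha_0$. If anything you are slightly more careful than the paper, which writes the derivative as $r_i'(\alpha_0)-\tfrac{1}{1-\delta}\phi_i'(\alpha_0)$ rather than the literal $(1-\delta)r_i'(\alpha_0)-\phi_i'(\alpha_0;\alpha_0)$ (the two differ by the positive factor $1-\delta$, so the sign argument is unaffected), and your explicit appeal to unimodality and your closing remark correctly scoping the claim to the pairwise contribute-versus-free-ride comparison are both consistent with how the paper assembles the lemmas into Theorem \ref{thm:specialist-regimes}.
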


    \begin{proof}{Proof.} If $r_i'(\alpha_0) > \frac{1}{1-\delta}\phi_i'(\alpha_0)$, then $\frac{\partial U_{D_i}}{\partial \alpha_i}\big\vert_{\alpha_i=\alpha_0} = r_i'(\alpha_0)-\frac{1}{1-\delta}\phi_i'(\alpha_0)$ is simply the RHS minus the LHS of the inequality.  This means $U_{D_i}$ is increasing at $\alpha_i=\alpha_0$. Thus, as long as it is marginally profitable to improve the technology, the specialist prefers contributing to free-riding.
    \end{proof}
    
    \begin{lemma}
    \label{lemma:regimes4}
        If it is marginally costly to invest in the technology, meaning $r_i'(\alpha_0) < \frac{1}{1-\delta}\phi_i'(\alpha_0)$, then $D_i$ will not contribute -- instead, $D_i$ would always prefer to free-ride.
    \end{lemma}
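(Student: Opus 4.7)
My plan is to mirror the argument used in Lemma \ref{lemma:regimes3} but with the reverse inequality, pushing the specialist toward free-riding rather than contributing. The structure has two main steps: first, show that $U_{D_i}$ is locally decreasing at the free-riding point $\alpha_i = \alpha_0$; second, upgrade this local statement to a global comparison over the entire feasible region $\alpha_i \geq \alpha_0$ using the hypothesis that $U_{D_i}$ is unimodal in $\alpha_i$.

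First, I would differentiate $U_{D_i}(\alpha_i) = (1-\delta)r_i(\alpha_i) - \phi_i(\alpha_i;\alpha_0)$ with respect to $\alpha_i$ and evaluate at the boundary point $\alpha_i = \alpha_0$, which gives $\left.\frac{\partial U_{D_i}}{\partial \alpha_i}\right|_{\alpha_i=\alpha_0} = (1-\delta)r_i'(\alpha_0) - \phi_i'(\alpha_0;\alpha_0)$. Since $\delta \in [0,1)$ (if $\delta = 1$ the lemma is trivial, as $D_i$ earns no revenue share), multiplying the hypothesis $r_i'(\alpha_0) < \frac{1}{1-\delta}\phi_i'(\alpha_0;\alpha_0)$ through by the positive quantity $1-\delta$ immediately yields that this derivative is strictly negative.

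Next, I would invoke the unimodality of $U_{D_i}$ in $\alpha_i$. By Definition \ref{def:strict-unimodal}, there exists a mode $m$ such that $U_{D_i}$ is strictly increasing for $\alpha_i \leq m$ and strictly decreasing for $\alpha_i \geq m$. Because the derivative is strictly negative at $\alpha_i = \alpha_0$, we must have $\alpha_0 \geq m$, so $U_{D_i}$ is strictly decreasing on $[\alpha_0, \infty)$. Consequently, for every contributing choice $\alpha_i > \alpha_0$ we have $U_{D_i}(\alpha_i) < U_{D_i}(\alpha_0)$, so free-riding strictly dominates contributing.

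I do not foresee a substantive obstacle here; the argument is essentially a one-line calculus step plus a single application of the unimodality assumption. The only subtlety worth flagging is scope: the lemma compares contributing to free-riding only, making no claim about whether $D_i$ might prefer abstaining to either; that comparison is already handled by Lemmas \ref{lemma:regimes1} and \ref{lemma:regimes2}, which together with Lemmas \ref{lemma:regimes3} and this one combine to give the full case analysis in Table \ref{tab:specialist-regimes}.
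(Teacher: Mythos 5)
Your proposal is correct and follows essentially the same route as the paper: compute $\left.\partial U_{D_i}/\partial \alpha_i\right|_{\alpha_i=\alpha_0}$, observe the hypothesis makes it negative, and conclude free-riding beats contributing. You are in fact slightly more careful than the paper's own proof, which drops the $(1-\delta)$ factor in the derivative (harmless for the sign) and leaves the appeal to unimodality implicit, whereas you spell out both.
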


    \begin{proof}{Proof.} If $r_i'(\alpha_0) < \frac{1}{1-\delta}\phi_i'(\alpha_0)$, then $\frac{\partial U_{D_i}}{\partial \alpha_i}\big\vert_{\alpha_i=\alpha_0} = r_i'(\alpha_0)-\frac{1}{1-\delta}\phi_i'(\alpha_0)$ is simply the RHS minus the LHS of the inequality.  This means $U_{D_i}$ is decreasing at $\alpha_i=\alpha_0$. Thus, as long as it is marginally costly to improve the technology, the specialist prefers free-riding to contributing.
    \end{proof}

    Taken together, we can conclude the following about combinations of conditions:
    \begin{itemize}
        \item Fixed costs under control, marginally profitable investment: A<F, F<C (Lemmas \ref{lemma:regimes1} and \ref{lemma:regimes3}). Thus the specialist would contribute.
        \item Fixed costs under control, marginally costly: A<F, C<F  (Lemmas \ref{lemma:regimes1} and \ref{lemma:regimes4}). Thus the specialist would free-ride.
        \item Fixed costs not under control, marginally profitable: F<A, F<C (Lemmas \ref{lemma:regimes2} and \ref{lemma:regimes3}). Thus the specialist would either abstain or contribute.
        \item Fixed costs not under control, marginally costly: F<A, C<F (Lemmas \ref{lemma:regimes2} and \ref{lemma:regimes4}). Thus the specialist would abstain.
    \end{itemize}
    Above, the short-hand notation `A,' `F,' and `C' refer to the strategies of abstaining, free-riding, and contributing, respectively. The optimal strategies follow from the two marginal conditions. This completes the proof.
\end{proof}
\section{Proofs for Deferred Lemmas}
\label{deferred-lemmas}

\subsection{Lemmas for Proving Proposition \ref{prop:multi-multi-powerful-g}}
\label{subsec:deferred-lemmas-multi-multi-powerful-G}

\xhdr{Proof of Lemma \ref{lemma:powerful-g-multiple-bargaining-individual-soln}}
\begin{proof}
        We solve for the powerful-G solution in domain $i$, $\delta_i^{P. G}$, for fixed values of all other bargains $\delta_{k}$.

        \begin{eqnarray*}
            & \delta_i^{P. G} = \arg\max_{\delta_i} U_G \\
            \Rightarrow & \frac{d U_G}{d \delta_i} =0 \\
            \Rightarrow & \frac{d}{\partial \delta_i}\left(\frac{1}{4c_0}\left(\sum_k \delta_k\right)^2 + \sum_k \frac{\delta_k(1-\delta_k)}{2c_k}\right) =0 \\
            \Rightarrow & \frac{1}{2c_0}\sum_k \delta_k+\frac{1-2\delta_i}{2c_i} = 0 \\
            \Rightarrow & \left(\frac{1}{2c_0}-\frac{1}{c_i}\right)\delta_i^* + \frac{1}{2c_0}\sum_{k:k\neq i} \delta_k + \frac{1}{2c_i} = 0 \\
            \Rightarrow & \delta_i^* = \frac{2c_0c_i}{2c_0-c_i} \left(\frac{1}{2c_0}\sum_{k:k\neq i} \delta_k + \frac{1}{2c_i}\right) \\
            \Rightarrow & \delta_i^* = \frac{1}{2c_0-c_i} \left({c_i}\sum_{k:k\neq i} \delta_k + {c_0}\right) \\
            \Rightarrow & \delta_i^* = \frac{1}{2-\frac{c_i}{c_0}} \left(\frac{c_i}{c_0}\sum_{k:k\neq i} \delta_k + 1\right)
        \end{eqnarray*}

        Notice the second derivative is $\frac{d^2 U_G}{d \delta_i^2} = \frac{1}{2c_0}-\frac{1}{c_i}$ which is positive as long as $c_i>2c_0$. Thus we have three possible cases:
        \begin{itemize}
            \item If $c_i>2c_0$, the critical point solved above represents a global minimum. For these values, the critical point is less than or equal to 0 because $\frac{1}{2c_0-c_i}<0$, so the optimum $\delta_i^{P. G}=1$.
            \item If $2c_0 \geq c_i > c_0$, then the critical point represents a maximum. The critical point $\delta_i^* = \frac{1}{2-\frac{c_i}{c_0}} \left(\frac{c_i}{c_0}\sum_{k:k\neq i} \delta_k + 1\right)$ is the product of two terms, both of which must be greater than or equal to 1. Thus, the powerful-G solution in this case is $\delta_i^{P. G}=1$.
            \item The remaining case is $c_i<c_0$. In this case (like the one above) the critical point $\delta_i^*$ represents a global maximum of a concave quadratic function. Within this condition, if the critical point $\delta_i^* \geq 1$, this would imply $\delta_i^{P. G}=1$. So, taking this inequality as a starting point: 
            \begin{eqnarray*}
                & \delta_i^* \geq 1 \\
                \Rightarrow & \frac{1}{2-\frac{c_i}{c_0}} \left(\frac{c_i}{c_0}\sum_{k:k\neq i} \delta_k + 1\right) \geq 1 \\
                \Rightarrow & \frac{c_i}{c_0}\sum_{k:k\neq i} \delta_k + 1 \geq 2-\frac{c_i}{c_0} \\
                \Rightarrow & \frac{c_i}{c_0}\sum_{k:k\neq i} \delta_k \geq 1-\frac{c_i}{c_0} \\
                \Rightarrow & \frac{c_i}{c_0}\left(\sum_{k:k\neq i}\delta_k +1\right) \geq 1 \\
                \Rightarrow & c_i \geq \frac{1}{\sum_{k:k\neq i}\delta_k +1} c_0
            \end{eqnarray*}
        \end{itemize}
    The above condition thus represents the cutoff above which the powerful-G solution is $1$, and below which the solution is given by the critical point. This completes the proof.
\end{proof}

\xhdr{Proof of Lemma \ref{lemma:powerful-G-non-decreasing-ci}}
\begin{proof}
    We prove this Lemma by separating the domain into two intervals: 
    
    \begin{itemize}
        \item First, in cases where $c_i \geq \frac{1}{\sum_{k:k\neq i}\delta_k + 1}c_0$, then $\delta_i^{P. G}=1$, which is non-decreasing.
        \item For all other values of $c_i$, namely $0 \leq c_i \leq \frac{1}{\sum_{k:k\neq i}\delta_k + 1}c_0$, we can complete the proof by showing $\frac{d \delta_i^{P. G}}{d c_i}\geq0$:
            $$\frac{d \delta_i^{P. G}}{d c_i} = \frac{d}{d c_i} \left(\frac{1}{2-\frac{c_i}{c_0}} \left(\frac{c_i}{c_0}\sum_{k:k\neq i} \delta_k + 1\right)\right) = \frac{\frac{1}{c_0}(2\sum_{k:k\neq i}\delta_k + 1)}{(2-\frac{c_i}{c_0})^2}$$
        The above quantity has a positive numerator in all cases. The denominator is positive for all values except $c_i=2c_0$, but this is outside the given interval: $c_i \leq \frac{1}{\sum_{k:k\neq i}\delta_k + 1}c_0 \leq c_0 < 2c_0$. 
    \end{itemize}
 \end{proof}

\xhdr{Proof of Lemma \ref{lemma:powerful-G-ci-at-least-half}}
\begin{proof}
        We've already established in Lemma \ref{lemma:powerful-G-non-decreasing-ci} that in any domain $i$, the powerful-G solution is a non-decreasing function of $c_i$. Thus, it suffices to show that $\lim_{c_i\rightarrow0}\delta_k^{P. G} = \lim_{c_i\rightarrow0} \frac{1}{2-\frac{c_i}{c_0}} \left(\frac{c_i}{c_0}\sum_{k:k\neq i} \delta_k + 1\right) = \frac{1}{2}(0+1)=\frac{1}{2}$.
\end{proof}

\xhdr{Proof of Lemma \ref{lemma-boundary-procedure-pg-multi-multi}}
\begin{proof}

For ease and without loss of generality, we assume the set of domains is in descending order of value $c_i$, and we'll consider each domain one at a time to determine whether the domain has $\delta_i=1$ using the condition derived in Lemma \ref{lemma:powerful-g-multiple-bargaining-individual-soln}. We start by analyzing the case where $i=n$, that is, we're at the \textit{last possible} domain, and for all others we'll derive a recursive relation to prove the Lemma.
\begin{itemize}
    \item Final state $i=n$. We're given $\delta_{i-1}=1$. So the condition in Lemma \ref{lemma:powerful-g-multiple-bargaining-individual-soln} evaluates as follows:
    \begin{eqnarray*}
        \delta_i=1 \iff &  \frac{c_i}{c_0} \geq \frac{1}{1+\sum_{k\neq i} \delta_k} \\
        \iff & \frac{c_i}{c_0} \geq \frac{1}{1+(n-1)(1)} \\
        \iff & \frac{c_i}{c_0} \geq \frac{1}{n} \\
        \iff & \frac{c_i}{c_0} \geq \frac{1}{n} \\
        \iff & \frac{c_i}{c_0} \geq \frac{2 - (0)}{2n} \\
        \iff & \frac{c_i}{c_0} \geq \frac{2 - \sum_{j>i}\frac{c_j}{c_n}}{n+i}, i=n \\ 
        \iff & c_i(n+i) \geq 2c_0- \sum_{j>i}c_j.
    \end{eqnarray*}
    We're left with the formula in our Lemma statement; note that there are no values $j:j>i$ so the summation is equal to 0.
    \item $i \in \{1,...,n-1\}$. Here there are two options: (1) $\delta_{i+1}=1$, in which case it is easy to show that $\delta_i=1$ because $\frac{c_i}{c_0}\geq \frac{c_{i+1}}{c_0} \geq \frac{1}{1+\sum_{k:k\neq i+1} \delta_k} \geq \frac{1}{1+\sum_{k:k\neq i} \delta_k}$. (2) $\delta_{i+1} < 1$, in which case we can plug in the formula in Lemma \ref{lemma:powerful-g-multiple-bargaining-individual-soln} and evaluate to define the condition without using any values $\delta$. 
     \begin{eqnarray*}
        \delta_i=1 \iff &  \frac{c_i}{c_0} \geq \frac{1}{1+\sum_{k\neq i} \delta_k} \\
         \iff & \frac{c_i}{c_0} \geq \frac{1}{i + \sum_{j>i} \delta_j} \\
    \end{eqnarray*}
    Notice that, even though the summation in the denominator looks nasty, thinking recursively allows us to simplify:
    \begin{eqnarray*}
    &  \delta_j = \frac{1}{2-\frac{c_j}{c_0}}\left(\frac{c_j}{c_0}\sum_{k:k\neq j}\delta_k + 1\right) \\
    \iff & \delta_j =  \frac{1}{2-\frac{c_j}{c_0}} +  \frac{\frac{c_j}{c_0}}{2-\frac{c_j}{c_0}}\left(i + \sum_{k=n+1}^n \delta_k - \delta_j\right) \\
    \iff & \left(1+\frac{\frac{c_j}{c_0}}{2-\frac{c_j}{c_0}}\right)\delta_j = \frac{1}{2-\frac{c_j}{c_0}} +  \frac{\frac{c_j}{c_0}}{2-\frac{c_j}{c_0}}\left(i + \sum_{k=n+1}^n \delta_k\right) \\
    \iff & \delta_j = \left(1-\frac{c_j}{2c_0}\right)\left(\frac{1}{2-\frac{c_j}{c_0}} +  \frac{\frac{c_j}{c_0}}{2-\frac{c_j}{c_0}}\left(i + \sum_{k=n+1}^n \delta_k\right)\right) \\
    \iff & \delta_j = \frac{1}{2} + \frac{c_j}{2c_0}\left(i + \sum_{k=n+1}^n \delta_k\right) \\
    \end{eqnarray*}
    Taking the sum, we evaluate another recursive relation:
    \begin{eqnarray*}
    & \sum_{j=i+1}^n\delta_j = \sum_{j=i+1}^n \left(\frac{1}{2} + \frac{c_j}{2c_0}\left(i + \sum_{k=n+1}^n \delta_k\right)\right) \\ 
    \iff & \sum_{j=i+1}^n\delta_j = \sum_{j=i+1}^n \frac{1}{2} + \sum_{j=i+1}^n \frac{c_j}{2c_0}\left(i + \sum_{k=n+1}^n \delta_k\right) \\ 
    \iff & \left(1 - \sum_{j=i+1}^n \frac{c_j}{2c_0}\right)\sum_{j=i+1}^n\delta_j = \frac{n-i}{2} + i \sum_{j=i+1}^n \frac{c_j}{2c_0} \\ 
    \iff & \sum_{j=i+1}^n\delta_j = \frac{n-i+i\sum_{j=i+1}^n \frac{c_j}{c_0}}{2-\sum_{j=i+1}^n \frac{c_j}{c_0}}
    \end{eqnarray*}
    Finally, we're in a position to plug into the original condition provided above for determining whether $\delta_i=1$:
    \begin{eqnarray*}
    \delta_i=1 \iff & \frac{c_i}{c_0} \geq \frac{1}{i + \sum_{j>i} \delta_j} \\
    \iff & \frac{c_i}{c_0} \geq \frac{1}{i + \frac{n-i+i\sum_{j=i+1}^n \frac{c_j}{c_0}}{2-\sum_{j=i+1}^n \frac{c_j}{c_0}}} \\
    \iff & \frac{c_i}{c_0} \geq  \frac{2-\sum_{j=i+1}^n \frac{c_j}{c_0}}{2i-i\sum_{j=i+1}^n \frac{c_j}{c_0} + n-i+i\sum_{j=i+1}^n \frac{c_j}{c_0}}
    \\
    \iff & \frac{c_i}{c_0} \geq  \frac{2-\sum_{j=i+1}^n \frac{c_j}{c_0}}{i+n}
    \\
    \iff & c_i (i+n) \geq 2c_0-\sum_{j=i+1}^n c_j
    \end{eqnarray*}
    \end{itemize}
    Notice we're left with the same closed-form expression in both cases. This concludes the Lemma's proof.
\end{proof}

\xhdr{Proof of Lemma \ref{lemma:powerful-g-multi-multi-concave}}
\begin{proof}
    We show the Hessian Matrix
$$
H := \left[
\begin{array}{ccc}
    \frac{\partial^2 U_G}{{\partial \delta_{i^* + 1}}^2} & \cdots & \frac{\partial^2 U_G}{\partial \delta_{n}\partial \delta_{i^* + 1}}  \\
    \vdots & \ddots & \vdots \\
    \frac{\partial^2 U_G}{\partial \delta_{i^* + 1}\partial \delta_{n}} & \cdots & \frac{\partial^2 U_G}{{\partial \delta_{n}}^2} \\
\end{array}
\right]
$$
is negative definite, meaning that the solutions representing the objective's critical points represent global maxima. (In our case these are attainable by solving a system of linear equations.)

The Hessian is solved below:
$$ H_{jj} = \frac{\partial^2 U_G}{{\partial \delta_{j}}^2} = \frac{1}{2c_0}-\frac{1}{c_j} \ ;\ H_{jk}\frac{\partial^2 U_G}{\partial \delta_{j}\partial \delta_{k}} = \frac{1}{2c_0} \text{for all } k \neq j.$$

Notice the Hessian is symmetric (by definition), real-valued, and its diagonal entries are negatively diagonal-dominant. It is established that a Hermitian (in our case, symmetrical matrix) that is strictly diagonal-dominant (meaning diagonal entries are greater in magnitude than the sum of non-diagonal members of the same row) with real positive diagonal entries is positive definite. It follows that our matrix is negative definite so long as $-H$ is strictly diagonal-dominant with real positive diagonal entries, that is: 
\begin{eqnarray*}
    -H \text{ is P.D.} \iff & \left|-H_{jj}\right|>\sum_{k \neq j} -H_{jk} \text{ and } -H_{jj}>0 \ \forall j \in \{i^*+1,...,n\}\\
    \iff & \frac{1}{c_j}-\frac{1}{2c_0} > \sum_{k \neq j} \frac{1}{2c_0} \text{ and } \frac{1}{c_j}-\frac{1}{2c_0}>0 \ \forall j  \\
    \iff & {c_j} < \frac{2c_0}{n} \text{ and } c_j<2c_0 \ \forall j  \\
    \iff & {c_j} < \frac{2c_0}{n} \ \forall j  \\
    \iff & {c_{i^*+1}} < \frac{2c_0}{n}. \\
\end{eqnarray*}
Given that the costs $\vec{c}$ are given in descending order, it suffices to prove this inequality for $j=i^* + 1$, the unsolved domain with the greatest cost parameter. Recall that we arrived at a value for $i^*$ through a procedure, justified in Lemma \ref{lemma-boundary-procedure-pg-multi-multi}, where we establish that $c_{i^*}(i^*+n)\geq 2c_0 - \sum_{j>i^*}c_j$ whereas $c_{i^* + 1}(i^*+1+n)< 2c_0 - \sum_{j>i^*+1}c_j$. We only need this final inequality to establish $H$ is negative definite:
\begin{eqnarray*}
& c_{i^* + 1}(i^*+1+n)< 2c_0 - \sum_{j>i^*+1}c_j \\ 
\Rightarrow & c_{i^* + 1} < \frac{2c_0-\sum_{j>i^*+1}c_j}{n+i^* + 1} \\
\Rightarrow & c_{i^* + 1} < \frac{2c_0}{n+i^*+1}-\frac{\sum_{j>i^*+1}c_j}{n+i^* + 1} \geq \frac{2c_0}{n+i^* + 1} \geq \frac{2c_0}{n}. 
\end{eqnarray*}
\end{proof}

\subsection{Lemmas for Proving Proposition \ref{prop:multi-multi-powerful-d}}
\label{subsec:deferred-lemmas-multi-multi-powerful-D}

\xhdr{Proof of Lemma \ref{lemma-u-d-non-decreasing}}
\begin{proof}
    It suffices to show that for any given bargain $k \neq i$, the partial derivative of the utility of $U_{D_i}$ with respect to $\delta_k$ is non-negative for all feasible values of $\delta_k$. Our job is made easier because $U_{D_i}$ is linear as a function of $\delta_k$, so the partial derivative is a constant value:
    $$\frac{\partial}{\partial \delta_k}U_{D_i}=\frac{\partial}{\partial \delta_k}\left[\frac{1-\delta_i}{2c_0}\sum_j\delta_j + \frac{(1-\delta_i)^2}{4c_i}\right] = \frac{1-\delta_i}{2c_0} \geq 0$$
    The last step follows from given requirements that $c_0>0$ and $\delta_i\leq 1$.
\end{proof}

\xhdr{Proof of Lemma \ref{powerful-d-multi-quadratic-fixed-j}}
\begin{proof}
$\delta_i^{P.D_i}$ is the solution that maximizes $D_i$'s utility. Thus, below we find the critical points and perform a second-derivative test to ensure that our candidate solution indeed represents a global optimum. 
\begin{eqnarray*}
& \delta^{P. G}_i := \arg\max_{\delta_i} U_{D_i} \\
\Rightarrow & \frac{\partial U_{D_i}}{\partial \delta_i} = 0 \\
\Rightarrow & \frac{\partial}{\partial \delta_i} \left[\frac{1-\delta_i}{2c_0}\sum_j\delta_j + \frac{(1-\delta_i)^2}{4c_i}\right] = 0 \\
\Rightarrow & \frac{1}{2c_0} - \frac{1}{c_0}\delta_i - \frac{1}{2c_0}\sum_{j:j\neq i}\delta_j-\frac{(1-\delta_i)}{2c_i} = 0 \\
\Rightarrow & \left(\frac{1}{2c_i}-\frac{1}{c_0}\right)\delta_i^* = -\frac{1}{2c_0} +\frac{1}{2c_0}\sum_{j:j\neq i}\delta_j +\frac{1}{2c_i} \\
\Rightarrow & \delta_i^* = \frac{c_0 + c_i\sum_{j:j\neq i}\delta_j - c_i}{c_0-2c_i}.
\end{eqnarray*}

The second derivative $\frac{\partial^2 U_{D_i}}{\partial \delta_i^2} = -\frac{1}{c_0}+\frac{1}{2c_i}$ is negative when $c_0 < 2c_i$. 

Thus we have two cases:
\begin{itemize}
    \item If $c_0 < 2c_i$, the critical point above represents a global maximum of a concave quadratic function, meaning it represents the solution if it is feasible, and otherwise the nearest boundary of the feasible set $\delta_i \in \{0,1\}$ represents the solution. The value of the critical point is less than 1 in all cases so $\delta_i^{P.D_i}= \max \left[0,  \frac{c_0-c_i\left(1-\sum_{j:j\neq i}\delta_j\right)}{c_0-2c_i}\right]$.
    \item Else $c_0 \geq 2c_i$. In this case, the critical point above is greater than 1 and represents a global minimum of a concave quadratic function, meaning the solution in this case must be the feasible point at greatest distance from the critical point, which is $\delta_i^{P.D_i}=0$. 
\end{itemize}
\end{proof}

\end{document}